\crefname{section}{§}{§§}
\def\ci{\perp\!\!\!\perp}
\def\dperp{\perp\!\!\!\perp}
\tikzstyle{box} = [draw, rectangle, rounded corners, thick, node distance=7em, text width=6em, text centered, minimum height=3.5em]
\tikzstyle{container} = [draw, rectangle, dashed, inner sep=2em]\tikzstyle{line} = [draw, thick, -latex']
\tikzstyle{box} = [draw, rectangle, rounded corners, thick, node distance=7em, text width=6em, text centered, minimum height=3.5em]
\tikzstyle{container} = [draw, rectangle, dashed, inner sep=2em]\tikzstyle{line} = [draw, thick, -latex']
\def\BER{\mathrm{BER}}
\def\Ber{\mathrm{Ber}}
\def\BP{\mathrm{BP}}
\def\MAP{\mathrm{MAP}}
\def\BEC{\mathrm{BEC}}
\def\BSC{\mathrm{BSC}}
\def\BMS{\mathrm{BMS}}
\def\TV{\mathrm{TV}}
\def\eqdef{\triangleq}
\def\supp{{\mathrm{supp}}}
\def\hrank{{\mathrm{hrank}}}
\def\rank{{\mathrm{rank}}}
\def\F{\mathbb{F}}
\def\ind{\mathbbm{1}}
\def\P{{\bf P}}
\def\PP{{\bf P}}
\def\EE{{\bf E}}
\theoremstyle{definition}
\newtheorem{definition}{Definition}
\newtheorem{example}{\bf Example}
\newtheorem{remark}{Remark}
\newtheorem{theorem}{\bf Theorem}
\newtheorem{lemma}{\bf Lemma}
\newtheorem{proposition}{\bf Proposition}
\newtheorem{conjecture}{\bf Conjecture}
\begin{document}
%
\title{Low density majority codes and \\
	the problem of graceful degradation}
%
%
%

\author{Hajir~Roozbehani\quad~\IEEEmembership{}
        Yury Polyanskiy~\IEEEmembership{}
 
\thanks{H.R. is with the Laboratory of Information and Decision Systems at MIT.
e-mail: hajir@mit.edu. Y.P. is with the Department of Electrical Engineering and Computer Science, MIT.
e-mail: yp@mit.edu. This work was supported by the Center for Science of Information (CSoI), an NSF Science and Technology Center, under grant agreement CCF-09-39370 and by the NSF grant CCF-17-17842.}}
\maketitle

\begin{abstract} 
We study a problem of constructing codes that transform a channel with high bit error rate (BER) into  one with low BER
 (at the expense of rate). Our focus is on obtaining codes with smooth (``graceful'') input-output BER curves (as
 opposed to threshold-like curves typical for long error-correcting codes).

This paper restricts attention to binary erasure channels (BEC) and contains three contributions. First, we introduce the notion of Low Density Majority Codes (LDMCs). These codes are non-linear sparse-graph codes, which output majority function evaluated on randomly chosen small subsets of the data bits. This is similar to
 Low Density Generator Matrix codes (LDGMs), except that the XOR function is replaced with the majority. We show that
 even with a few iterations of belief propagation (BP) the attained input-output curves provably improve upon
 performance of any linear systematic code. The effect of non-linearity bootstraping the initial iterations of BP,
 suggests that LDMCs should improve performance in various applications, where LDGMs have been used traditionally. 
 
 Second, we establish several \textit{two-point converse bounds} that lower bound the BER achievable at one erasure
 probability as a function of BER achieved at another one. The novel nature of our bounds is that they are specific
 to subclasses of codes (linear systematic and non-linear systematic) and outperform similar bounds implied by the area
 theorem for the EXIT function.

 Third, we propose a novel technique for rigorously
 bounding performance of BP-decoded sparse-graph codes (over arbitrary binary input-symmetric channels). This is based on an extension of Mrs.Gerber's lemma to the
 $\chi^2$-information and a new characterization of the extremality under the less noisy order.
\end{abstract}
\tableofcontents

\section{Introduction}

The study in this paper is largely motivated by the following common engineering problem. Suppose that after years of 
research a certain error-correcting code (encoder/decoder pair) was selected, patented, standardized and implemented in a highly optimized
hardware. The code's error performance will, of course, depend on the level of noise that the channel applies to the
transmitted codeword. Often, for good modern codes, a very small error is guaranteed provided only that the channel's
Shannon capacity slightly exceeds code's rate. However, in practice the channel conditions may vary and engineer may
find herself in a situation where the available hardware-based decoder is unable to handle the current noise level. The
natural solution, such as the one for example being applied in optical communication~\cite{zhang2017low,barakatain2018low}, is to first use an inner code to decrease
the overall noise to the level sufficient for the preselected code to recover information. Thus, the goal of the inner
code is to match the variable real-world channel conditions to a prescribed lower level noise. 

In other words, the source bits $S_1,\ldots,S_k \in \{0,1\}$ that appear at the input of the inner code need not be
reconstructed perfectly, but only approximately. We can distinguish two cases of the operation of the inner code. In
one, it passes upstream a hard-decision about each bit (that is, an estimate $\hat S_i \in \{0,1\}$ of the true value of
$S_i$). In such a case, a figure of merit is a curve relating the channel's noise to the probability of bit-flip error
of the induced channel $S_i \mapsto \hat S_i$ that the outer code is facing.\footnote{Note that a more complete
description would be that of the ``multi-letter'' induced channel $S^k \mapsto \hat S^k$. However, in this paper we
tacitly assume that performance of the outer code is unchanged if we replace the multi-letter channel with the parallel
independent channels $\{S_i \mapsto \hat S_i\}_{i\in [k]}$. This is justified, for example, if the outer code uses a
large interleaver (with lenght much larger than $k$), or employs an iterative (sparse-graph) decoder.} In the second
case, the inner code passes upstream a soft-decision $\hat S_i$ (that is, a posterior probability distribution of $S_i$
given the channel output). In this case, the figure of merit is a curve relating the channel's noise to the capacity of
the induced channel $S_i \mapsto \hat S_i$.

Information theoretically, constructing a good inner code, thus, is a problem of lossy joint-source channel coding
(JSCC) under either the Hamming loss or the log-loss, depending on the hard- or soft-decision decoding. The crucial
difference in this paper, compared to the classical JSCC problem, is that we are interested not in a loss achieved over a 
single channel, but rather in a whole curve of achieved losses over a family of channels. What types of curves are
desirable? First of all, we do not want the loss to drop to zero at some finite noise level (since this will then be an
overkill: the outer code has nothing to do). Second, it is desirable that the loss decrease with channel improvement,
rather than staying flat in a range of parameters (which would be achievable by a separated compress-then-code scheme).
These two suggest that the resulting curve should be a smooth and monotone one. With such a requirement the 
problem is known as \textit{graceful degradation} and has been attracting 
attention since the early days of channel coding~\cite{ziv1970behavior,red79}. Despite this, no widely accepted solution
is available. This paper's main purpose is to advocate the usage of sparse-graph \textit{non-linear} codes for the
problems of graceful degradation and channel matching.

  \begin{figure}[ht]
\centering
\includegraphics[width=0.5\textwidth]{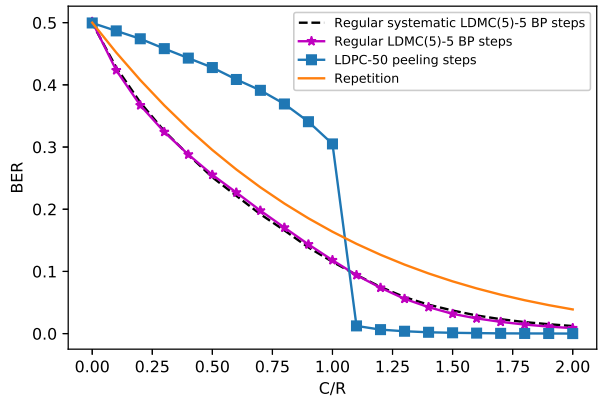}
\caption{BER performance vs. capacity-to-rate-ratio of the erasure channel for four codes with rate $R={1\over5}$ and $k=3\times 10^5$ data bits: an LDPC code using 50 iterations of peeling decoder, the repetition code, a regular (non-systematic) LDMC(5)
using 5 iterations of BP, and a regular systematic LDMC(5) using 5 iterations of BP. The LDPC code is the dual of a systematic LDGM(16) code, i.e., the variable nodes have degree 17 (and check nodes have approximately Poisson distributed degrees). The LDPC code suffers from the cliff effect, while the LDMCs and the repetition codes degrade
gracefully. The LDMC codes uniformly dominates the repetition code for all levels of the channel noise.}


\label{fig:ber_ldpc_rep}
\end{figure}

In this paper we will be considering a case of JSCC for a binary source and binary erasure channel (BEC) only.  Let
$S^k=(S_1,S_2,\cdots,S_k)\sim \mathrm{Ber}(1/2)^{\otimes k}$ be {\em information bits}. An {\em encoder}
$f:\{0,1\}^k\to\{0,1\}^n$ maps $S^k$ to a (possibly longer) sequence $X^n=(X_1,\cdots,X_n)$ where each $X_i$ is called a
{\em coded bit} and $X^n$ -- a codeword.  The rate of the code $f$ is denoted by $R=k/n$ and its bandwidth expansion by
$\rho=n/k$.  A channel $\mathrm{BEC}_{\epsilon}$ takes $X^n$ and produces $Y^n = (Y_1,\ldots, Y_n)$ where each $Y_j=X_j$
with probability $(1-\epsilon)$ or $Y_j=?$ otherwise. In this paper we will be interested in performance of the code
simultaneously for multiple values of $\epsilon$, and for this reason we denote $Y^n$ by $Y^n(\epsilon)$ to emphasize the
value of the erasure probability. Upon observing the distorted information $Y^n(\epsilon)$,  {\em decoder}\footnote{The
decoder may or may not use the knowledge of
    $\epsilon$, but for the BEC this is irrelevant.} $g$ produces $\hat S^k(\epsilon)=g(Y^n(\epsilon))$. We measure quality of the decoder by the data bit error rate (BER):
    $$ \mathrm{BER}_{f,g}(\epsilon) \triangleq \frac{1}{k} \sum_{i=1}^k \PP[S_i \neq \hat S_i(\epsilon)] = \frac{1}{k}
    \EE[d_H(S^k,\hat S^k(\epsilon))]\,,$$
    where $d_H$ stands for the Hamming distance. 

Consider now Fig.~\ref{fig:ber_ldpc_rep} which plots the BER functions for some codes. 
We can see that using LDPC code as an inner
code is rather undesirable: if the channel noise is below the BP threshold, then the BER is almost zero and hence the
outer code has nothing to do (i.e. its redundancy is being wasted). While if the channel noise is only slightly above
the BP threshold the BER sharply rises, this time making outer code's job impossible. In this sense, a simple
(blocklength 5) repetition code might be preferable. However, the \textit{low-density majority codes} (LDMCs) introduced
in this paper universally improve upon the repetition. This fact (universal domination) should be surprising for several
reasons. First, there is a famous area theorem in coding theory~\cite{ashikhmin2004extrinsic}, which seems to suggest
that BER curves for any two codes of the same rate should crossover\footnote{The caveat here is that area theorem talks
about about BER evaluated for coded bits, while here we are only interested in the data (or systematic) bits.}. Second,
a line of research in \textit{combinatorial JSCC}~\cite{kochman2012adversarial} has demonstrated that 
the repetition code is optimal~\cite{YP15-abmaps,roozbehani2018input} in a certain sense, that losely speaking
translates into a high-noise regime here. Third, Prop.~\ref{prop:linear_erasure} below shows that no linear code of rate 1/2 can
uniformly dominate the repetition code. 
Thus, the universal domination of the repetition code by the LDMC observed on 
 Fig.~\ref{fig:ber_ldpc_rep} 
 came as a surprise to us.

An experienced coding theorist, however, will object to Fig.~\ref{fig:ber_ldpc_rep} on the grounds that LDGMs, not
LDPCs, should be used for the problem of error-reduction -- this was in fact the approach
in~\cite{zhang2017low,barakatain2018low}. Indeed, the LDGM codes decoded with BP can have rather graceful BER
curves -- e.g. see Fig.~\ref{fig:LDGM_LDMC_p9}(a) below. So can LDMCs claim superior performance to LDGMs too? Yes, and
in fact in a certain sense we claim that LDMCs are superior to any linear systematic codes. This is the message of
Fig.~\ref{fig:two_user_converse_intro} and we explain the details next.

What kind of (asymptotic) fundamental limits can we define for this problem? Let us fix the rate $R={k\over n}$ of a
code. The lowest possible BER $\delta^*(R,W)$ achievable over a (memoryless) channel $W$ is found from comparing the source
rate-distortion function with the capacity $C(W)$  of the channel:
\begin{equation}\label{eq:CRD}
		R(1 - h_b(\delta^*(R,W))) = C(W)\,, \qquad h_b(x) = -x\log x - (1-x)\log(1-x)\,.
\end{equation}	
Below we call $\delta^*(R,W)$ a \textit{Shannon single-point bound}. Single-point here means that this is a fundamental limit
for communicating over a single fixed channel noise level. As we emphasized, graceful degradation is all about looking
at a multitude of noise levels. A curious lesson from the multi-user information theory shows
that it is not possible for a single code to be simultaneously optimal for two channels (for the BSC this was shown
in~\cite{kochman2018ozarow} and Prop.~\ref{prop:twopt_gen} shows it for the BEC). 

Correspondingly, we introduce a \textit{two-point fundamental limit}:
\begin{equation}\label{eq:delta2}
		\delta_2^*(R,\delta_a,W_a, W) = \limsup_{k\to\infty} \inf {1\over k} \EE_{Y^n\sim W(\cdot|X^n)}[d_H(S^k, \hat S^k)]\,,
\end{equation}	
where the infimum is over all encoders $f:S^k\to
X^n$ and decoders $g: Y^n \to \hat S^k$ satisfying
	$$ {1\over k} \EE_{Y^n\sim W_a(\cdot|X^n)}[d_H(S^k, \hat S^k)] \le \delta_a\,,$$
where $\delta_a, W_a$ is the anchor distortion and anchor channel. In other words, the value of $\delta_2^*$ shows the
lowest distortion achievable over the channel $W$ among codes that are already sufficiently good for a channel $W_a$.
Clearly, the two-point performance is related to a two-user broadcast channel~\cite[Chapter 5]{elg11} -- this is further discussed in Section~\ref{sec:prior} below.

Similarly, we can make definitions of $\delta^*$ and $\delta_2^*$ but for a restricted class of encoders, namely linear
systematic ones. We bound $\delta_2^*$ in Prop. \ref{prop:twopt_gen} for general codes, in Prop. \ref{prop:two_pt_EXIT_area} for general systematic codes, and  
 in Theorem~\ref{thm:two_point_converse} for the  subclass of linear systematic codes.

\begin{figure}[t]
\centering
\subfloat[][General codes]{\includegraphics[width=0.5\textwidth]{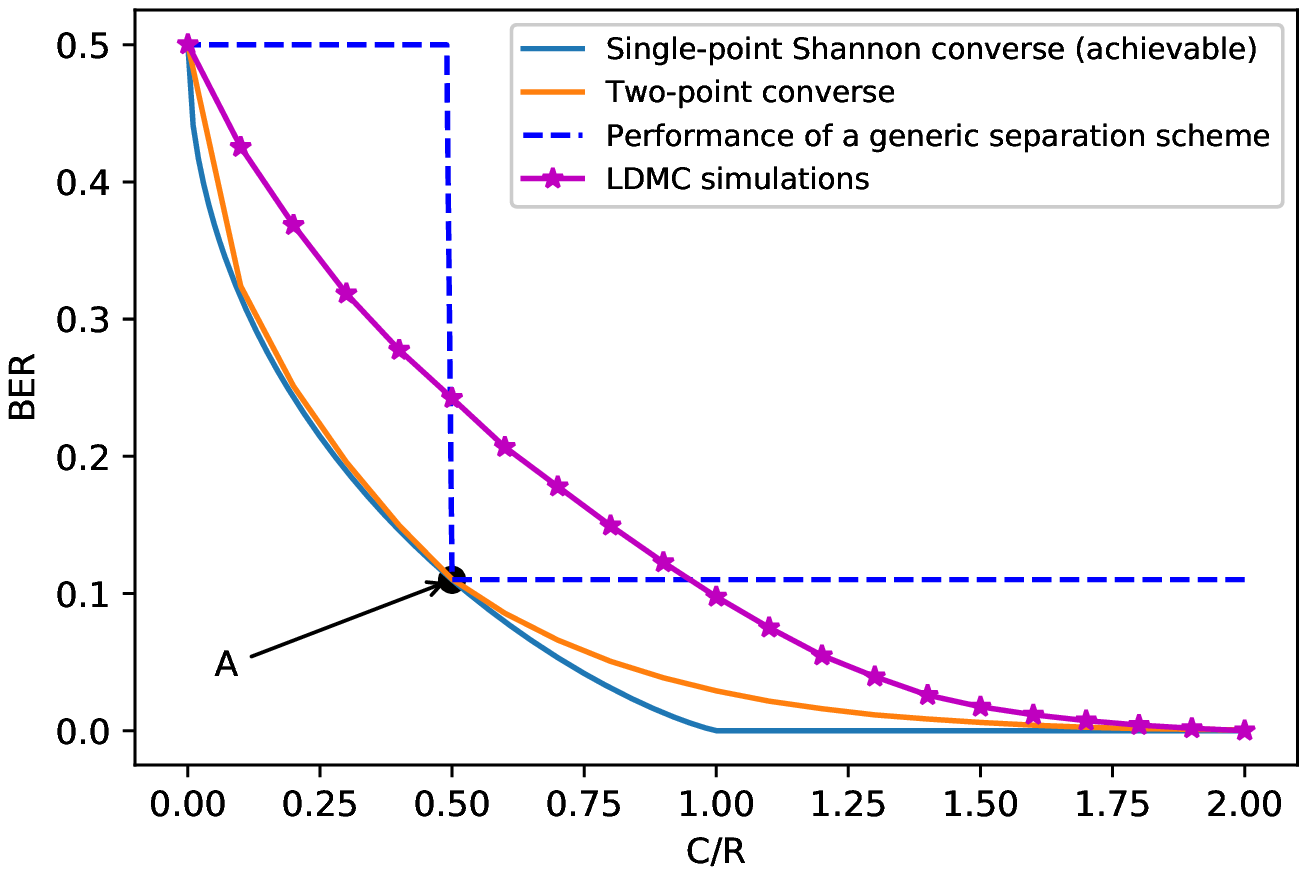}}
\subfloat[][Linear codes] { \includegraphics[width=0.5\textwidth]{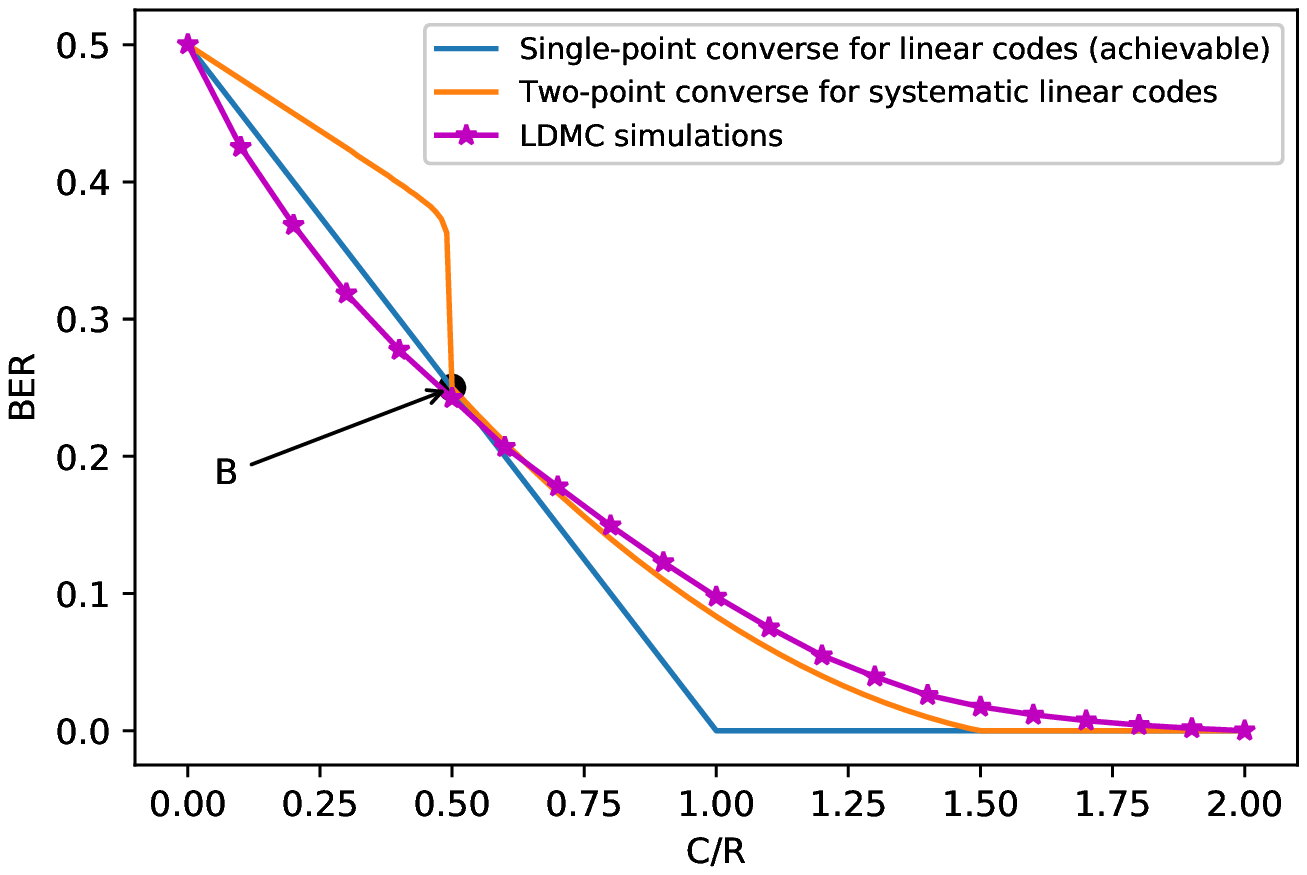}}
\caption{BER performance vs. capacity-to-noise ratio of the binary erasure channel for rate $R=1/2$ codes. The magenta curve
corresponds to simulation results of the systematic regular LDMC(9) with $k=10^5$ data bits using 3 iterations of BP. Left figure (a) compares LDMC against 
lower bounds for general codes: Shannon's single-point bound from~\eqref{eq:CRD} 
and a two-point from Prop.~\ref{prop:twopt_gen}. While every point A on the Shannon converse curve is achievable by a separation
compress-then-code architecture, the latter suffers from non-gracefulness (dashed blue curve). Right figure (b) compares
LDMC against lower bounds for linear codes. The single-point line is from Proposition \ref{prop:lin_single} below. While any point B arbitrarily close to the single-point line is
achievable by a linear systematic code, any such code will suffer from non-gracefulness. This is shown by the two-point
converse curve on the right. It plots evaluation of Theorem~\ref{thm:two_point_converse}, which lower bounds BER of any 
linear systematic code achieving $\BER\le 0.2501$ over $\BEC(0.75)$. In all, LDMCs are superior to any other linear
systematic code whose curve passes through point B.}
\label{fig:two_user_converse_intro}

\end{figure}

Armed with these definitions, we can return to Fig.~\ref{fig:two_user_converse_intro}. What it clearly demonstrates is
two things. On one hand, the single-point (left subplot) comparison shows that this specific LDMC is far from Shannon
optimality at any value of the channel noise. On the other hand, the two-point (right subplot) comparison shows LDMC outperforming 
any rate-$1\over 2$ linear systematic code among the class of those which have comparable performance at the anchor point $C(W_a)=1/4$.

We hope that this short discussion, along with Fig.~\ref{fig:ber_ldpc_rep}-\ref{fig:two_user_converse_intro}, convinces the reader that indeed the LDMCs (and in general adding non-linearity to
the encoding process) appear to be the step in the right direction for the problem of \textit{graceful degradation}. 
However, perhaps even more excitingly, performance of available LDPC/LDGM codes can be improved by adding some fraction
of the LDMC nodes -- the reasons for this are discussed in Section~\ref{sec:LDMC_construction} and~\ref{sec:comp_ldgm} below.

\textbf{Our main contributions:} 
\begin{itemize}
\item The concept of the LDMC and its favorable properties for the error-reduction and
channel matching. 
\item A new two-point lower bound for the class of systematic linear codes. (We also show that our bound is better than
anything obtainable from the area theorem.) It demonstrates that over the BEC linear codes coming close to the single-point
optimality can never be graceful -- their performance has a threshold-like behavior.
\item A two-point lower bound for general systematic codes based on a new connection between BER and EXIT functions. We show that the bound improves on the best known two-point converses for general codes in certain regimes of interest. The bound shows that at high rates general systematic codes achieving small error near capacity cannot be graceful.
\item A new method for (rigorously) bounding performance of sparse graph codes (linear and non-linear) based on the
less-noisy comparison and a new extension of the Mrs. Gerber's lemma to $\chi^2$-mutual information.
\item An idea of adding some fraction of non-linear (e.g. majority) factors into the
factor graph. It is observed that this improves early dynamics of the BP by yielding initial estimates for some
variable nodes with less redundancy than the systematic encoding would do. In particular, an LDGM can be uniformly
improved by replacing repetition code (degree-1 nodes) with LDMCs. 
\end{itemize}

\textbf{Paper organization.} 

The rest of this paper is organized as follows. We introduce the notion of LDMC codes and connections with the previous work in the remainder of this section. In Section \ref{sec:two_point} we give lower bounds for the two-point fundamental limit of linear systematic codes. These bounds improve significantly on the best known general converses \cite{kochman2019lower} in the case of linear codes. We use these bounds to show that LDMCs are superior to any linear code for error reduction in the two-point sense. We further use these bounds to show that, unlike LDMCs, linear codes of rate 1/2 cannot uniformly dominate the repetition code. Such bounds are naturally related to the area theorem of coding \cite{measson2004life}. In Section \ref{sec:bounds_area_thm}, we study the implications of the area theorem and show that our bounds are superior to those obtained from the area theorem. In Section \ref{sec:channel_comparison} we present our channel comparison lemmas and use them to construct new tools for analysis of BP dynamics in general sparse-graph codes. When applied to special classes of LDMCs, these bounds can accurately predict the performance. We study the applications of LDMCs to code optimization in Section \ref{sec:comp_ldgm}, where we show that combined LDGM/LDMC designs can uniformly dominate LDGMs both in sense of the performance as well as the rate of convergence. In Section \ref{sec:channel_transform}, we study LDMCs from the perspective of a channel transform and give bounds for soft-decision decoding using the channel comparison lemmas of Section \ref{sec:channel_transform}. These bounds give new ways to analyze BP and maybe of separate interest in inference problems outside of coding.

\subsection{The $\mathrm{LDMC}$ ensemble}
\label{sec:LDMC_construction}
We first define the notion of a check regular code ensemble generated by a Boolean function. 
\begin{definition}
Let $\P_\Delta$ be a joint distribution on $m$-subsets of $[k]$. Given a Boolean function $f:\{0,1\}^m\to\{0,1\}$, the (check regular) ensemble of codes on $\{0,1\}^k$ generated by $(f,\P_\Delta)$ is the family of random codes $f_\Delta:s\mapsto (f(s_T))_{T\in \Delta}$ obtained by sampling $\Delta\sim \P_\Delta$. 
 Here $s_T$ is the restriction of $s$ to the coordinates indexed by $T$. 
\end{definition}
Given $x\in\{0,1\}^d$, we consider the $d$-majority function 
\[
\textup{d-maj}(x)=\ind_{\{\sum_{i}x_i>\frac{d}{2}\}}.
\]
We have the following definition:
\begin{definition}
Let ${\bf U}_\Delta=\textup{Unif}^{\otimes n}(\{\textup{d-subsets of } [k]\})$ be the uniform product distribution on the $d$-subsets of $[k]$. 
The ensemble of codes generated by $(\textup{d-maj},{\bf U}_\Delta)$ is called the Low Density Majority Code (LDMC) ensemble of degree $d$ and denoted by LDMC($d$). Furthermore, define the event $A:=\cup_{ij}\{\sum_{T\in \Delta} \ind_{\{i\in T\}}=\sum_{T\in \Delta} \ind_{\{j\in T\}}\}$, i.e., the event that each $i$ appears in the same number of d-subsets $T$. Then the ensemble generated by $(\textup{d-maj},{\bf U}_{\Delta|A})$ is called a regular LDMC($d$) ensemble. 
\end{definition}

 Note that LDMC(1) is the repetition code. We shall also speak of systematic LDMCs, which are codes of the form $s\mapsto (s,f(s))$ where $f$ is picked from an LDMC ensemble. Another  ensemble of interest is that generated by the XOR function, known as the Low Density Generator Matrix codes (LDGMs).

Previously, we have introduced LDMCs in \cite{roozbehani2018triangulation} and have shown that they posses graceful degradation in the following (admittedly weak) senses:
\begin{itemize}
    \item Let 
    \begin{equation}
    \beta^*_f\triangleq\inf_{x,y} \{\frac{d(f(x),f(y))}{n}|d(x,y)\ge k-o(k)\}.
    \label{eq:betas}
    \end{equation}
    Operationally, $\beta^*$ characterizes the threshold for adversarial erasure noise beyond which the decoder cannot guaranteed to recover a single bit. Alternatively, $1-\beta_f^*$ is the fraction of equations needed such that $f$ can always recover at least one input symbol. 
    Then it was shown in \cite{roozbehani2018triangulation} that an LDMC codes achieve $\beta^*=1$ asymptotically with high probability. It was shown in \cite{roozbehani2018input} that repetition-like codes are the only linear codes achieving $\beta^*=1$. In particular, no such linear codes exist when the bandwidth expansion factor $\rho$ is not an integer. 
    \item It was shown in \cite{roozbehani2018triangulation} that LDMC codes can dominate the repetition code even with a sub-optimal (peeling) decoder. 
    A priori, it is not obvious if such codes must exist. Indeed Proposition \ref{prop:linear_erasure} below shows that no linear systematic code of rate 1/2 can dominate the repetition code. Furthermore, if we measure the quality of recovery w.r.t output distortion, i.e., the bit error rate of coded bits, then the so called area theorem (see Theorem \ref{thm:area_thm} below) states that no code is dominated by another code. However, this is not case for input distortion. Even for systematic linear codes, it is possible for one code to dominate another as the next example shows. 

\begin{example} Let $f$ be the 2-fold repetition map $x\mapsto (x,x)$. Let $g$ be a systematic code sending $x_i\mapsto (x_i,x_i,x_i)$ for all odd $i$ and $x_j\mapsto (x_j)$ for all even $i$. Then
$\BER_f(\epsilon)=\frac{1}{2}\epsilon^2$ and $\BER_g(\epsilon)=\frac{1}{2}(\frac{1}{2}\epsilon^3+\frac{1}{2}\epsilon)$. It can be checked that $f$ dominates $g$. This means that among repetition codes a balanced repetition is optimal. 
\label{ex:no_conservation}
\end{example}

\end{itemize}

As we discussed in the introduction, LDMCs can achieve trade-offs that are not accessible to linear codes. The easiest
way to illustrate this is to consider a single step of a BP decoding algorithm and contrast it for the LDPC/LDGM and the
LDMC. For the former, suppose that we are at a factor node corresponding to an observation of $X=S_1+S_2+S_3 \mod
2$. If the current messages from variables $S_2,S_3$ are uninformative (i.e. $\Ber(1/2)$ is the current estimate of
their posteriors), then the BP algorithm will send to $S_1$ a still uninformative $\Ber(1/2)$ message. For this reason,
the LDGM ensemble cannot be decoded by BP without introducing degree-1 nodes. 

\textit{On the contrary,} if we are in the LDMC(3) setting and observe $X=\mathrm{maj}(S_1,S_2,S_3)$, then even if
messages coming from $S_2,S_3$ are $\Ber(1/2)$ the observation of $X$ introduces a slight bias and the message to $S_1$
becomes $\Ber(\frac{1+2Y}{2})$ 
This explains why unlike the LDGMs, the LDMCs can be BP-decoded
without introducing degree-1 nodes.

This demonstrates the principal novel effect introduced by the LDMCs. Not only it helps with the graceful degradation, but
it also can be used to improve the BP performance at the early stages of decoding. For this, one can ``dope'' the
LDPC/LDGM ensemble with a small fraction of the LDMC nodes that would bootstrap the BP in the initial stages. 
This is the message of Section~\ref{sec:comp_ldgm} below.

\subsection{Prior work}\label{sec:prior}

Note that the two-point fundamental limit $\delta_2^*$ defined in~\eqref{eq:delta2} can be equivalently restated as the
problem of finding a joint distortion region for a two-user broadcast channel~\cite[Chapter 5]{elg11}. As such, a number
of results and ideas from the multi-user information theory can be adapted. We survey some of the prior work here with
this association in mind.

\subsubsection{Rateless codes}

To solve the multi-cast problem over the internet, the standard TCP protocol uses feedback to deal with erasures, i.e.,
each lost packet gets re-transmitted. This scheme is optimal from a data recovery point of view. From any $k$ received
coded data bits, $k$ source bits can be recovered. Hence it can achieve every point on the fundamental line of
Fig.~\ref{fig:two_user_converse_intro}. 
However, this method has poor latency and is not applicable to the cases where a feedback link is not available. Thus,
an alternative forward error
correcting code can be used to deal with data loss~\cite{luby1997practical}.

  In particular, Fountain codes have been introduced to solve the problem of multi-casting over the erasure channel
  \cite{byers1998digital}. They are a family of linear error correcting codes that can recover $k$ source bits from any
  $k+o(k)$ coded bits with small overhead.  A special class of fountain codes, called systematic Raptor codes, have been
  standardized and are used for multi-casting in 3GPP
  \cite{bouras2012evaluating,gul2016merge,ali2018raptorq,bouras2013embracing,luby2011raptorq}.  Various extensions and
  applications of Raptor codes are known \cite{chen2015protograph,chen2011protograph}.  However, as observed in
  \cite{sanghavi2007intermediate}, these codes are not able to adapt to the user demands and temporal variations in the
  network. 
  
  As less data becomes available at the user's end, it is inevitable that our ability to recover the source
  deteriorates. However, we may still need to present some meaningful information about the source to the user, i.e., we
  need to partially recover the source.  For instance, in sensor networks it becomes important to maximize the
  throughput of the network at any point in time since there is always a high risk that the network nodes fail and
  become unavailable for a long time \cite{kamra2006growth}. In such applications it is important for the codes to
  operate gracefully, i.e., to partially recover the source and improve progressively as more data comes in.  We show in
  Section \ref{sec:two_point} that Fountain codes, and more generally linear codes, are not graceful for forward error
  correction. Hence, it is not surprising that many authors have tried to develop graceful linear codes by using partial
  feedback \cite{kamra2006growth,hagedorn2009rateless,beimel2007rt,cassuto2015online}. However, we shall challenge the
  idea that graceful degradation (or the online property) is not achievable without feedback \cite{cassuto2015online}.
  Indeed LDMCs give a family of efficient (non-linear) error reducing codes that can achieve graceful degradation and
  can perform better than any linear code in the sense of partial recovery (see Fig.~\ref{fig:two_user_converse_intro}).

   Raptor codes are essentially concatenation of a rateless Tornado type error-reducing code with an outer error
   correcting pre-coder. Forney \cite{forney1965concatenated} observed that concatenation can be used to design codes
   that come close to Shannon limits with polynomial complexity. Forney's concatenated code consisted of a high rate
   error correcting (pre)-coder that encodes the source data and feeds it to a potentially complicated inner error
   correcting code.  One special case of Raptor codes, called pre-code only Raptor code is the concatenation of an error
   correcting code with the repetition code. Recently, such constructions are becoming popular in optics. In these
   applications it is required to achieve $10^{-15}$ ouput BER, much lower than the error floor of LDPC. Concatenation
   with a pre-coder to clean up the small error left by LDPCs is one way to achieve the required output BER
   \cite{sugihara2013spatially}. It was shown recently however that significant savings in  decoding complexity (and
   power) can be achieved if the inner code is replaced with a simple error reducing code and most of the error
   correction is left to the outer code \cite{zhang2017low,barakatain2018low}.
   
These codes, as all currently known examples of concatenated codes, are linear.
They use an outer linear error correcting code (BCH, Hamming, etc) and an inner error reducing LDGM. The LDGM code however operates in the regime of partial data recovery. It only produces an estimate of the source with some distortion that is within the error correcting capability of the outer code. To achieve good error reduction, however, LDGMs still need rather long block-length and a minimum number of successful transmissions. In other words, they are not graceful codes (see Fig.~\ref{fig:LDGM_LDMC}). We will show in Section \ref{sec:comp_ldgm} that LDMCs can uniformly improve on LDGMs in this regime. Thus, we expect that LDMCs appear in applications where LDGMs are currently used for error reduction.

\subsubsection{Joint Source-Channel Coding}

  The multi-user problem discussed above can be viewed as an example of broadcasting with a joint source-channel code (JSCC), which is considered one of the challenging open problems is network information theory \cite{kochman2019lower,kochman2018ozarow,reznic2006distortion,tan2013distortion,khezeli2016source}. In general it is known that the users have a conflict of interests, i.e., there is a tradeoff between enhancing the experience of one user and the others. For instance, if we design the system to work well for the less resourceful users, others suffer from significant delay. Likewise, if we minimize the delay for the privileged users, others suffer significant loss in quality. Naturally, there are two questions we are interested in: 1) what are the fundamental tradeoffs for partial recovery 2) how do we design codes to achieve them?

    Several achievability and converse bounds are available for the two user case under various noise models
    \cite{reznic2006distortion,ozarow1980source,ahlswede1985rate,el1982achievable,cover1972broadcast}. Our bounds on the
    trade-off functions give new converses for broadcasting with linear codes. These bounds are stronger than those
    inferred from the area theorem (see Section \ref{sec:bounds_area_thm}) or the general converses known for JSCC (cf.
    \cite{kochman2019lower}). The ongoing efforts to find graceful codes
    \cite{kokalj2011cliff,vitali2007multiple,goyal2001multiple,bourtsoulatze2019deep,mohr2000unequal,goyal2001quantized}
    can also be broadly categorized as JSCC solutions. (Our proposed codes, LDMCs, may prove to be useful in this
    regard as well.)
    
    In all, in most cases there is a significant gap between achievablity and converse bounds. In a sense, the theory and
    practice of partial recovery under two-point (two user) setting is far less developed compared with the cases of
    full recovery or a partial recovery for a single-user case. However, even for the former case the best achievability results are either non-constructive \cite{el1982achievable}, or
 involve complicated non-linearities (e.g., compression at different scales
 \cite{goyal2001multiple}\cite{kokalj2011cliff}). In turn, the optimal single-user (single-point) schemes for partial
 recovery (such as separation) are very suboptimal for other operating points -- see \cite{kochman2019lower} and references therein. 
    
    As we discussed in the intro, classic error correction solutions are not completely satisfactory --  they suffer 
    significant loss in recovery once the channel quality
    drops below the designed one. This sudden drop in quality is known as the ``cliff effect'' \cite{goyal2001multiple}
    and shown in Fig.~\ref{fig:ber_ldpc_rep} for LDPC codes. 
 
    Our two-point converse results show that the ``cliff effect'' persists in the range of partial
    recovery as well (for linear systematic codes over the BECs). That is, any linear code that comes close to the fundamental limits of partial recovery cannot be
    graceful. This latter result cannot be inferred in this generality from either the area theorem (see Section \ref{sec:bounds_area_thm}) or the
    general two-point converses known for the JSCC problem.

Finally, when the channel noise is modeled as an adversarial (worst case analysis), the problem becomes an instance of
the so-called combinatorial joint-source-channel-coding (CJSCC)
\cite{kochman2012results,kochman2012adversarial,YP15-abmaps,you15,maz16}.  For this problem, just like for the two-point
JSCC, we have much stronger results for the case of the linear codes. The best converse for general codes are given
in~\cite{YP15-abmaps}, while~\cite[Section 5]{PS16-uncertain} imporives this bound for linear codes. In addition, it was
shown in \cite{roozbehani2018input} that linear codes exhibit with a prescribed minimum distance cannot have a favorable
JSCC performance.

\subsubsection{Non-linear codes}
Codes whose computational graph (see Fig.~\ref{fig:comp_trees}) are sparse are known as sparse graph codes. Many such codes are known \cite{mackay2005encyclopedia} and can achieve near Shannon limit performance. With a few exceptions, these codes are mostly linear. One problem with linear codes is that BP cannot be initiated without the presence of low degree nodes. In \cite{ciliberti2005lossy}, the authors observe that non-linear functions do not have this problem and use random sparse non-linear codes to achieve near optimal compression using BP. However, using non-linear functions in this setting is mainly due to algorithmic considerations, namely, to enable the use of BP. Otherwise, similar compression results can be obtained by using LDGMs under different message passing rules\cite{wainwright2005lossy}. In \cite{montanari2008smooth}, the authors use special non-linear sparse graph codes to build optimal smooth compressors. In all of these works, however, the focus is on point-wise performance and a result the codes are optimized to operate at a particular rate. As such, they are unlikely to achieve graceful degradation. 

Another relevant work in this area is that of random constraint-satisfaction problems (CSPs) with a planted solution
\cite{krzakala2009hiding}. It appears that the CSP literature mostly focused on geometric characterization of spaces of
solutions and phase transitions thereof. These do not seem to immediately imply properties interesting to us here (such
as graceful degradation). In addition, we are not aware of any work in this area that has studied the planted CSP problem
corresponding to local majorities, and in fact our experiments suggest that the phase diagram here is very different (in
particular, the information-computation gap appears to be absent).

\section{Two-point converse bounds} \label{sec:two_point}

\subsection{General codes}

The results on broadcast channels~\cite[Theorem 1]{tan2013distortion},\cite[Section V.B]{khezeli2016source},
 give the
following bound shown in~\cite[Section II.C2]{kochman2019lower}:
\begin{proposition}[\cite{kochman2019lower,tan2013distortion,khezeli2016source}]\label{prop:twopt_gen} Consider a sequence of codes of rate $R$ encoding iid $\Ber(1/2)$ bits and achieving the per-letter
Hamming distortion $\delta^*(R,W)$, cf.~\eqref{eq:CRD}, when $W=\BEC_\epsilon$. Then over $W=\BEC_{\tau}$  their distortion is lower bounded by
	\[
	\delta \ge \ind_{\{\tau\ge \epsilon\}}\eta(\delta^{*},\epsilon,\tau)+\ind_{\{\tau< \epsilon\}}\inf_y\{y:\eta(y,\tau,\epsilon)\le \delta^*\},
	\]
where
\[
\eta(\delta^{*},\epsilon,\tau)\triangleq\sup_{q\in [0,1/2]} \frac{h_b^{-1}\left((1-(1-\tau)/R)+\frac{1-\tau}{1-\epsilon}(h_b(q\ast \delta^*)-h_b(\delta^*))\right)-q}{1-2q}.
\]
\end{proposition}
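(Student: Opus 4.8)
The plan is to obtain this bound as the specialization, to the binary erasure channel and the $\Ber(1/2)$ source under Hamming distortion, of the single-letter outer bound for broadcasting a common memoryless source over a (stochastically) degraded broadcast channel; such an outer bound is exactly the content of the results cited from~\cite{tan2013distortion,khezeli2016source} and recorded in~\cite{kochman2019lower}. The first step is to recognize the degraded-broadcast structure: when $\tau\ge\epsilon$, $\BEC_\tau$ is a stochastic degradation of $\BEC_\epsilon$ (erase each non-erased symbol with probability $(\tau-\epsilon)/(1-\epsilon)$), so one may fix a physically degraded coupling under which $S^k\to X^n\to Y^n(\epsilon)\to Y^n(\tau)$ is a Markov chain. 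The ``strong'' receiver observes $Y^n(\epsilon)$ and, by hypothesis, reconstructs $S^k$ at per-letter Hamming distortion $\delta^*=\delta^*(R,\BEC_\epsilon)$, i.e.\ meets the single-point relation~\eqref{eq:CRD} with equality; the ``weak'' receiver observes $Y^n(\tau)$ and reconstructs at some distortion $\delta$, which is what we must lower bound.

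The second step is the specialization. The broadcast outer bound furnishes an auxiliary $U$ with $U\to X^n\to Y^n(\epsilon)\to Y^n(\tau)$ such that, after single-letterization, $\tfrac1k I(U;Y^n(\tau))$ is at least the rate the weak receiver must extract and $\tfrac1k I(X^n;Y^n(\epsilon)\mid U)$ is at least the refinement rate carrying the weak reconstruction up to the strong one. For the $\Ber(1/2)$ source the rate-distortion function is $1-h_b(\cdot)$, and since the binary source is successively refinable the weak reconstruction may be taken to be a $\Ber(q)$-degraded version of the strong one; the parameter $q\in[0,\tfrac12]$ of this cloud center is exactly the variable over which $\eta$ is optimized, and it produces the convolutions $q\ast\delta^*$ and the refinement-layer cost $h_b(q\ast\delta^*)-h_b(\delta^*)$. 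The channel capacities enter through $C(\BEC_\epsilon)=1-\epsilon$ and $C(\BEC_\tau)=1-\tau$, and --- crucially --- the hypothesis that the code meets~\eqref{eq:CRD} over $\BEC_\epsilon$, i.e.\ $R(1-h_b(\delta^*))=1-\epsilon$, pins $\tfrac1k I(S^k;Y^n(\epsilon))\to 1-h_b(\delta^*)$ and $\tfrac1k H(S^k\mid Y^n(\epsilon))\to h_b(\delta^*)$. Combining these with the weak receiver's Fano bound collapses the two channel constraints into the single scalar inequality
\[ 1-h_b(\delta\ast q)\ \le\ \frac{1-\tau}{1-\epsilon}\bigl(1-h_b(\delta^*\ast q)\bigr)\qquad\text{for all }q\in[0,\tfrac12], \]
which, upon solving the underlying constraint $h_b(p\ast q)=(1-(1-\tau)/R)+\tfrac{1-\tau}{1-\epsilon}(h_b(q\ast\delta^*)-h_b(\delta^*))$ for $p=(h_b^{-1}(\cdot)-q)/(1-2q)$ and taking the supremum over $q$, is precisely $\delta\ge\eta(\delta^*,\epsilon,\tau)$.

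A self-contained way to produce the displayed scalar inequality (and so avoid quoting the broadcast machinery in full) is: introduce $Z^k\sim\Ber(q)^{\otimes k}$ independent of everything and set $V^k=S^k\oplus Z^k$; lower bound $\tfrac1k I(V^k;Y^n(\tau))\ge 1-h_b(\delta\ast q)$ by scalar rate-distortion together with concavity of $t\mapsto h_b(t\ast q)$ (Jensen over the per-coordinate distortions); upper bound $\tfrac1k I(V^k;Y^n(\epsilon))\le 1-h_b(\delta^*\ast q)$ by the conditional Mrs.\ Gerber lemma applied with the pinned value $\tfrac1k H(S^k\mid Y^n(\epsilon))=h_b(\delta^*)$; and finally compare $I(V^k;Y^n(\tau))$ to $I(V^k;Y^n(\epsilon))$ along the degraded coupling to extract the factor $\tfrac{1-\tau}{1-\epsilon}$. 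The case $\tau<\epsilon$ then follows by duality: now $\BEC_\epsilon$ is the degraded channel, so the general-anchor form of the bound applied to the pair (anchor $\BEC_\tau$, target $\BEC_\epsilon$) --- valid since $\epsilon>\tau$ --- gives that any code with distortion $\delta$ over $\BEC_\tau$ has distortion at least $\eta(\delta,\tau,\epsilon)$ over $\BEC_\epsilon$; since our code achieves $\delta^*$ over $\BEC_\epsilon$, this forces $\eta(\delta,\tau,\epsilon)\le\delta^*$, and as $\eta(\cdot,\tau,\epsilon)$ is nonincreasing in its first argument this is equivalent to $\delta\ge\inf_y\{y:\eta(y,\tau,\epsilon)\le\delta^*\}$.

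The step I expect to be the crux is the one that yields the factor $\tfrac{1-\tau}{1-\epsilon}$ rather than the trivial factor $1$ coming from plain data processing along $Y^n(\epsilon)\to Y^n(\tau)$: for a general joint law this comparison genuinely fails --- a heavily replicated source bit is hurt only sublinearly by the extra erasures --- so one must exploit that the code is single-point optimal at the anchor, i.e.\ that $Y^n(\epsilon)$ is ``just barely sufficient'' for distortion $\delta^*$ and carries no redundant replication of $S^k$; this is exactly what the pinned entropy identity $\tfrac1k H(S^k\mid Y^n(\epsilon))=h_b(\delta^*)$ encodes. Making this quantitative, together with the routine single-letterization (Fano plus concavity of $h_b$, handling unequal per-coordinate distortions by Jensen and noting that a balanced reconstruction is extremal), is the technical heart; given it, reducing the broadcast region to the displayed $\sup_q$ is bookkeeping.
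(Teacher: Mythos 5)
The paper itself offers no proof of this proposition: it is imported from~\cite[Section II.C2]{kochman2019lower}, which specializes the broadcast outer bounds of~\cite{tan2013distortion,khezeli2016source}, so your reconstruction can only be compared against that cited argument. Your skeleton matches it in the right places: the degraded coupling $S^k\to X^n\to Y^n(\epsilon)\to Y^n(\tau)$, the auxiliary $V^k=S^k\oplus \Ber(q)^{\otimes k}$, the rate--distortion converse plus Jensen giving $\tfrac1k I(V^k;Y^n(\tau))\ge 1-h_b(q\ast\delta)$, Mrs.\ Gerber's lemma applied at the pinned value $\tfrac1k H(S^k\mid Y^n(\epsilon))=h_b(\delta^*)$, and the scalar inequality you display is indeed equivalent to $\delta\ge\eta(\delta^*,\epsilon,\tau)$ once $1-h_b(\delta^*)=(1-\epsilon)/R$ is substituted.

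The gap sits exactly at the step you yourself flag as the crux, and the hint you offer does not close it. The comparison $I(V^k;Y^n(\tau))\le\tfrac{1-\tau}{1-\epsilon}I(V^k;Y^n(\epsilon))$ is not merely unproved for general joint laws: concavity of $p\mapsto I(V^k;Y^n)$ in the reveal probability $p$ (a function vanishing at $p=0$) yields the \emph{reverse} inequality, so no manipulation of $I(V^k;\cdot)$ alone can produce the factor, and the pinned-entropy identity you invoke enters the MGL step, not this one. The mechanism that actually works is to pass to the complement: write $I(V^k;Y^n)=I(X^n;Y^n)-I(X^n;Y^n\mid V^k)$, note that single-point optimality at the anchor forces $I(X^n;Y^n(\epsilon))=n(1-\epsilon)-o(n)$, apply the chord bound $I(X^n;Y^n(\tau)\mid V^k)\ge\tfrac{1-\tau}{1-\epsilon}\,I(X^n;Y^n(\epsilon)\mid V^k)$ (which \emph{does} follow from concavity in the reveal probability, an EXIT/area-theorem fact for the BEC), and then lower-bound $I(S^k;Y^n(\epsilon)\mid V^k)=H(V^k\mid Y^n(\epsilon))-H(S^k\mid Y^n(\epsilon))\ge k\bigl(h_b(q\ast\delta^*)-h_b(\delta^*)\bigr)$ via MGL and the pinning. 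Separately, your treatment of $\tau<\epsilon$ by invoking a ``general-anchor form'' is not available from your own construction: the derivation genuinely requires $\tfrac1k H(S^k\mid Y^n_{\mathrm{anchor}})$ to be pinned to its capacity value (this is why $1-(1-\tau)/R$, rather than $h_b$ of the anchor distortion, appears in $\eta$), and the code is not assumed single-point optimal over $\BEC_\tau$; swapping the roles of the two channels therefore needs its own argument rather than a formal duality.
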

\begin{remark}
We note that the above result is rate independent in the sense that it can be re-parametrized in terms of the capacity-to-rate ratios only. Proposition \ref{prop:two_pt_EXIT_area} below gives a new (rate dependent) bound for general systematic codes that in some regimes improves the above.
\end{remark}

\subsection{Systematic linear codes}

Systematic linear codes form a vast majority of the codes that are used in practice. In this section, we work towards
proving a two-point converse bound for this class of codes. 

In the following, by $\mathrm{ker}(A)$ we refer to the left kernel of $A$, that is the subspace of vectors $x$ satisfying  $xA=0$.

\begin{definition}
Given a matrix $A$ define
\[
\mathrm{hrank}(A)\triangleq\left|\{j:\mathrm{ker}(A)\subset \{x:x_j=0\}\}\right|.
\]
\end{definition}

\begin{definition}
Given a matrix $A$, define $\tilde{A}(p,q)$ to be a random sub-matrix of $A$ that is obtained by sampling each row of $A$ with probability $p$ and each column of $A$ with probability $q$ independently of other rows/columns. 
\end{definition}
The following proposition is well known (cf. \cite{richardson2008modern}).

\begin{proposition}
Consider a system of equations $xG=y$ over $\F_2$. If $\mathrm{ker}(G)\subset \{x:x_i=0\}$, then $x_i$ is uniquely determined from solving $xG=y$. Otherwise, there is a bijection between the set of solutions $\{x:xG=y, x_i=0\}$ and $\{x:xG=y,x_i=1\}$. In particular, if exactly $t$ coordinates are uniquely determined by the above equations, then $\mathrm{hrank}(G)= t$.
\label{prop:hrank}
\end{proposition}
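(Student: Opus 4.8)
The plan is to make everything flow from the elementary fact that the solution set of a consistent linear system over $\F_2$ is a coset of the corresponding kernel. First I would assume the system $xG=y$ is consistent --- in the application at hand $y=sG$ is produced from a genuine information vector, so this always holds --- and fix one solution $x^{(0)}\in\F_2^k$. Then $\{x:xG=y\}=x^{(0)}+\mathrm{ker}(G)$, since $xG=x'G=y$ forces $(x-x')G=0$ and, conversely, adding any element of $\mathrm{ker}(G)$ to a solution yields a solution.

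Next I would analyze a single coordinate $i$. If $\mathrm{ker}(G)\subset\{x:x_i=0\}$, then every $v\in\mathrm{ker}(G)$ has $v_i=0$, so $(x^{(0)}+v)_i=x^{(0)}_i$ for all $v$; hence $x_i$ equals $x^{(0)}_i$ on every solution and is uniquely determined. If instead $\mathrm{ker}(G)\not\subset\{x:x_i=0\}$, pick $v^\star\in\mathrm{ker}(G)$ with $v^\star_i=1$. The map $\Phi:x\mapsto x+v^\star$ sends the solution set to itself (because $v^\star\in\mathrm{ker}(G)$), is an involution and hence a bijection, and flips coordinate $i$, namely $(\Phi x)_i=x_i+1$. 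Therefore $\Phi$ restricts to a bijection between $\{x:xG=y,\ x_i=0\}$ and $\{x:xG=y,\ x_i=1\}$; in particular both sets are nonempty, so $x_i$ is not determined. These two cases are mutually exclusive and exhaustive.

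Finally, applying this dichotomy coordinate by coordinate, a coordinate $x_j$ is uniquely determined by $xG=y$ precisely when $\mathrm{ker}(G)\subset\{x:x_j=0\}$, and the number of such $j$ is by definition $\mathrm{hrank}(G)$; so if exactly $t$ coordinates are uniquely determined then $t=\mathrm{hrank}(G)$. There is essentially no obstacle here: the only point needing a word of care is the consistency hypothesis (without it "uniquely determined" would be vacuous), and the only structural input is that $\mathrm{ker}(G)$ is an $\F_2$-linear subspace, which is what makes the translation $\Phi$ an involution and hence a bijection.
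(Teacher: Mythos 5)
Your proof is correct and complete: the coset-of-the-kernel argument, the translation involution by a kernel vector with $v^\star_i=1$, and the coordinate-by-coordinate count against the definition of $\mathrm{hrank}$ are exactly the standard route. The paper itself offers no proof (it cites the result as well known), so there is nothing to contrast with; your careful note about consistency of the system is a reasonable addition, since without it "uniquely determined" would be vacuous.
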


From this proposition, it is easy to deduce the single-point-bound for linear codes:
\begin{proposition}
Let $f$ be a linear binary code of rate $1/\rho$. Then
\begin{equation}\label{eq:singlept_lin}
\BER_f(\epsilon)\ge \frac{1-\rho(1-\epsilon)}{2}.
\end{equation}
\label{prop:lin_single}
\end{proposition}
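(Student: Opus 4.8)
The plan is to condition on the erasure pattern, reduce the decoding task to solving a linear system over $\F_2$, and then invoke Proposition~\ref{prop:hrank} to count how many source bits can possibly be recovered.

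Write $f(s)=sG$ for a $k\times n$ generator matrix $G$ over $\F_2$, where $n=\rho k$ since $R=1/\rho$. Fix a realization of $\BEC_\epsilon$ and let $\mathcal E\subseteq[n]$ be the (random) set of unerased coordinates; the observation $Y^n(\epsilon)$ is informationally equivalent to the pair $(\mathcal E, s G_{\mathcal E})$, where $G_{\mathcal E}$ is the submatrix of $G$ on the columns indexed by $\mathcal E$. First I would argue that, conditioned on $\mathcal E$, every decoder incurs conditional error exactly $1/2$ on each source coordinate that is not uniquely determined. By Proposition~\ref{prop:hrank}, the uniquely determined coordinates are $\{i:\ker(G_{\mathcal E})\subseteq\{x:x_i=0\}\}$, of cardinality $\hrank(G_{\mathcal E})$, and for every other coordinate $i$ the proposition supplies a bijection between $\{x:xG_{\mathcal E}=y_{\mathcal E},\,x_i=0\}$ and $\{x:xG_{\mathcal E}=y_{\mathcal E},\,x_i=1\}$. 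Since $S^k\sim\Ber(1/2)^{\otimes k}$, the posterior of $S^k$ given the observation is uniform on the solution set, so this bijection forces $\PP[S_i=0\mid Y^n(\epsilon)]=\PP[S_i=1\mid Y^n(\epsilon)]=1/2$, hence $\PP[S_i\ne\hat S_i\mid Y^n(\epsilon)]=1/2$ for an arbitrary (possibly randomized) estimator. Summing over $i$ and taking expectations gives
$$k\,\BER_f(\epsilon)\ \ge\ \tfrac12\,\EE\!\left[k-\hrank(G_{\mathcal E})\right].$$

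It then remains to bound $\EE[\hrank(G_{\mathcal E})]$ from above, and here I would use two elementary facts: $\hrank(A)\le\rank(A)$ for any matrix $A$ (if $\ker(A)$, of dimension $(\#\text{rows})-\rank(A)$, lies in a coordinate subspace of codimension $t$, then $t\le\rank(A)$), and $\rank(G_{\mathcal E})\le|\mathcal E|$ since $G_{\mathcal E}$ has $|\mathcal E|$ columns. Thus $\hrank(G_{\mathcal E})\le|\mathcal E|$ deterministically, and $\EE[|\mathcal E|]=n(1-\epsilon)=\rho k(1-\epsilon)$. Substituting yields $\BER_f(\epsilon)\ge\tfrac12\big(1-\rho(1-\epsilon)\big)$, as claimed.

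The argument is short and mostly bookkeeping; the one step that needs care — and the place I would expect a careless reader to stumble — is the middle one: translating ``coordinate not uniquely determined'' into ``posterior is exactly $\Ber(1/2)$'', which genuinely uses both the uniform prior on $S^k$ and the explicit bijection of Proposition~\ref{prop:hrank}. It is worth emphasizing in the write-up that the resulting bound holds for \emph{every} decoder $g$, not just the MAP one, because on the undetermined coordinates the conditional error is $1/2$ irrespective of $g$.
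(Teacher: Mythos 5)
Your proof is correct and follows essentially the same route as the paper's: both condition on the set of unerased coordinates, invoke Proposition~\ref{prop:hrank} to conclude that only $\hrank(G_{\mathcal E})\le|\mathcal E|$ source bits are determined while the rest have conditional error exactly $1/2$ under any decoder, and finish with $\EE[|\mathcal E|]=(1-\epsilon)n$. The only difference is that you spell out the bijection-to-uniform-posterior step that the paper states without elaboration.
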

\begin{proof}
Let $N$ be the number of unerased coded bits returned by the channel. Note that in this case we  $\hrank\left(\tilde{A}(1,1-\epsilon)\right)\le N$. Since the map is linear, by Proposition \ref{prop:hrank},  at most $N$ source bits can be recovered perfectly by any decoder and the remaining coordinates cannot be guessed better than random. In other words, 
\[
\BER_f(\epsilon)\ge \frac{\EE[1-N/k]}{2},
\]
where the expectation is taken w.r.t to the distribution of $N$. The result follows since $\EE[N]=(1-\epsilon)n$.
\end{proof}

Our next proposition relates BER and hrank.
\begin{proposition}
Let $G=[I\,\, A]$ be the generator matrix of a systematic linear code $f$ with rate $R$. Then $\mathrm{BER}_f(\epsilon)\le \delta$ 
if and only if $$\EE[\mathrm{hrank} \left(\tilde{A}(\epsilon,1-\epsilon)\right)]\ge (\epsilon-2\delta)k.$$
\label{prop:ber_hrank}
\end{proposition}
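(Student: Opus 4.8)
The plan is to translate the BER condition for a systematic linear code into a statement about how many systematic (data) bits are erased by the channel and then recovered by solving the linear system. Here is the setup I would make precise first. The code is $f: s \mapsto s G = (s, sA)$, so the codeword has $k$ systematic bits and $n-k$ parity bits. Over $\BEC_\epsilon$, each of the $k$ systematic positions is erased independently with probability $\epsilon$, and likewise each parity position. Condition on the realization of erasures: let $S \subseteq [k]$ be the set of \emph{erased} systematic coordinates (so $|S| \sim \mathrm{Binom}(k,\epsilon)$ in expectation $\epsilon k$) and let $R' \subseteq [n-k]$ be the set of \emph{surviving} parity coordinates. The decoder sees the true values $s_i$ for $i \notin S$ and the linear constraints coming from the surviving parity rows, i.e. the submatrix of $A$ formed by the surviving columns of $A$ (columns indexed by $R'$). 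I would observe that the optimal decoder recovers $s_i$ exactly for $i \in S$ iff $s_i$ is determined by these constraints, and otherwise the posterior of $s_i$ is $\Ber(1/2)$ and the best bet incurs error $1/2$; for $i \notin S$ there is never an error. So conditioned on the erasure pattern, the number of bit errors is exactly $\tfrac12$ times the number of still-undetermined erased systematic coordinates.

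The next step is to identify "number of erased systematic coordinates that get determined" with an hrank of the right random submatrix. After plugging in the known values $s_i$, $i \notin S$, the residual system is: solve for $(s_i)_{i\in S}$ given the equations $\sum_{i \in S} s_i A_{i,\ell} = (\text{known})$ for $\ell \in R'$. The coefficient matrix of this residual system is precisely the submatrix of $A$ with rows restricted to $S$ and columns restricted to $R'$ — and since each row of $A$ (indexed by a data bit) is kept with probability $\epsilon$ and each column (indexed by a parity bit) is kept with probability $1-\epsilon$ independently, this is exactly $\tilde A(\epsilon, 1-\epsilon)$ in the paper's notation. (I should double-check the row/column convention: in $G = [I\ A]$, $A$ is $k \times (n-k)$, its rows are indexed by data bits and its columns by parity bits; we keep a data-bit row iff that bit is erased, probability $\epsilon$, and keep a parity column iff that parity bit survives, probability $1-\epsilon$. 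So the pair $(\epsilon, 1-\epsilon)$ matches.) By Proposition~\ref{prop:hrank} applied to this residual system, the number of coordinates in $S$ that are uniquely determined equals $\hrank(\tilde A(\epsilon,1-\epsilon))$. Hence, conditioned on the erasure pattern, the count of bit errors equals $\tfrac12\big(|S| - \hrank(\tilde A(\epsilon,1-\epsilon))\big)$.

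Now I take expectations over the channel (equivalently over the random submatrix). Since $\EE[|S|] = \epsilon k$,
\[
\BER_f(\epsilon) = \frac1k \cdot \frac12\Big(\epsilon k - \EE[\hrank(\tilde A(\epsilon,1-\epsilon))]\Big) = \frac{\epsilon}{2} - \frac{\EE[\hrank(\tilde A(\epsilon,1-\epsilon))]}{2k}.
\]
Therefore $\BER_f(\epsilon) \le \delta$ is equivalent to $\epsilon k - \EE[\hrank(\tilde A(\epsilon,1-\epsilon))] \le 2\delta k$, i.e. $\EE[\hrank(\tilde A(\epsilon,1-\epsilon))] \ge (\epsilon - 2\delta)k$, which is exactly the claimed equivalence. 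Note this is an identity for $\BER_f(\epsilon)$, not just an inequality, so both directions of the "if and only if" fall out at once.

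The routine parts are the binomial expectation and unwinding the conditioning. The one genuine subtlety — the step I'd be most careful about — is the reduction of the decoding problem to the residual homogeneous-type system and the bijection argument for undetermined bits: one must argue that plugging in the known systematic values leaves a system whose solution set, projected onto any single erased coordinate $s_i$, is either a point or all of $\{0,1\}$ with the two fibers in bijection (so the MAP estimate has conditional error exactly $1/2$), and that these events for different $i$ are governed precisely by whether $\ker$ of the residual matrix is contained in $\{x : x_i = 0\}$. This is exactly what Proposition~\ref{prop:hrank} packages, so the work is in checking that the residual matrix is literally $\tilde A(\epsilon,1-\epsilon)$ and that the MAP/bit-error accounting is clean; after that the proposition is immediate. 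I would also note explicitly that because the channel is the BEC, knowing $\epsilon$ at the decoder is irrelevant and the genie-aided "optimal decoder" is just Gaussian elimination, so there is no loss in taking the decoder to be the MAP one.
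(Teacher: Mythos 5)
Your proof is correct and follows essentially the same route as the paper's: reduce to the residual system over the erased systematic coordinates and the surviving parity columns, identify its coefficient matrix with $\tilde A(\epsilon,1-\epsilon)$, and invoke Proposition~\ref{prop:hrank} to equate the number of recovered bits with the hrank. Your version is somewhat more careful in making the exact identity $\BER_f(\epsilon)=\tfrac{\epsilon}{2}-\tfrac{1}{2k}\EE[\hrank(\tilde A(\epsilon,1-\epsilon))]$ explicit, which cleanly yields both directions of the equivalence that the paper's terse one-directional wording leaves implicit.
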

\begin{proof}
If BER is bounded by $\delta$, there are, on average, at most $2\delta k$ bits that are not uniquely determined by solving $x\tilde{G}(1,1-\epsilon)=y$. For a systematic code, the channel returns $\mathrm{Bin}(k,1-\epsilon)$ systematic bits. The remaining systematic bits $x_r$ are to be determined from solving $ x_r\tilde{A}(\epsilon,1-\epsilon)=\tilde{y}$ where $\tilde{y}$ is some vector that depends on the channel output $y$ and the returned systematic bits. If $t$ additional systematic bits are recovered, then $\mathrm{hrank}(\tilde{A}(\epsilon,1-\epsilon))= t$ by Proposition \ref{prop:hrank}. Since on average at least $(\epsilon-2\delta)k$ additional systematic bits are recovered, the claim on the average hrank follows. 
\end{proof}

The next proposition shows how matrices with positive hrank behave under row sub-sampling. Our main observation is that row sub-sampled matrices of a (thin) matrix with large hrank have bounded rank. In particular, if a (thin) matrix has full hrank, its sub-sampled matrices cannot have full rank. 

\begin{proposition}
Consider an arbitrary field $\F$ and let $\epsilon_1> \epsilon_2$. Given a $k\times m$ matrix $A$, 
$$\EE[\rank \left(\tilde{A}(\epsilon_2,1)\right)] \le \rank(A)-(1-\frac{\epsilon_2}{\epsilon_1})\EE[\mathrm{hrank} \left(\tilde{A}(\epsilon_1,1)\right)],$$
and
\[
\EE[\mathrm{hrank} \left(\tilde{A}(\epsilon_2,1)\right)] \ge \frac{\epsilon_2}{\epsilon_1}\EE[\mathrm{hrank} \left(\tilde{A}(\epsilon_1,1)\right)].
\]
Therefore, if $\EE[\rank \left(\tilde{A}(\epsilon_2,1)\right)]=\rank(A)-o(k)$, then $\EE[\rank \left(\tilde{A}(\epsilon_1,1)\right)]=o(k)$.\\
\label{prop:hrank_subsampling}
\end{proposition}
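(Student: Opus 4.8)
The plan is to reduce both inequalities to a statement about a \emph{fixed} matrix under independent row sub-sampling, and then average. The key reduction is Bernoulli thinning: writing $p:=\epsilon_2/\epsilon_1\in(0,1)$, the random matrix $\tilde A(\epsilon_2,1)$ has the same law as the matrix obtained from $B:=\tilde A(\epsilon_1,1)$ by keeping each of its rows independently with probability $p$; denote by $B_S$ this further row-sub-sampled matrix, where $S$ is the (random) set of kept rows. Conditioning on $B$, it thus suffices to prove, for an \emph{arbitrary} matrix $B$ with row index set $[r]$ over any field,
\[
\EE[\rank(B_S)]\le \rank(B)-(1-p)\,\hrank(B)\,,\qquad \EE[\hrank(B_S)]\ge p\,\hrank(B)\,;
\]
taking the outer expectation over $B$ and using $\rank(\tilde A(\epsilon_1,1))\le\rank(A)$ (deleting columns cannot raise the rank) then yields both displayed inequalities of the proposition.

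The engine is a reformulation of $\hrank$. Let $K(B)=\{x: xB=0\}\subseteq\F^{r}$ be the left kernel; since $xB=0$ iff $x$ is orthogonal to every column of $B$, we have $K(B)=(\mathrm{ColSpace}(B))^{\perp}$, hence $i\in Z(B):=\{i: x_i=0\ \forall x\in K(B)\}$ iff $e_i\in\mathrm{ColSpace}(B)$, and $\hrank(B)=|Z(B)|$. Two elementary facts will be used: (i) every $x\in K(B)$ is supported on the non-pivot coordinates, i.e.\ $K(B)\subseteq\F^{[r]\setminus Z(B)}$; (ii) $K(B_S)=K(B)\cap\F^{S}$, viewing $\F^{S}\hookrightarrow\F^{r}$ by zero-padding — immediate because $xB_S=\hat x B$ for the zero-padded extension $\hat x\in\F^r$ of $x\in\F^S$.

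For the $\hrank$ bound, if $i\in Z(B)$ and $i\in S$, then writing $e_i=Bc$ and restricting the rows to $S$ gives $e_i^{(S)}=B_S c$, so $i\in Z(B_S)$; hence $Z(B_S)\supseteq Z(B)\cap S$ and $\EE[\hrank(B_S)]\ge\EE|Z(B)\cap S|=p\,\hrank(B)$. For the $\rank$ bound, set $d:=\dim K(B)=r-\rank(B)$ and $w:=|[r]\setminus Z(B)|=r-\hrank(B)$. By (i) and (ii), $K(B_S)$ is the intersection of the two subspaces $K(B)$ and $\F^{S\cap([r]\setminus Z(B))}$ of the $w$-dimensional space $\F^{[r]\setminus Z(B)}$, so $\dim K(B_S)\ge\max\{0,\,d+|S\cap([r]\setminus Z(B))|-w\}$. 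Since $|S\cap([r]\setminus Z(B))|\sim\mathrm{Bin}(w,p)$ and $t\mapsto\max\{0,d+t-w\}$ is convex, Jensen's inequality gives $\EE[\dim K(B_S)]\ge d-(1-p)w$, whence $\EE[\rank(B_S)]=\EE|S|-\EE[\dim K(B_S)]\le pr-d+(1-p)w=\rank(B)-(1-p)\hrank(B)$.

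Finally, the concluding implication is a rearrangement of the first inequality: $\rank(A)-\EE[\rank(\tilde A(\epsilon_2,1))]\ge(1-\epsilon_2/\epsilon_1)\,\EE[\hrank(\tilde A(\epsilon_1,1))]\ge0$, so the left side being $o(k)$ together with $1-\epsilon_2/\epsilon_1$ being a fixed positive constant forces $\EE[\hrank(\tilde A(\epsilon_1,1))]=o(k)$. The step I expect to be the crux is the $\rank$ estimate: a naive row-by-row accounting of the rank drop is too lossy, and one has to pass to the left kernel, where the combination of ``kernel vectors live only on non-pivot coordinates'' with the intersection–dimension inequality and Jensen makes the coefficient $(1-p)\hrank(B)$ come out sharply.
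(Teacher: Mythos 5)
Your proof is correct and arrives at the paper's inequalities by a genuinely different, dual route. The paper works with row spaces: conditioning on $\hrank(\tilde{A}(\epsilon_1,1))=t$, it collects the $t$ rows not in the span of the others into a block $B$, observes that the row span of $B$ meets the row span of its complement $B^c$ only in $\{0\}$ (so $\rank(B^c)=\rank(\tilde{A}(\epsilon_1,1))-t$), and bounds the rank of the further subsample by the number of surviving $B$-rows plus $\rank(B^c)$; the surviving $B$-rows also stay outside the span of the remaining rows, which gives the $\hrank$ lower bound. You instead dualize to the left kernel, using the characterization that $i$ is counted by $\hrank(B)$ iff $e_i$ lies in the column space of $B$, the identity $K(B_S)=K(B)\cap\F^S$, and the intersection-dimension inequality inside $\F^{[r]\setminus Z(B)}$; linearity of expectation then gives the same intermediate bound $\EE[\rank(B_S)]\le\rank(B)-(1-p)\hrank(B)$ (your Jensen step is superfluous, since $\max\{0,\cdot\}\ge\cdot$ pointwise already suffices). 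What your version buys is a cleaner separation of the deterministic statement about a fixed $B$ from the averaging, and it makes explicit the step $\rank(\tilde{A}(\epsilon_1,1))\le\rank(A)$ that the paper glosses over --- though your parenthetical justification should say deleting \emph{rows}, not columns, since $\tilde{A}(\epsilon_1,1)$ retains every column. You also read the final displayed implication as a statement about $\EE[\hrank(\tilde{A}(\epsilon_1,1))]$, which is indeed what the first inequality delivers (the ``$\rank$'' in the paper's stated conclusion is evidently a typo).
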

\begin{proof}
Suppose that $\hrank\left(\tilde{A}(\epsilon_1,q)\right)= t$. This means that there are at least $t$ rows $a_j$ in $\tilde{A}(\epsilon_1,q)$ such that $a_j$ is not in the span of $\{a_i:i\neq j\}$. Let $B$ be the row-submatrix of $\tilde{A}(\epsilon_1,q)$ associated to these $t$ rows, and $B^c$ be its compliment, i.e., the matrix with rows $\{a_j:a_j\in \tilde{A}(\epsilon_1,q), a_j\not\in  B\}$. We claim that the compliment of $B$ is a matrix of rank $\rank(A)-t$. To see this, note that $\mathrm{Im}(B)\cap \mathrm{Im}(B^c)=\{0\}$, for otherwise we get linear dependencies of the form $h=\sum_i \alpha_i b_i\neq 0$ where $b_i\in B$ and $h\in \mathrm{Im}(B^c)$, which contradicts the construction of $B$. This means that  $\rank(B^c)+\rank(B)= \mathrm{rank}(A)$. The claim now follows since $\rank(B)=t$. Under row sub-sampling, each row of $B$ is selected with probability $\epsilon_2/\epsilon_1$ independently of other rows. Thus,
\[
\EE[\hrank\left(\tilde{A}(\epsilon_2,q)\right)|\hrank\left(\tilde{A}(\epsilon_1,q)\right)= t]\ge \frac{\epsilon_2}{\epsilon_1}t
\]
The rows selected from $B^c$ can contribute at most $\rank(A)-t$ to the rank of $\tilde{A}(\epsilon_2,q)$. Hence 
\[\EE[\rank \left(\tilde{A}(\epsilon_2,q) \right)|\hrank(\tilde{A}(\epsilon_1,q))=t]\le\frac{\epsilon_2}{\epsilon_1}t+\rank(A)-t
\]
Taking the average over the hrank of $\tilde{A}(\epsilon_1,q)$ proves the first two results. The last inequality follows by re-arranging the terms. 


\end{proof}
\begin{remark}
In general the above bound cannot be improved up to $o(k)$ deviations. Indeed we can partition the matrix $\tilde{A}(\epsilon_1,1)$ in the form
\[
\left[\begin{array}{c}B\\O\\F \end{array}\right]
\]
where $B$ is a basis with $\hrank(\tilde{A}(\epsilon_1,1))$ many rows, $O$ is the zero matrix, and $F$ is a redundant frame with $f>1-\epsilon_1-t$ rows that span the co-kernel of $B$. This means that any $1-\epsilon_1-t$ rows in $F$ form a basis for the image of $F$. Now for any $\epsilon_2<\epsilon_1$, if $\frac{\epsilon_2}{\epsilon_1}f=1-\epsilon_1$, then we sub-sample a basis from $f$ with high probability. Thus the $hrank$ of the sub-sampled matrix $\tilde{A}(\epsilon_2,1)$ can jump up with high probability for large $k$. 
\end{remark}

The next Proposition shows that rank is well behaved under column sub-sampling.
\begin{proposition}
Consider an arbitrary field $\F$ and let $p> q$. Given a $k\times m$ matrix $A$ over $\F$, $$\EE[\rank \left(\tilde{A}(1,p)\right)]\le \min\{pm, \frac{p}{q}\EE[\rank \left(\tilde{A}(1,q)\right)]\}.$$
\label{prop:rank_subsampling}
\end{proposition}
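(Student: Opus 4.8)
The plan is to prove the two bounds appearing in the minimum separately. The bound $\EE[\rank(\tilde A(1,p))] \le pm$ is immediate: the rank of any matrix is at most its number of columns, and $\tilde A(1,p)$ retains on average $pm$ columns, so taking expectations gives the claim.

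For the bound $\EE[\rank(\tilde A(1,p))] \le \tfrac{p}{q}\EE[\rank(\tilde A(1,q))]$, I would set up a coupling between the two column-subsampling experiments, exactly analogous to the row-subsampling argument used in the proof of Proposition~\ref{prop:hrank_subsampling}. Let $S$ be the random set of columns surviving with probability $p$, so $\tilde A(1,p)$ is the column-submatrix $A_S$. Conditioned on $S$, further retain each column of $S$ independently with probability $q/p$ (which is $\le 1$ since $p>q$) to form a set $S'\subseteq S$. Then marginally each column of $A$ lies in $S'$ with probability $p\cdot (q/p)=q$, so $A_{S'}$ has the same law as $\tilde A(1,q)$.

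Now condition on $S$ and set $r=\rank(A_S)$. Choose a subset $B\subseteq S$ of $r$ columns forming a basis of the column span of $A_S$. Since any subset of a linearly independent set is linearly independent, $\rank(A_{S'})\ge |B\cap S'|$. Each column of $B$ is kept in $S'$ independently with probability $q/p$, hence $\EE[\rank(A_{S'})\mid S]\ge \tfrac{q}{p}|B|=\tfrac{q}{p}\rank(A_S)$. Averaging over $S$ gives $\EE[\rank(\tilde A(1,q))]\ge \tfrac{q}{p}\EE[\rank(\tilde A(1,p))]$, which rearranges to the desired inequality.

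There is no substantial obstacle here; the only points needing (minor) care are checking that the two-stage sampling reproduces the correct marginal law of $\tilde A(1,q)$, and that restricting a basis $B$ of $A_S$ to a random subset can only decrease the rank — both routine. This mirrors, for columns and ordinary rank, the row/hrank estimate already established in Proposition~\ref{prop:hrank_subsampling}, and as in the remark there one does not expect the factor $p/q$ to be improvable beyond $o(k)$ corrections.
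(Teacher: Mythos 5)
Your proof is correct and follows essentially the same route as the paper: the paper also realizes $\tilde A(1,q)$ by further thinning the columns of $\tilde A(1,p)$ with probability $q/p$, observes that each column of a chosen basis survives independently with that probability, and concludes $\EE[\rank(\tilde A(1,q))]\ge \tfrac{q}{p}\EE[\rank(\tilde A(1,p))]$. You have merely made the coupling and the inequality $\rank(A_{S'})\ge |B\cap S'|$ explicit, which the paper leaves implicit.
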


\begin{proof}
Pick a column basis for $\tilde{A}(1,p)$. We can realize $\tilde{A}(1,q)$ by sub-sampling columns of $\tilde{A}(1,p)$. In this way, each column in the basis of $\tilde{A}(1,p)$ is selected with probability $q/p$ independently of other columns. In other words, $$\EE[\rank \left(\tilde{A}(1,q)\right)]\ge  \frac{q}{p}\EE[\rank \left(\tilde{A}(1,p)\right)].$$ The desired result follows.
\end{proof}

We are now ready to prove our main result.
\begin{theorem}
Let $f:x\mapsto xG$ be a systematic linear code of rate $1/\rho$ with generator matrix $G=[I\,|\,A]$ over $\F_2$. Fix $\epsilon_1>\epsilon_2$ and $\delta_1\le \frac{\epsilon_1}{2}$. If $\mathrm{BER}_f(\epsilon_1)\le \delta_1$, then
\begin{equation}\label{eq:twpt_lin_1}
\mathrm{BER}_f(\epsilon_2)\ge \kappa(\rho,\delta_1,\epsilon_2,\epsilon_1)\triangleq \frac{\epsilon_2-\frac{1-\epsilon_2}{1-\epsilon_1}\left[\frac{\epsilon_2}{\epsilon_1}\gamma+(\rho-1)(1-\epsilon_1)-\gamma \right]}{2}
\end{equation}
with $\gamma=\epsilon_1-2\delta_1$. If $\epsilon_2>\epsilon_1$ then
\begin{equation}\label{eq:twpt_lin_2}
\mathrm{BER}_f(\epsilon_2) \ge {\epsilon_2\over 2} - {\epsilon_2 \over \epsilon_2 - \epsilon_1} {1\over 1-\epsilon_1}
\left(\delta_1 - {1\over 2} (1-\rho(1-\epsilon_1))\right)\,.
\end{equation}
In particular, if $\mathrm{BER}(\epsilon_1)=\frac{1}{2}(1-\rho(1-\epsilon_1))+o(1)$, then $\mathrm{BER}(\epsilon_2)=
\frac{\epsilon_2}{2}-o(1)$. 

\label{thm:two_point_converse}
\end{theorem}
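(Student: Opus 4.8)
The plan is to translate both the hypothesis and the conclusion into statements about $\hrank$ of the parity block $A$ (a $k\times(\rho-1)k$ matrix) under independent row/column subsampling, and then chain Propositions~\ref{prop:rank_subsampling} and~\ref{prop:hrank_subsampling}. Set $\gamma=\epsilon_1-2\delta_1\ (\ge 0)$. Reading Proposition~\ref{prop:ber_hrank} as an identity, for every $\epsilon$ one has $\BER_f(\epsilon)=\frac12\big(\epsilon-\frac1k\EE[\hrank(\tilde A(\epsilon,1-\epsilon))]\big)$, so the hypothesis $\BER_f(\epsilon_1)\le\delta_1$ is exactly $\EE[\hrank(\tilde A(\epsilon_1,1-\epsilon_1))]\ge\gamma k$, while \eqref{eq:twpt_lin_1} is equivalent to the upper bound
\[
\tfrac1k\,\EE\big[\hrank(\tilde A(\epsilon_2,1-\epsilon_2))\big]\ \le\ (1-\epsilon_2)(\rho-1)-\tfrac{1-\epsilon_2}{1-\epsilon_1}\Big(1-\tfrac{\epsilon_2}{\epsilon_1}\Big)\gamma .
\]
So the task reduces to bounding the $\hrank$ at $\epsilon_2$ from above in terms of the lower bound $\gamma k$ on the $\hrank$ at $\epsilon_1$.

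I would route this through the intermediate matrix $\tilde A(\epsilon_2,1-\epsilon_1)$, which has the row-survival rate of $\epsilon_2$ but the column-survival rate of $\epsilon_1$. First, $\hrank\le\rank$ pointwise (the left kernel of a matrix lies in $\hrank$ coordinate hyperplanes), so $\EE[\hrank(\tilde A(\epsilon_2,1-\epsilon_2))]\le\EE[\rank(\tilde A(\epsilon_2,1-\epsilon_2))]$. Second, conditioning on the set of surviving rows at $\epsilon_2$ and applying Proposition~\ref{prop:rank_subsampling} to the restricted matrix with column rates $p=1-\epsilon_2>q=1-\epsilon_1$ gives $\EE[\rank(\tilde A(\epsilon_2,1-\epsilon_2))]\le\frac{1-\epsilon_2}{1-\epsilon_1}\EE[\rank(\tilde A(\epsilon_2,1-\epsilon_1))]$. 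Third, conditioning on the set of surviving columns at rate $1-\epsilon_1$ and applying Proposition~\ref{prop:hrank_subsampling} to that column-restricted matrix (whose row rates are $\epsilon_1>\epsilon_2$) gives $\EE[\rank(\tilde A(\epsilon_2,1-\epsilon_1))]\le\EE[\rank(\tilde A(1,1-\epsilon_1))]-(1-\frac{\epsilon_2}{\epsilon_1})\EE[\hrank(\tilde A(\epsilon_1,1-\epsilon_1))]$. Finally $\EE[\rank(\tilde A(1,1-\epsilon_1))]\le(1-\epsilon_1)(\rho-1)k$ (rank is at most the number of surviving columns) and $\EE[\hrank(\tilde A(\epsilon_1,1-\epsilon_1))]\ge\gamma k$ by hypothesis; substituting these back up the chain produces exactly the displayed bound, hence \eqref{eq:twpt_lin_1}. (Proposition~\ref{prop:lin_single} guarantees $\gamma\le(\rho-1)(1-\epsilon_1)$, so the right-hand side above stays nonnegative, as a bound on an expected $\hrank$ must.)

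For the case $\epsilon_2>\epsilon_1$ I would not argue directly but reuse \eqref{eq:twpt_lin_1}, which by the above holds for \emph{every} systematic linear code and every admissible pair of points. Note that $\BER_f(\epsilon_2)\le\epsilon_2/2$ always (output the unerased systematic bits and guess the rest). Apply \eqref{eq:twpt_lin_1} to $f$ with the two erasure levels interchanged: anchor level $\epsilon_2$, anchor distortion $\BER_f(\epsilon_2)\le\epsilon_2/2$, target level $\epsilon_1<\epsilon_2$. This yields $\BER_f(\epsilon_1)\ge\kappa(\rho,\BER_f(\epsilon_2),\epsilon_1,\epsilon_2)$, and since the right-hand side is affine and strictly decreasing in the anchor distortion, combining with $\BER_f(\epsilon_1)\le\delta_1$ and solving the resulting linear inequality for $\BER_f(\epsilon_2)$ gives \eqref{eq:twpt_lin_2} (in fact with an extra factor $1-\epsilon_2$ multiplying the correction term, which one may simply drop). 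The closing ``in particular'' claim is then immediate: if $\BER_f(\epsilon_1)=\frac12(1-\rho(1-\epsilon_1))+o(1)$, the correction term $\delta_1-\frac12(1-\rho(1-\epsilon_1))$ is $o(1)$, so \eqref{eq:twpt_lin_2} forces $\BER_f(\epsilon_2)\ge\frac{\epsilon_2}{2}-o(1)$, which together with $\BER_f(\epsilon_2)\le\epsilon_2/2$ pins it to $\frac{\epsilon_2}{2}-o(1)$.

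The step I expect to demand the most care is the two-step subsampling chain described above: one must realize that the right intermediate object is $\tilde A(\epsilon_2,1-\epsilon_1)$, that Proposition~\ref{prop:rank_subsampling} should be applied conditionally on the rows in order to inflate the column rate from $1-\epsilon_1$ to $1-\epsilon_2$ (this is what produces the factor $\frac{1-\epsilon_2}{1-\epsilon_1}$ appearing in $\kappa$), and that Proposition~\ref{prop:hrank_subsampling} should be applied conditionally on the columns in its $\hrank$-to-$\rank$ form so as to deflate the row rate while subtracting the available slack $\gamma k$. Everything else---the identity form of Proposition~\ref{prop:ber_hrank}, the inequalities $\hrank\le\rank\le\#\{\text{surviving columns}\}$, and the sign bookkeeping in the re-parametrization of $\kappa$ for the swapped application---is routine.
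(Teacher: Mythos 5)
Your proof is correct and follows essentially the same route as the paper's: translate the anchor constraint into $\EE[\hrank(\tilde A(\epsilon_1,1-\epsilon_1))]\ge\gamma k$ via Proposition~\ref{prop:ber_hrank}, pass through the intermediate matrix $\tilde A(\epsilon_2,1-\epsilon_1)$ using Proposition~\ref{prop:hrank_subsampling} for the rows and Proposition~\ref{prop:rank_subsampling} for the columns, and obtain the $\epsilon_2>\epsilon_1$ case by swapping the roles of the two erasure levels and solving the resulting linear inequality. Your observation that the swapped computation actually yields an extra factor of $1-\epsilon_2$ on the correction term (so that \eqref{eq:twpt_lin_2} is a slight weakening of what the argument gives) is accurate and handled correctly.
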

\begin{proof}
By Proposition \ref{prop:ber_hrank}, we have $\EE[\hrank\left(\tilde{A}(\epsilon_1,1-\epsilon_1)\right)]\ge \gamma k$. By Proposition \ref{prop:hrank_subsampling}, we have 
\[
\EE[\rank\left(\tilde{A}(\epsilon_2,1-\epsilon_1)\right)]\le (\frac{\epsilon_2}{\epsilon_1}\gamma+(\rho-1)(1-\epsilon_1)-\gamma)k. 
\]
By Proposition \ref{prop:rank_subsampling}, we have
\[
\EE[\rank\left(\tilde{A}(\epsilon_2,1-\epsilon_2)\right)]\le \frac{1-\epsilon_2}{1-\epsilon_1} (\frac{\epsilon_2}{\epsilon_1}\gamma+(\rho-1)(1-\epsilon_1)-\gamma)k.
\]
The first result now follows from Proposition \ref{prop:ber_hrank} upon observing that $\hrank(\tilde{A})\le \rank(\tilde{A})$. 

For the second case, we trivially have the following estimate (by interchanging the roles of $\epsilon_1$ and
$\epsilon_2$ and applying the first part):
\[
\BER_f(\epsilon_2)\ge \inf_{\delta_2}\{\delta_2: \kappa(\rho,\delta_2,\epsilon_1,\epsilon_2)\le \delta_1\}.
\]
The estimate~\eqref{eq:twpt_lin_2} then follows by evaluating this infimum (which is a minimization of a linear
function).

\end{proof}

One simple application of Theorem \ref{thm:two_point_converse} demonstrates that no linear systematic code can
uniformly dominate 2-fold repetition.

\begin{proposition}
Let $g$ be the $2$-fold repetition code with bit-MAP decoder and $f$ be a linear systematic code of rate $1/2$. If there exists $\epsilon_2$ such that $\BER_f(\epsilon_2)<\BER_{g}(\epsilon_2)$, then there exists some $\epsilon^*>\epsilon_2$ such that for all $\epsilon_1\in (\epsilon^*,1)$ we have $\BER_f(\epsilon_1)>\BER_{g}(\epsilon_1)$. Moreover, if $\BER_f(\epsilon_2)=t\BER_g(\epsilon_2)$ for some $t<1$, then we can pick $\epsilon^*=\max(\epsilon_2,1-\frac{(1-t)\epsilon_2^2}{1-\epsilon_2})$.
\label{prop:linear_erasure}
\end{proposition}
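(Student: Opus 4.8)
The plan is to derive the claim from Theorem~\ref{thm:two_point_converse} used ``in reverse'': the anchor point of the theorem will be the \emph{worse} channel $\epsilon_1$, and the resulting conclusion about the better channel $\epsilon_2$ will be contradicted. First I would record the elementary fact that for the $2$-fold repetition code $x\mapsto(x,x)$ with bit-MAP decoding, a systematic bit $S_i$ is reconstructed exactly unless both of its copies are erased (an event of probability $\epsilon^2$), and conditioned on that event the MAP estimate errs with probability $\tfrac12$; hence $\BER_g(\epsilon)=\tfrac12\epsilon^2$. Set $\delta_2\triangleq \BER_f(\epsilon_2)$, so the hypothesis reads $\delta_2<\tfrac12\epsilon_2^2$, and note $\rho=2$ since $f$ has rate $1/2$.

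Fix any $\epsilon_1\in(\epsilon_2,1)$ and suppose towards a contradiction that $\BER_f(\epsilon_1)\le \BER_g(\epsilon_1)=\tfrac12\epsilon_1^2$. I would then apply estimate~\eqref{eq:twpt_lin_1} of Theorem~\ref{thm:two_point_converse} with $\epsilon_1$ playing the role of the theorem's anchor erasure probability, $\epsilon_2$ as the target, $\delta_1=\tfrac12\epsilon_1^2$ (legitimate because $\delta_1\le\epsilon_1/2$ for $\epsilon_1<1$), and $\rho=2$. With this choice $\gamma=\epsilon_1-2\delta_1=\epsilon_1(1-\epsilon_1)$, and a one-line simplification collapses the right-hand side to
\[
\kappa\big(2,\tfrac12\epsilon_1^2,\epsilon_2,\epsilon_1\big)=\epsilon_2-\tfrac12+\tfrac12(1-\epsilon_2)(\epsilon_1-\epsilon_2).
\]
Hence the assumption forces $\delta_2=\BER_f(\epsilon_2)\ge \epsilon_2-\tfrac12+\tfrac12(1-\epsilon_2)(\epsilon_1-\epsilon_2)$, which contradicts $\delta_2<\tfrac12\epsilon_2^2$ exactly when
\[
(1-\epsilon_2)(\epsilon_1-\epsilon_2)>1-2\epsilon_2+2\delta_2 .
\]

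It then remains to see that this last inequality holds for all $\epsilon_1$ in a left-neighbourhood of $1$. Here the hypothesis enters essentially: since $2\delta_2<\epsilon_2^2$, the right-hand side satisfies $1-2\epsilon_2+2\delta_2<(1-\epsilon_2)^2$, whereas the left-hand side is increasing in $\epsilon_1$ and tends to $(1-\epsilon_2)^2$ as $\epsilon_1\uparrow1$. Consequently the inequality holds for every $\epsilon_1$ exceeding
\[
\epsilon^*\triangleq\max\!\Big(\epsilon_2,\ \epsilon_2+\frac{1-2\epsilon_2+2\delta_2}{1-\epsilon_2}\Big),
\]
and one checks $\epsilon^*<1$ using $2\delta_2<\epsilon_2^2$ once more. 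For such $\epsilon_1$ the contradiction invalidates the assumption, so $\BER_f(\epsilon_1)>\BER_g(\epsilon_1)$, which is the first assertion. For the ``moreover'' part I would substitute $\delta_2=t\cdot\tfrac12\epsilon_2^2$ and use the identity $1-2\epsilon_2+t\epsilon_2^2=(1-\epsilon_2)^2-(1-t)\epsilon_2^2$ to rewrite $\epsilon^*$ as $\max\big(\epsilon_2,\,1-\tfrac{(1-t)\epsilon_2^2}{1-\epsilon_2}\big)$.

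The one step that needs care — and the only place the argument could go wrong — is the direction in which Theorem~\ref{thm:two_point_converse} is invoked: one must anchor at the \emph{larger} erasure probability $\epsilon_1$ and use~\eqref{eq:twpt_lin_1}. Anchoring at $\epsilon_2$ and quoting~\eqref{eq:twpt_lin_2} instead yields a strictly weaker lower bound on $\BER_f(\epsilon_1)$ (it is missing a factor $1-\epsilon_1$), which fails to exceed $\tfrac12\epsilon_1^2$ when $\epsilon_2$ is small. Equally important, the strict inequality $\delta_2<\tfrac12\epsilon_2^2$ must be carried unrelaxed all the way through — it is precisely what pushes the threshold $\epsilon^*$ strictly below $1$. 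Beyond this the proof is elementary algebra.
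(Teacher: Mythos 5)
Your proof is correct and follows essentially the same route as the paper's: both argue by contradiction, anchoring Theorem~\ref{thm:two_point_converse} at the larger erasure probability $\epsilon_1$ with $\delta_1=\tfrac12\epsilon_1^2$ so that \eqref{eq:twpt_lin_1} collapses to $\BER_f(\epsilon_2)\ge \tfrac{\epsilon_2}{2}-\tfrac12(1-\epsilon_2)(\epsilon_2-\epsilon_1+1)$, which is incompatible with $\BER_f(\epsilon_2)<\tfrac12\epsilon_2^2$ once $\epsilon_1>1-\tfrac{(1-t)\epsilon_2^2}{1-\epsilon_2}$. Your write-up is if anything slightly more explicit about the admissibility check $\delta_1\le\epsilon_1/2$ and the algebra yielding the threshold $\epsilon^*$, but no new idea is involved.
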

\begin{proof}
The key idea of our proof could be explained as follows. The
repetition codes achieves the single-point converse line for linear codes~\eqref{eq:singlept_lin} with optimal slope at
$C\to 0$. However, Theorem 1 shows that no optimal slope is possible for any other systematic linear code.

We prove an equivalent claim.
Suppose that we have a code such that $\BER_f(\epsilon_1)\le {1\over 2}\epsilon_1^2$ for all $\epsilon_1$ in the
neighborhood of 1. We claim that then $\BER_f(\epsilon_2) \ge {1\over 2}
\epsilon_2^2$ for all $\epsilon_2$. Indeed, by (\ref{eq:twpt_lin_1}) in Theorem \ref{thm:two_point_converse}, for all $\epsilon_2<\epsilon_1$ we must have
\begin{equation}
\BER_f(\epsilon_2)\ge \frac{\epsilon_2}{2}-\frac{1}{2}(1-\epsilon_2)(\epsilon_2-\epsilon_1+1).
\label{eq:main_ineq1}
\end{equation}
Now suppose to the contrary that $\BER_f(\epsilon_2)=t\epsilon_2^2/2$ for some $\epsilon_2< \epsilon_1$ and $t< 1$. 
Then the above inequality can be written as
\begin{equation}
t\epsilon_2\ge \epsilon_2-(1-\epsilon_1)\frac{1-\epsilon_2}{\epsilon_2}.
\label{eq:main_ineq2}
\end{equation}
Clearly, this can not be satisfied for
\begin{equation}
\epsilon_1 >1 - \frac{(1-t)\epsilon_2^2}{1-\epsilon_2}.
\label{eq:main_ineq3}
\end{equation}
Thus, it must be that inequality $\BER_f(\epsilon_1)\le {1\over2}\epsilon_1^2$ is violated for every such $\epsilon_1$.

\end{proof}

\section{Bounds via Area Theorem}
\label{sec:bounds_area_thm}
The lower bound of Theorem \ref{thm:two_point_converse} states that a linear systematic code cannot have small BER for all erasure probabilities. In this sense, it has the flavor of a ``conservation law''. In coding theory, it is often important to understand how a code behaves over a family of parametrized channels. The main existing tool in the literature to study such questions is the so called area theorem. Here we introduce the theorem and study its consequences for two point bounds on BER. It turns out that the bound in Theorem \ref{thm:two_point_converse} is tighter than what can be inferred from the area theorem. 

Following \cite{richardson2008modern}, we define the notion of an {\em extrinsic information transfer} (EXIT) function.
\begin{definition}
Let $X$ be a codeword chosen from an $(n,k)$ code $C$ according to the uniform distribution. Let $Y(\epsilon)$ be obtained by transmitting $X$ through a BEC($\epsilon$). Let $$Y_{\sim i}(\epsilon)=(Y_1(\epsilon),\cdots,Y_{i-1}(\epsilon),?,Y_{i+1}(\epsilon),\cdots,Y_n(\epsilon))$$ be obtained by erasing the $i$-th bit from $Y(\epsilon)$. The $i$-th EXIT function of $C$ is defined as
\[
h_i(\epsilon)=H(X_i|Y_{\sim i}(\epsilon))
\]
The average EXIT function is
\[
h(\epsilon)=\frac{1}{n}\sum_{i=1}^nh_i(\epsilon)
\]
\end{definition}
The area theorem states that
\begin{theorem}[Area Theorem] The average EXIT function of a binary code of rate $R$ satisfies the following property
\[
R=\int_0^1 h(\epsilon)d\epsilon.
\]
\label{thm:area_thm}
\end{theorem}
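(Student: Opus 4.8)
The plan is to realize $n\,h(\epsilon)$ as the derivative of the conditional entropy $H(X\mid Y(\epsilon))$ in the erasure probability, and then integrate via the fundamental theorem of calculus. To make the differentiation transparent I would first pass to a vector-parameter version of the channel: let each coordinate $j\in[n]$ be sent through an \emph{independent} $\mathrm{BEC}(\epsilon_j)$, write $Y(\vec\epsilon)$ for the resulting output (and $Y_{\sim i}(\vec\epsilon)$ for the same output with coordinate $i$ additionally erased), and set $\phi(\vec\epsilon)\triangleq H(X\mid Y(\vec\epsilon))$. The key structural fact is that $\phi$ is \emph{affine in each coordinate $\epsilon_i$ separately}: conditioning on whether bit $i$ is erased gives
\[
\phi(\vec\epsilon)=(1-\epsilon_i)\,H\!\left(X\mid Y_{\sim i}(\vec\epsilon),X_i\right)+\epsilon_i\,H\!\left(X\mid Y_{\sim i}(\vec\epsilon)\right),
\]
and neither conditional entropy on the right depends on $\epsilon_i$. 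Hence
\[
\frac{\partial\phi}{\partial\epsilon_i}(\vec\epsilon)=H\!\left(X\mid Y_{\sim i}(\vec\epsilon)\right)-H\!\left(X\mid Y_{\sim i}(\vec\epsilon),X_i\right)=H\!\left(X_i\mid Y_{\sim i}(\vec\epsilon)\right).
\]
Specializing to the diagonal $\vec\epsilon=\epsilon\mathbf{1}$ recovers exactly $h_i(\epsilon)$, so by the chain rule
\[
\frac{d}{d\epsilon}\,H(X\mid Y(\epsilon))=\sum_{i=1}^n\frac{\partial\phi}{\partial\epsilon_i}(\vec\epsilon)\Big|_{\vec\epsilon=\epsilon\mathbf{1}}=\sum_{i=1}^n h_i(\epsilon)=n\,h(\epsilon).
\]

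Next I would compute the boundary values. At $\epsilon=0$ there are no erasures, so $X$ is a deterministic function of $Y(0)$ and $H(X\mid Y(0))=0$. At $\epsilon=1$ the output carries no information about $X$, and since $X$ is uniform over a binary code with $2^k$ codewords, $H(X\mid Y(1))=H(X)=k$. Integrating the derivative identity over $[0,1]$ and applying the fundamental theorem of calculus,
\[
n\int_0^1 h(\epsilon)\,d\epsilon=H(X\mid Y(1))-H(X\mid Y(0))=k,
\]
which rearranges to $\int_0^1 h(\epsilon)\,d\epsilon=k/n=R$, as claimed.

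The only step needing genuine care — and the one I would expect to spend the most words on — is the differentiation: one must verify that $\phi$ is actually differentiable and that termwise differentiation is legitimate. This is immediate once one notes that $H(X\mid Y(\vec\epsilon))$ is a finite convex combination, over the $2^n$ erasure patterns $\mathcal{E}\subseteq[n]$, of the quantities $H(X_{\mathcal{E}}\mid X_{[n]\setminus\mathcal{E}})$ with polynomial weights $\prod_{j\in\mathcal{E}}\epsilon_j\prod_{j\notin\mathcal{E}}(1-\epsilon_j)$; thus $\phi$ is a genuine polynomial in $\vec\epsilon$, smooth on all of $[0,1]^n$, and the chain rule applies without qualification. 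Beyond this bookkeeping there is no real obstacle: the Area Theorem is essentially a restatement of the chain rule for entropy together with the trivial endpoint evaluations.
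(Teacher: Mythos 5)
Your proof is correct, and it is the standard argument for the BEC area theorem (essentially the one in Richardson--Urbanke, which this paper cites rather than proves: Theorem~\ref{thm:area_thm} is stated without proof as a known result). The three ingredients are all in order: the per-coordinate affineness of $\vec\epsilon\mapsto H(X\mid Y(\vec\epsilon))$, which gives $\partial_{\epsilon_i}H(X\mid Y(\vec\epsilon))=H(X\mid Y_{\sim i})-H(X\mid Y_{\sim i},X_i)=H(X_i\mid Y_{\sim i})$ by the entropy chain rule; the polynomiality of $\phi$ over the $2^n$ erasure patterns, which justifies differentiating on the diagonal; and the endpoint evaluations $H(X\mid Y(0))=0$ and $H(X\mid Y(1))=H(X)=nR$ for a uniformly chosen codeword. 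Since the paper supplies no proof of its own, there is nothing to contrast; your write-up would serve as a self-contained proof of the cited fact.
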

Let $g$ be a decoder acting on $Y(\epsilon)$. Then the output bit error rate associated to $g$ can be defined as 
\begin{equation}
p^g_b(\epsilon)=\frac{\EE[d(X,g(Y(\epsilon))]}{n}
\label{eq:ber_output}
\end{equation}

where the expectation is taken w.r.t to both the input distribution and channel realizations at erasure probability $\epsilon$. By Proposition \ref{prop:hrank}, the MAP decoder $g^*$ either fully recovers a bit or leaves it completely unbiased. Thus the $i$-th EXIT function can be written as 
\[
H(X_i|Y_{\sim i}(\epsilon))=H(X_i|Y_{\sim i}(\epsilon),g_i^*(Y_{\sim i}(\epsilon)))=\P(g_i^*(Y_{\sim i}(\epsilon))=?).
\]
This gives 
\begin{equation}
p_b^{g^*}(\epsilon)=\frac{1}{2n} \sum_i{\epsilon\P(g_i^*(Y_{\sim i}(\epsilon))=?)}=\frac{\epsilon h(\epsilon)}{2}.
\label{eq:pb_heps}
\end{equation}
Let us now find the implications of the area theorem for the input BER of linear systematic codes. To this end we define the average systematic EXIT function $$h^{\mathrm{sys}}(\epsilon)=\frac{1}{k}\sum_{i=1}^k h_i(\epsilon).$$ Likewise we can define the non-systematic EXIT function as follows:
$$h^{\mathrm{non-sys}}(\epsilon)=\frac{1}{n-k}\sum_{i=k+1}^n h_i(\epsilon).$$
We first prove a lemma to show that the coded bit error rate converges to $0$ continuously as the input bit error rate vanishes.
\begin{lemma}[Data BER vs EXIT function]
Fix $\epsilon<\epsilon_0$. Let $f$ be a binary code of rate $R$.
\begin{enumerate}[(a)]
    \item If $f$ is linear, then
\begin{equation}
h(\epsilon)\le \frac{2 R}{\epsilon_0-\epsilon} \mathrm{BER}_f(\epsilon_0).    
\end{equation}
\item If $f$ is general we have
\[
h(\epsilon)\le \frac{R}{\epsilon_0-\epsilon} h_b(\mathrm{BER}_f(\epsilon_0)),
\]
where $h_b$ is the binary entropy function.
\end{enumerate}
In particular, if BER$(\epsilon_0)\to 0$ for a sequence of binary codes, then $h(\epsilon)\to 0$ for all $\epsilon<\epsilon_0$.
\label{lem:h_nonsys}
\end{lemma}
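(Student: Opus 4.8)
The plan is to squeeze $h(\epsilon)$ between two bounds on the conditional entropy $H(X^n|Y^n(\epsilon_0))$: a lower bound coming from the area theorem and an upper bound coming from Fano-type inequalities. For the lower bound I would first record the differential form of the area theorem. The same computation that proves Theorem~\ref{thm:area_thm} --- write $Y^n(\epsilon)$ using i.i.d.\ $\Ber(\epsilon)$ erasure flags and differentiate the polynomial $\epsilon\mapsto H(X^n|Y^n(\epsilon))$ --- gives $\frac{d}{d\epsilon}H(X^n|Y^n(\epsilon))=\sum_{i=1}^n H(X_i|Y_{\sim i}(\epsilon))=n\,h(\epsilon)$; together with $H(X^n|Y^n(0))=0$ this yields $H(X^n|Y^n(\epsilon_0))=n\int_0^{\epsilon_0}h(\tilde\epsilon)\,d\tilde\epsilon$. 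I would then note that $h$ is nondecreasing, since $Y_{\sim i}(\epsilon')$ is a (stochastically) degraded version of $Y_{\sim i}(\epsilon)$ for $\epsilon'\ge\epsilon$ and hence $H(X_i|Y_{\sim i}(\epsilon'))\ge H(X_i|Y_{\sim i}(\epsilon))$. Consequently $n\int_\epsilon^{\epsilon_0}h(\tilde\epsilon)\,d\tilde\epsilon\ge n(\epsilon_0-\epsilon)\,h(\epsilon)$, and since $H(X^n|Y^n(\epsilon))\ge 0$ we obtain
\[
H(X^n|Y^n(\epsilon_0))\ \ge\ n(\epsilon_0-\epsilon)\,h(\epsilon).
\]

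For the upper bound, since $X^n=f(S^k)$ is a deterministic function of $S^k$ we have $H(X^n|Y^n(\epsilon_0))\le H(S^k|Y^n(\epsilon_0))\le\sum_{i=1}^k H(S_i|Y^n(\epsilon_0))$. In part (b) I would bound each term by Fano's inequality, $H(S_i|Y^n(\epsilon_0))\le h_b(\P[S_i\ne\hat S_i(\epsilon_0)])$ (the residual term vanishes because the alphabet is binary), and then use concavity of $h_b$ together with Jensen's inequality to get $\sum_{i=1}^k H(S_i|Y^n(\epsilon_0))\le k\,h_b(\mathrm{BER}_f(\epsilon_0))$. In part (a) the linearity lets me invoke Proposition~\ref{prop:hrank}: conditioned on the unerased coded bits, each source bit is either uniquely determined, so $H(S_i|Y^n(\epsilon_0))=0$, or has a uniform posterior, so $H(S_i|Y^n(\epsilon_0))=1$ and any decoder (in particular the bit-MAP one underlying $\mathrm{BER}_f$) errs on it with probability $\frac12$; in both cases $H(S_i|Y^n(\epsilon_0))=2\,\P[S_i\ne\hat S_i(\epsilon_0)]$, so $\sum_{i=1}^k H(S_i|Y^n(\epsilon_0))=2k\,\mathrm{BER}_f(\epsilon_0)$.

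Combining the two bounds gives $n(\epsilon_0-\epsilon)h(\epsilon)\le 2k\,\mathrm{BER}_f(\epsilon_0)$ in case (a) and $n(\epsilon_0-\epsilon)h(\epsilon)\le k\,h_b(\mathrm{BER}_f(\epsilon_0))$ in case (b); dividing by $n(\epsilon_0-\epsilon)$ and recalling $R=k/n$ produces exactly the two displayed estimates, and the ``in particular'' assertion is then immediate because $h_b(x)\to 0$ as $x\to 0$ and $R\le 1$. The one point that needs a little care is the differential area theorem $\frac{d}{d\epsilon}H(X^n|Y^n(\epsilon))=n\,h(\epsilon)$; this is exactly the mechanism behind Theorem~\ref{thm:area_thm} and is carried out in \cite{richardson2008modern}, so in the write-up I would either cite it or insert the short derivative computation. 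Everything else is routine manipulation of entropies.
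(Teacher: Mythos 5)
Your proof is correct, and it reaches the paper's key intermediate inequality $n h(\epsilon)\le \frac{1}{\epsilon_0-\epsilon}H(X|Y(\epsilon_0))$ (inequality~\eqref{eq:ar_4} in the paper) by a genuinely different route. The paper starts from the degradation identity $\sum_i H(X_i|Y_{\sim i}(\epsilon),Y(\epsilon_0))=\epsilon_0\, n\, h(\epsilon)$ and then interprets the left-hand side as the EXIT function, at erasure parameter $\omega=\epsilon/\epsilon_0$, of the residual code $X_{T_0}$ obtained by conditioning on $Y(\epsilon_0)=y_0$; applying the area theorem and monotonicity to that conditional code and averaging over $y_0$ yields~\eqref{eq:ar_4}. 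You instead apply the differential form of the area theorem, $\tfrac{d}{d\epsilon}H(X|Y(\epsilon))=n\,h(\epsilon)$, to the original code, integrate from $\epsilon$ to $\epsilon_0$, and use monotonicity of $h$ together with $H(X|Y(\epsilon))\ge 0$. Both arguments rest on the same two ingredients (area theorem plus monotonicity of the EXIT function), but yours avoids the conditional-code construction and is shorter; its only cost is that the derivative identity $\partial_{\epsilon_i}H(X|Y)=H(X_i|Y_{\sim i})$ must be stated or cited, and as you note this is precisely the mechanism behind Theorem~\ref{thm:area_thm} in~\cite{richardson2008modern}. Your second half -- bounding $H(X|Y(\epsilon_0))$ by $2nR\,\BER_f(\epsilon_0)$ in the linear case via Proposition~\ref{prop:hrank}, and by $k\,h_b(\BER_f(\epsilon_0))$ in general via Fano and Jensen -- coincides with the paper's argument (the paper phrases the general case as a rate-distortion converse, which is the same Fano--Jensen chain). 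One minor remark: the identity $H(S_i|Y(\epsilon_0))=2\,\PP[S_i\neq\hat S_i(\epsilon_0)]$ presupposes the bit-MAP decoder; for any other decoder the BER can only increase, so the stated inequalities still hold.
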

\begin{proof}

\begin{enumerate}[(a)]
\item Let $X$ be an input codeword $X\in\{0,1\}^n$ and denote by $Y(\epsilon)$ and $Y(\epsilon_0)$ outputs of degraded binary erasure channels, i.e.:
$$ X \to Y(\epsilon) \to Y(\epsilon_0)\,.$$
Notice that
$$ I(X_i; Y(\epsilon_0) | Y_{\sim i}(\epsilon)) = I(X_i; Y_i(\epsilon_0) | Y_{\sim i}(\epsilon)) = (1-\epsilon_0) H(X_i|Y_{\sim i}(\epsilon))\,,$$
where the first equality follows from degradation and the second is a property of erasure channels. Rewriting this identity and summing over $i$ we obtain
\begin{equation}\label{eq:ar_1}
\sum_{i=1}^{n} H(X_i | Y_{\sim i}(\epsilon), Y(\epsilon_0)) = \epsilon_0 \sum_{i=1}^n H(X_i | Y_{\sim i}(\epsilon)) = \epsilon_0 n h(\epsilon)\,,
\end{equation}
where $h(\cdot)$ is an EXIT function of the code $X$.

We now interpret the left-hand side sum in~\eqref{eq:ar_1} as another EXIT function (a conditional one). Indeed, given $Y(\epsilon_0)$ denote by $T_0$ the set of erasures in $Y(\epsilon_0)$. Conditioned on $Y(\epsilon_0)=y_0$ we have that the joint distribution $P_{X,Y(\epsilon)|Y(\epsilon_0)=y_0}$ can be understood as follows: $X_{T_0}$ is sampled from the distribution $P_{X_{T_0}|X_{T_0^c}}$ and then each of the $|T_0|$ entries of $X_{T_0}$ is erased independently with probability $\omega = \frac{\epsilon}{\epsilon_0}$. Denote by $h^0(\omega; y_0)$ the EXIT function of the code $X_{T_0}$ (note that this is a random function, dependent on values of $y_0$ on a set $T_0^c$). This discussion implies 
\begin{equation}\label{eq:ar_2}
   h^0(\omega; y_0) = \frac{1}{|T_0|} \sum_{i=1}^{n} H(X_i | Y_{\sim i}(\epsilon), Y(\epsilon_0) = y_0 )
\end{equation}
(note that terms corresponding to $i \not \in T_0$ are zero.)
From the area theorem and monotonicity of the EXIT function we obtain
\begin{equation}\label{eq:ar_3}
   h^0(\omega; y_0) (1-\omega) \le \frac{1}{|T_0|} H(X|Y(\epsilon_0)=y_0)\,, 
\end{equation}
where the right-hand side is an effective rate of the code. In all, from~\eqref{eq:ar_1}-\eqref{eq:ar_3} we obtain (after taking expectation over $y_0$)
\begin{equation}\label{eq:ar_4}
    n h(\epsilon) \le \frac{1}{\epsilon_0-\epsilon} H(X|Y(\epsilon_0))\,.
\end{equation}
 So far we have not used the fact that the code is binary linear, but now we will. Let $k(T_0) \le nR$ be the number of unrecoverable information bits given a set $T_0$ of erasures. Notice that 
$$ H(X|Y(\epsilon_0)=y_0) \le k(T_0)\,,$$
and thus taking the expectation, we obtain
$$ H(X|Y(\epsilon_0)) \le \EE[k(T_0)] = 2nR\times \mathrm{BER}(\epsilon_0)\,.$$
Together with~\eqref{eq:ar_4} this completes the proof of part (a).
\item Now let $f:S\mapsto X$ be any code. From rate-distortion we have
	$$ I(S; Y(\epsilon_0)) \ge k (1-h_b(\BER_f(\epsilon_0)))\,,$$
	or, equivalently,
	$$ H(X|Y(\epsilon_0)) \le kh_b(\BER_f(\epsilon_0))\,.$$
	The proof is concluded by~(\ref{eq:ar_4}).
\end{enumerate}
\end{proof}
We are now ready to prove the following consequence of the area theorem.
\begin{proposition}
Let $\epsilon_2<\epsilon_1$.  Let $f$ be a binary code of rate $R$ with $\mathrm{BER}(\epsilon_2)\le \delta_2$. Define
\begin{equation}
\zeta(x,\epsilon_2,\epsilon_1)\triangleq \sup_{\{\epsilon_0:\epsilon_0<\epsilon_2\}}\frac{1}{R}\left(\frac{1}{(\epsilon_1-\epsilon_0)}(R-(1-\epsilon_1)-\epsilon_0\frac{x R}{\epsilon_2-\epsilon_0})-1+R\right).
\label{eq:zeta}
\end{equation}
The following hold:
\begin{enumerate}[(a)]

    \item If $f$ is linear systematic then
\[
\mathrm{BER}(\epsilon_1)\ge \frac{\epsilon_1}{2}\zeta(2\delta_2,\epsilon_2,\epsilon_1).
\]
In particular, if $\mathrm{BER}(\epsilon_2)=o(1)$, then
\[
\mathrm{BER}(\epsilon_1)\ge \frac{\epsilon_1}{2R}\left(\frac{1}{(\epsilon_1-\epsilon_2)}(R-(1-\epsilon_1))-1+R\right)+o(1).
\]
    \item If $f$ is systematic (but possibly non-linear), then
    \[
    \BER(\epsilon_1)\ge \epsilon_1h_b^{-1}\left( \zeta(h_b(\delta_2),\epsilon_2,\epsilon_1)\right).
    \]
\end{enumerate}
\label{prop:two_pt_EXIT_area}
\end{proposition}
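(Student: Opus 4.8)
The plan is to reduce both parts to a single statement about the \emph{systematic} EXIT function, namely a lower bound on $h^{\mathrm{sys}}(\epsilon_1)$, and only at the very end to translate this back into a bound on the data $\BER$; this last translation is the only place where ``linear'' versus ``general'' matters. Since we are after a lower bound on $\BER$, we may freely assume the optimal (bit-MAP) decoder throughout. For a \emph{systematic} code each information bit $S_i=X_i$ ($i\le k$) is revealed directly through its own independent $\BEC(\epsilon)$, so $\PP[S_i\neq\hat S_i]=\epsilon\,P_e^{(i)}$, where $P_e^{(i)}$ is the minimal error probability for estimating $X_i$ from $Y_{\sim i}(\epsilon)$. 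For a \emph{linear} code the conditional law of $X_i$ given $Y_{\sim i}(\epsilon)$ is either a point mass or uniform, whence $P_e^{(i)}=\tfrac12 h_i(\epsilon)$ and, averaging over $i\le k$, $\BER_f(\epsilon)=\tfrac{\epsilon}{2}h^{\mathrm{sys}}(\epsilon)$ (the systematic analogue of~\eqref{eq:pb_heps}). For a \emph{general} code, Fano's inequality gives $h_i(\epsilon)=H(X_i\mid Y_{\sim i}(\epsilon))\le h_b(P_e^{(i)})$, i.e.\ $P_e^{(i)}\ge h_b^{-1}(h_i(\epsilon))$, and averaging over $i\le k$ together with convexity of $h_b^{-1}$ on $[0,1]$ (Jensen) gives $\BER_f(\epsilon)\ge\epsilon\,h_b^{-1}\!\big(h^{\mathrm{sys}}(\epsilon)\big)$.

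\smallskip\noindent\textbf{Lower bounding $h^{\mathrm{sys}}(\epsilon_1)$.}
Fix an arbitrary $\epsilon_0\in[0,\epsilon_2)$. First apply Lemma~\ref{lem:h_nonsys} with its internal parameter ``$\epsilon_0$'' set to our $\epsilon_2$ and its ``$\epsilon$'' set to our $\epsilon_0$: for part (a) this yields $h(\epsilon_0)\le\frac{2R}{\epsilon_2-\epsilon_0}\BER_f(\epsilon_2)\le\frac{2R\delta_2}{\epsilon_2-\epsilon_0}$, and for part (b) it yields $h(\epsilon_0)\le\frac{R}{\epsilon_2-\epsilon_0}h_b(\BER_f(\epsilon_2))\le\frac{R\,h_b(\delta_2)}{\epsilon_2-\epsilon_0}$; writing $x=2\delta_2$ in case (a) and $x=h_b(\delta_2)$ in case (b), in both cases $h(\epsilon_0)\le\frac{xR}{\epsilon_2-\epsilon_0}$. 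Next split the area theorem (Theorem~\ref{thm:area_thm}) as $R=\int_0^{\epsilon_0}h+\int_{\epsilon_0}^{\epsilon_1}h+\int_{\epsilon_1}^{1}h$ and use that the EXIT function is monotone non-decreasing and takes values in $[0,1]$: this gives $\int_0^{\epsilon_0}h\le\epsilon_0 h(\epsilon_0)$, $\int_{\epsilon_0}^{\epsilon_1}h\le(\epsilon_1-\epsilon_0)h(\epsilon_1)$ and $\int_{\epsilon_1}^{1}h\le 1-\epsilon_1$. Rearranging and inserting the bound on $h(\epsilon_0)$ gives
\[
h(\epsilon_1)\ \ge\ \frac{1}{\epsilon_1-\epsilon_0}\Big(R-(1-\epsilon_1)-\epsilon_0\tfrac{xR}{\epsilon_2-\epsilon_0}\Big).
\]
Finally, since the code is systematic, $h(\epsilon_1)=R\,h^{\mathrm{sys}}(\epsilon_1)+(1-R)\,h^{\mathrm{non-sys}}(\epsilon_1)\le R\,h^{\mathrm{sys}}(\epsilon_1)+(1-R)$, so $h^{\mathrm{sys}}(\epsilon_1)\ge\frac1R\big(h(\epsilon_1)-1+R\big)$; combining with the previous display and optimizing over $\epsilon_0\in[0,\epsilon_2)$ yields precisely $h^{\mathrm{sys}}(\epsilon_1)\ge\zeta(x,\epsilon_2,\epsilon_1)$ with $\zeta$ as in~\eqref{eq:zeta}.

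\smallskip\noindent\textbf{Conclusion.}
Plugging $x=2\delta_2$ into the identity $\BER_f(\epsilon_1)=\tfrac{\epsilon_1}{2}h^{\mathrm{sys}}(\epsilon_1)$ proves (a), and plugging $x=h_b(\delta_2)$ into $\BER_f(\epsilon_1)\ge\epsilon_1 h_b^{-1}(h^{\mathrm{sys}}(\epsilon_1))$ proves (b) (with the convention $h_b^{-1}(t)=0$ for $t\le 0$, so the bound is simply vacuous whenever $\zeta\le 0$). The ``in particular'' statements follow by sending $\delta_2\to 0$ and $\epsilon_0\uparrow\epsilon_2$ inside $\zeta$.

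\smallskip
I expect the only delicate points to be the two $\BER$-versus-EXIT conversions in the first step: one must use the systematic structure precisely to view $S_i$ as passing through its own $\BEC$, note that restricting to the optimal decoder is free for a lower bound, and apply Fano together with convexity of $h_b^{-1}$ in the right order. The area-theorem bookkeeping (splitting the integral at $\epsilon_0$ and $\epsilon_1$ and using monotonicity) is, by contrast, routine.
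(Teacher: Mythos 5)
Your proof is correct and follows essentially the same route as the paper: the area theorem with a piecewise-constant majorization of the EXIT curve, Lemma~\ref{lem:h_nonsys} to control $h$ at a point below $\epsilon_2$, the decomposition $h=Rh^{\mathrm{sys}}+(1-R)h^{\mathrm{non\text{-}sys}}$, and the two $\BER$-versus-$h^{\mathrm{sys}}$ conversions (exact for linear codes, Fano plus Jensen for general systematic codes). Your explicit introduction of the auxiliary split point $\epsilon_0<\epsilon_2$ is in fact a slightly cleaner rendering of what the supremum in the definition of $\zeta$ requires, and which the paper's prose glosses over by writing $\epsilon_2$ in place of $\epsilon_0$.
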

\begin{proof}
To prove the lower bound on $h(\epsilon_2)$, we may approximate $h(\epsilon_1)$ in a worst-cast fashion as a piece-wise constant function. To do this, note that  $h(\epsilon)\le h(\epsilon_2)$ for all $\epsilon\le \epsilon_2$, and $h(\epsilon)\le h(\epsilon_1)$ for all $\epsilon\in (\epsilon_2,\epsilon_1]$, and $h(\epsilon)\le 1$ for all $\epsilon>\epsilon_1$. Then the area theorem gives that
\[
1-\epsilon_1+h(\epsilon_1)(\epsilon_1-\epsilon_2)+h(\epsilon_2)\epsilon_2\ge R
\]
We note that 
\begin{equation}
h(\epsilon)=Rh^{\mathrm{sys}}(\epsilon)+(1-R)h^{\mathrm{non-sys}}(\epsilon)
\label{eq:EXIT_decomp}
\end{equation}
Using the above two relations, we have
\[
R h^{\mathrm{sys}}(\epsilon_1)\ge \frac{1}{\epsilon_1-\epsilon_2}(R-(1-\epsilon_1)-h(\epsilon_2)\epsilon_2)-(1-R)h^{\mathrm{non-sys}}(\epsilon_1)
\]
Using $h^{\mathrm{non-sys}}\le 1$, we get
\begin{equation}
h^{\mathrm{sys}}(\epsilon_1)\ge \frac{1}{R(\epsilon_1-\epsilon_2)}(R-(1-\epsilon_1)-h(\epsilon_2)\epsilon_2)-(\frac{1}{R}-1).
\label{eq:hsys_general}
\end{equation}
\begin{enumerate}[(a)]
    \item 
    Applying Lemma \ref{lem:h_nonsys}a to bound $h(\epsilon_2)$ we get from~\eqref{eq:hsys_general}
	\[
	h^{\mathrm{sys}}(\epsilon_1)\ge \zeta(2\delta_2,\epsilon_1,\epsilon_2).
	\]
    The bounds on BER follow from noticing that for a linear systematic code
\[
\mathrm{BER}(\epsilon)=\frac{\epsilon h^{\mathrm{sys}}(\epsilon)}{2}.
\]
\item Applying Lemma \ref{lem:h_nonsys}b to bound $h(\epsilon_2)$ in (\ref{eq:hsys_general}) gives
\begin{align}\label{eq:hsys_gen2}
h^{\mathrm{sys}}(\epsilon_1)\ge \zeta(h_b(\delta_2),\epsilon_1,\epsilon_2).
\end{align}
Let $\tilde X_i = \tilde X_i(Y_{\sim i}(\epsilon_1))$ be the MAP decoder of $X_i$ given $Y_{\sim i}(\epsilon_1)$.
From Fano's and Jensen's inequalities we have 
$$ h^{\mathrm{sys}}(\epsilon_1) = {1\over k} \sum_{i=1}^k H(X_i|Y_{\sim i}(\epsilon_1)) \le {1\over k}\sum_{i=1}^k h_b(\PP[X_i \neq \tilde X_i]) \le h_b\left({1\over k}
\sum_{i=1}^k \PP[X_i \neq \tilde X_i]\right)\,.$$
Now, notice that $\PP[X_i \neq \hat X_i] = \epsilon_1 \PP[X_i \neq \tilde X_i]$ and, thus,
$$  h^{\mathrm{sys}}(\epsilon_1) \le h_b({1\over \epsilon_1} \BER(\epsilon_1))\,.$$
The proof is concluded by applying~\eqref{eq:hsys_gen2}.
\end{enumerate}

\end{proof}

\subsection{Comparison of different two-point converse bounds}

We presented four different two-point converse bounds: Prop.~\ref{prop:twopt_gen} applies to all codes,
Prop.~\ref{prop:two_pt_EXIT_area}b applies to all (non-linear) systematic codes, Prop.~\ref{prop:two_pt_EXIT_area}a
and Theorem~\ref{thm:two_point_converse} apply to linear systematic codes. The difference between the latter two is that
Prop.~\ref{prop:two_pt_EXIT_area} is a consequence of the Area Theorem, while Theorem~\ref{thm:two_point_converse} is
new.

In Fig.~\ref{fig:area_vs_KOP} we compare the two-point converse bounds for
general (non-linear) systematic codes (Prop.~\ref{prop:two_pt_EXIT_area}b) and the one for all possible codes
(Proposition \ref{prop:twopt_gen}). We see that depending on the chosen parameters either bound can be a stronger one. 
We also see that once the error at the anchor point vanishes, the systematic codes exhibit a threshold like behavior. We
do not expect a significant difference between general codes and systematic codes at low rates. The plots suggest that
any systematic linear code achieving low BER at rates closed to capacity will not be degrade gracefully and hence is not
suitable for error reduction. However, we cannot rule out the existence of non-systematic nonlinear codes that would
simultaneously be graceful and almost capacity-achieving.

In Fig.~\ref{fig:comparison_area_thm} we compare two-point converse bounds for linear systematic codes (Theorem \ref{thm:two_point_converse} and
Proposition \ref{prop:two_pt_EXIT_area}a). We see that the bounds of Theorem \ref{thm:two_point_converse} are tighter
and much more stable as $\BER$ moves away from $0$.


\begin{figure}[ht]
\centering
\subfloat[Codes with $R=10/11$ and $\mathrm{BER}(0.09)=0$]{\includegraphics[width=0.5\textwidth]{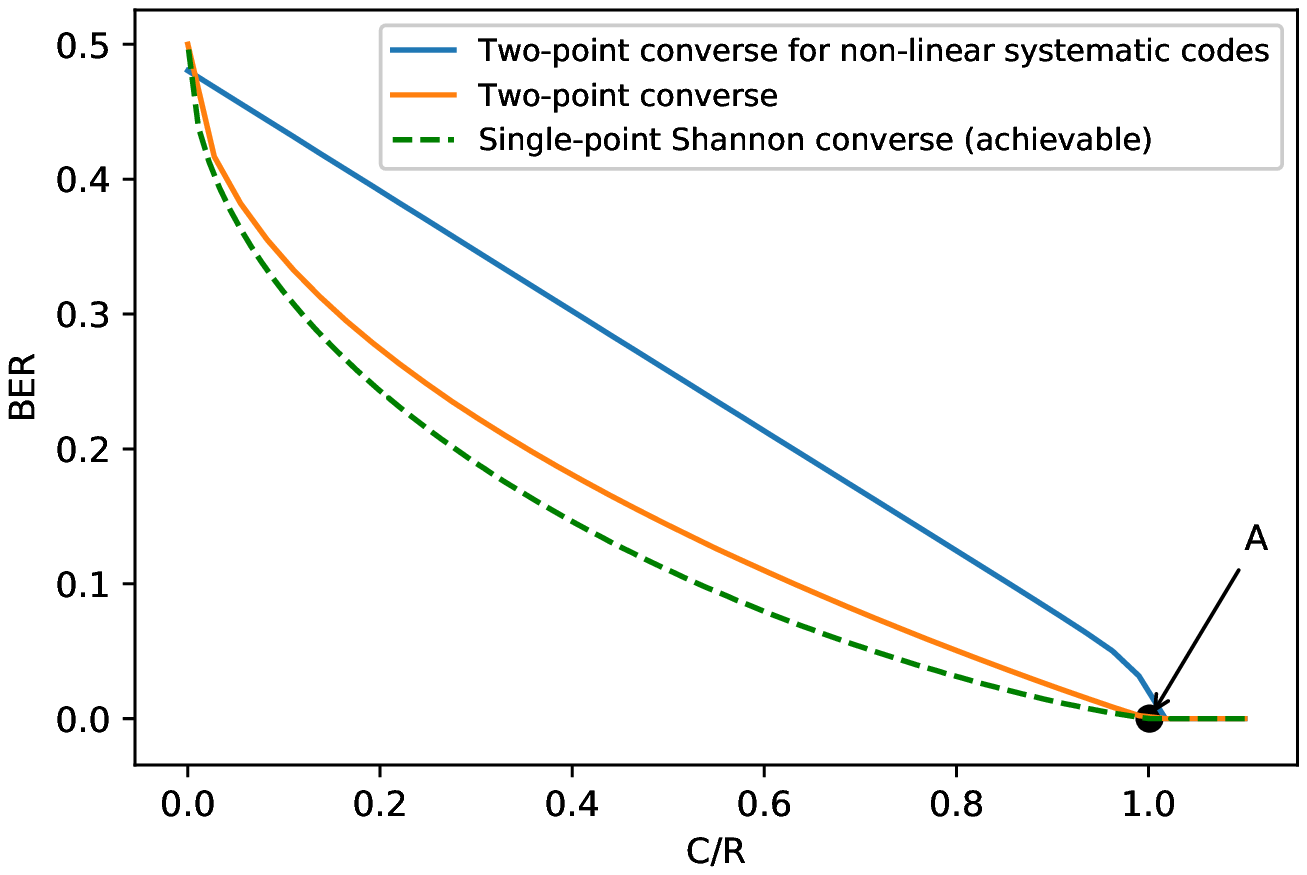}}
\subfloat[Codes with $R=1/20$ and $\mathrm{BER}(0.94995)\le 0.0001$]{\includegraphics[width=0.5\textwidth]{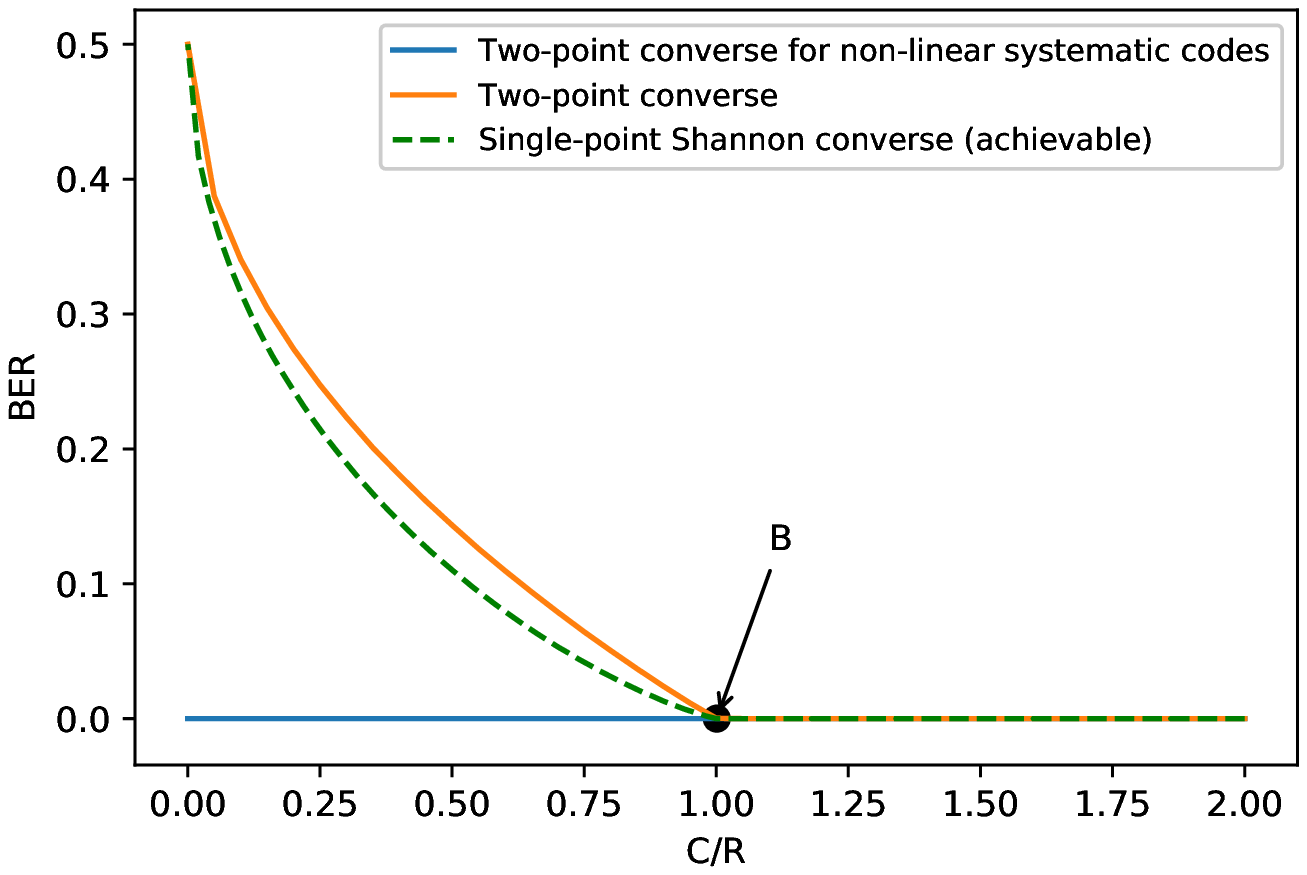}}
\\
\subfloat[Codes with $R=1/20$ and $\mathrm{BER}(0.949995)=0$]{\includegraphics[width=0.5\textwidth]{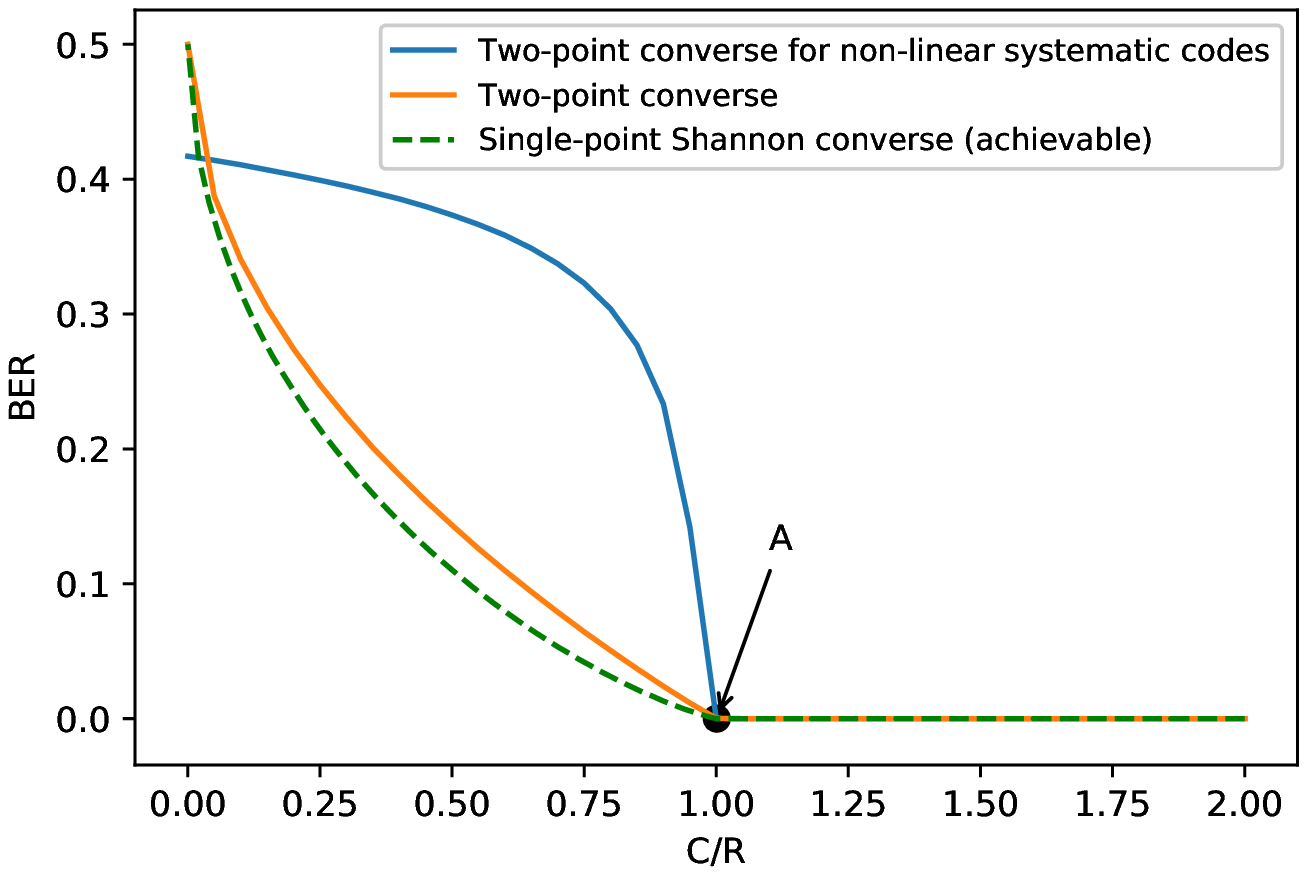}}
\caption{Comparison of the two-point converse for general (Prop.~\ref{prop:twopt_gen}) and systematic codes
(Prop.~\ref{prop:two_pt_EXIT_area}b) of rate $1/2$. Anchor points are $A=(1.001,0.0)$ and $B=(1.001,0.0001)$. }
\label{fig:area_vs_KOP}
\end{figure}

\begin{figure}[ht]
\centering
\subfloat[Lower bounds for codes with $\mathrm{BER}(0.475)=0$]{\includegraphics[width=0.5\textwidth]{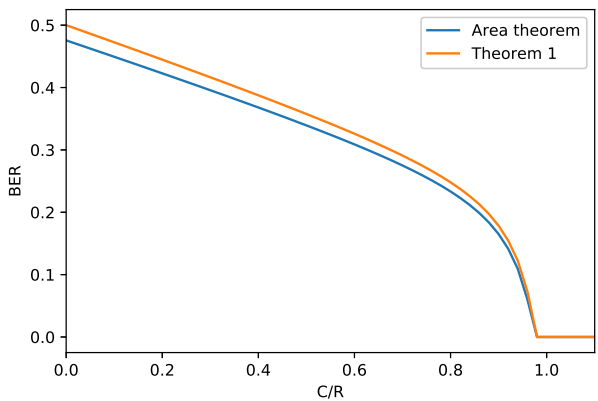}}
\subfloat
[Lower bounds for codes with $\mathrm{BER}(0.495)=0$]{
\includegraphics[width=0.5\textwidth]{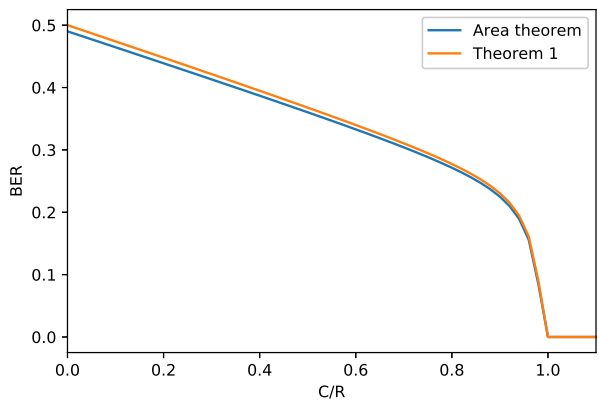}}
\\
\subfloat
[Stability of the bounds around $\mathrm{BER}(0.495)=0$]{
\includegraphics[width=0.5\textwidth]{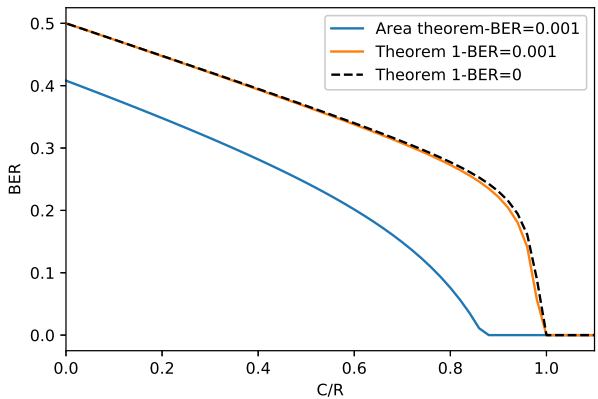}}

\caption{Comparing two-point converse bounds for linear systematic codes (Theorem \ref{thm:two_point_converse} and
Proposition \ref{prop:two_pt_EXIT_area}a). Rate is fixed at $1/2$ satisfying and achor points are a) $\mathrm{BER}=0$ at
$\epsilon=0.475$; b) $\mathrm{BER}=0$ at $\epsilon=0.495$;  c) compares $\BER=0$ vs $\BER=0.001$ constraint at $\epsilon=0.495$. }
\label{fig:comparison_area_thm}
\end{figure}

\section{Channel comparison method for analysis of BP}
\label{sec:channel_comparison}
In this section we provide a new tool for the study of error dynamics under BP for general codes and apply them to provide upper
and lower bounds on the BER of the LDMC codes. Our tools are completely general and apply also to other sparse
graph codes (LDPC, LDGM, LDMC and arbitrary mixtures of these). Furthermore, they will also be adapted (Section~\ref{sec:channel_transform})
to study not only the BER but also the mutual information (for individual bits), which is more relevant if the LDMC
decoder produces soft information for the outer decoder. Finally, any upper bound for the BP implies an upper bound on
the bit-MAP decoding (obviously), but our tools also provide lower bounds on the bit-MAP decoding.

Before going into the details of these tools we review the BP.

\subsection{Review of BP}
We recall the notion of a code ensemble generated by a Boolean function $f:\{0,1\}^m\to \{0,1\}$ from Section \ref{sec:LDMC_construction}. 
We also briefly review the notion of a (bipartite) factor graph associated with a code from the ensemble (cf.\cite{richardson2008modern}, Chapter 2). Consider a code defined on $\{0,1\}^k$. To every coordinate $i\in [k]$, we associate a {\em variable node} and represent it by a circle. We further associate random variables $S_i\stackrel{i.i.d}{\sim}\mathrm{Ber}(1/2)$ to the variable nodes. Likewise, to every subset $\Delta_j\in \Delta$, we associate a {\em check node} and represent it by a square. Every such node represents a constraint of the form $x_{j}=f(S_{\Delta_j})$, where $x_j$'s are the realized (unerased) coded bits and $S_{\Delta_j}$ is the restriction of $S$ to the coordinates in $\Delta_j$. We connect a variable node $i$ to a check node $\Delta_j$ if and only if $i\in \Delta_j$ (see Fig.~\ref{fig:factor_graph}a). We remark that most references associate a separate node with $y_j$'s to model the channel likelihoods \cite{kschischang2001factor,richardson2008modern,wainwright2008graphical}. In the language of \cite{kschischang2001factor}, our description is a cross section of the full factor graph parametrized by $x_j$'s. We do not make this distinction in the sequel as our primary interest is to analyze the decoding error for erasure noise. In this case, we can simply restrict to the sub-graph associated with the observed bits and do not need to consider the channel likelihoods. 

Given a target bit $S_0$, the decoding problem is to estimate (or approximate) the marginal probabilities $p_{S_0|X}(\cdot|x)$ for a realization $x$ of the (observed) coded bits. Here we denote such an estimate by the function $\pi_{S_0}$ and refer to it as a message. A message should be thought of as an approximation to the true marginal computed by the decoder.  To study the behavior of iterative decoding methods, it is helpful to consider the notion of a local neighborhood. Given a target bit $S_i$, we denote by $\Delta(i)$ the set of its neighbor nodes among the factors, that is, the set of check nodes whose constraint involves $S_i$. We further define the local neighborhood $\partial(i)$ among the variable nodes to be the set of variables (other than $i$) that appear in $\Delta(i)$ (see Fig.~\ref{fig:factor_graph}b). Given a vector $v\in \{0,1\}^k$, we define $\partial v_i:=v_{\partial(i)}$ to be the restriction of $v$ to the coordinates in $\partial(i)$. Likewise, if $v\in\{0,1\}^n$, then $\Delta v_{i}:=v_{\Delta(i)}$ denotes the restriction of $v$ to $\Delta(i)$. The  $j$-th node in $\Delta(i)$ is denoted by  $\Delta_j(i)$. The variable nodes other than $i$ that are connected to $\Delta_j$ are denoted by $\partial_j(i)$. Similarly, we define the $j$-th order local neighborhood $\partial^j(i)$ by recursively unfolding the local neighborhoods at the boundary $\partial^{j}(i):=\partial(\partial^{j-1}(i))-\partial^{j-1}(i)$. In other words, the $\ell$-th order boundary is the set of nodes (not in $\partial^{j-1}(i)$) that are in the local neighborhood of $\partial^{j-1}(i)$. Likewise, $\Delta^j(i):=\Delta(\partial^{j-1}(i))-\Delta^{j-1}(i)$ (see Fig.~\ref{fig:comp_trees}a). The compliment of $\Delta(i)$ inside $\Delta$ is denoted by $\Delta^\sim(i)$. Finally, we define $\Delta^{(j)}(0):=\cup_{i=1}^j\Delta^i(0)$. 

With this notation we can describe a generic iterative algorithm to compute $\pi_{S_i}$. Let $\pi_{\partial S_0}$ be the message (or approximation) for $p_{\partial S_0|{\Delta^\sim X_0}}$. This is the conditional estimate of the random variables in the local neighborhood of $S_0$ given all the observed bits outside the neighborhood. By graph separation, the computation of the marginals for $S_0$ can be decomposed as follows:
\begin{equation}
\pi_{S_0}(s_0)= \sum_{\partial s_0}\pi_{\partial S_0}(\partial s_0)p_{S_0|\partial s_0}(s_0|\partial s_0)\propto \sum_{\partial x_0} \pi_{\partial  S_0}(\partial s_0)\prod_{\Delta_j\in \Delta(0)}\ind_{\{y_j=f(x_0,\partial_j x_{0})\}}.
\label{eq:decomposed_posterior}
\end{equation}
 In this way, we obtain an iterative procedure where the messages $\pi_{\partial S_0}$ flow into the local neighborhood and the posterior estimates $\pi_{S_0}$ flow out to the target node (see Fig.~\ref{fig:factor_graph}b). To iterate such a procedure $\ell$-times, one needs to first approximate the marginals at the $\ell$-th order boundary. Once this is done, (\ref{eq:decomposed_posterior}) can be applied iteratively to compute $\pi_{S_0}$. The factor (sub-)graph obtained after $\ell$ unfoldings represents the natural order of recursive computations needed to compute $\pi_{S_0}$, and hence, we refer to it as a {\em computational graph}. Fig.~\ref{fig:comp_trees} shows the case where the computational graph is a tree.  

Belief propagation (BP) is a special case of such iterative procedure where the input messages are assumed to factorize into a product
\[
\pi_{\partial S_0}=\prod_{i}\pi_{\partial_i S_0}.
\]

 The number of iterations of BP determine the depth of the computational graph, i.e., the order of the local neighborhood on which we condition. We denote by $\pi^\ell$ the message corresponding to $p_{S_0|\Delta^{(\ell)}X_0}$. This is the approximate marginal given observations revealed in the computational graph of depth $\ell$. After $\ell$ iterations, the marginals under BP can be written 
more efficiently (compared with  (\ref{eq:decomposed_posterior})) as

\begin{equation}
\pi^{\ell}_{S_0}(s_0)\propto  \prod_{\Delta_j\in\Delta(0)} \sum_{\partial_j s_0} \pi^{\ell-1}_{\partial_j S_0}(\partial_j s_0)\ind_{\{x_j=f(s_0,\partial_j s_{0})\}},
\label{eq:decomposed_posterior_BP}
\end{equation}
with the initial conditions $\pi^0_{S_i}(0)=\pi^0_{S_i}(1)=1/2$ for all bits.

It can be checked that when the computational graph is a tree, BP is exact, i.e., it computes the correct marginals $p_{S_0|\Delta^{(\ell)}X}$ given the observations in the depth $\ell$ graph. We also refer to the correct marginal $p_{S_0|X}$ as the (bitwise) MAP estimate of $S_0$. When the computational graph is a tree, the only difference between MAP and BP estimates is the input messages into the $\ell$-th order local neighborhood. In other words, if the initial messages along the boundary are the correct marginals $p_{\partial^{l}S_0|\Delta^{\sim (\ell)}X_0}$, then BP iterations recover the (bitwise) MAP estimate. Here $\Delta^{\sim (\ell)}$ is the set of check nodes in $\Delta$ that do no appear in the computational tree of depth $\ell$.

\subsection{Channel comparison method}\label{sec:channel_comp_method}

The key problem in analyzing the BP algorithm is that the incoming messages after a few iterations have a very
complicated distribution. Our resolution is to apply channel comparison methods from information theory to replace these
complicated distributions with simpler ones on each iteration. This way BP operation only ever acts on a simple
distribution and hence we can apply single-step contraction analysis to figure out asymptotic convegence. (This is
reminiscent of ``extremes of information combining'' method in the LDPC literature, cf.~\cite[Theorem 4.141]{richardson2008modern} -- see
Remark~\ref{rmk:extinfo} for discussion.) 

We start with reviewing some key information-theoretic notions. 

\begin{definition}[{~\cite[\S 5.6]{el2011network}}]
Given two channels $P_{Y|X}$ and $P_{Y\sp{\prime}|X}$ with common input alphabet, we say that $P_{Y\sp{\prime}|X}$ is
\begin{itemize}
    \item {\em less noisy} than $P_{Y|X}$, denoted by $P_{Y|X} \preceq_\mathrm{l.n.} P_{Y\sp{\prime}
|X}$, if for all joint distributions $P_{UX}$ we have
\[
I(U;Y)\le I(U;Y')
\]
\item {\em more capable} than $P_{Y|X}$, denoted by $P_{Y|X}\preceq_{\mathrm{m.c.}} P_{Y\sp{\prime}
|X}$, if for all marginal distributions $P_{X}$ we
have
$$I(X; Y ) \le I(X; Y\sp{\prime}).$$ 
\item {\em less degraded} than $P_{Y|X}$, denoted by $P_{Y|X}\preceq_{\mathrm{deg}}P_{Y\sp{\prime}|X}$, if there exists a Markov chain $Y-Y'-X$.
\end{itemize}
\end{definition}
We refer to 
~\cite[Sections I.B, II.A]{makur2018comparison} and~\cite[Section 6]{polyanskiy2017strong} for alternative useful characterizations of the less-noisy order. In particular, it is known (cf. ~\cite[Proposition 14]{polyanskiy2017strong},\cite{korner1977comparison}) that
\begin{equation}
P_{Y|X} \preceq_\mathrm{l.n.} P_{Y\sp{\prime}
|X} \iff D(P_{Y}\| Q_Y )\le D(P_{Y\sp{\prime}}\| Q_{Y\sp{\prime}})
\label{eq:less_noisy_kl}
\end{equation}
where the output distributions correspond to common priors $P_X,Q_X$. The following implications are easy to check  
\[
P_{Y|X}\le_{\mathrm{deg}}P_{Y\sp{\prime}|X}\implies P_{Y|X}\le_{\mathrm{l.n.}}P_{Y\sp{\prime}|X}\implies P_{Y|X}\le_{\mathrm{m.c.}}P_{Y\sp{\prime}|X}. 
\]
Counter examples for reverse implications are well known,~\cite[Problem 15.11]{csiszar2011information}, and even
possible in the class of BMS channels as follows from Lemma~\ref{lem:lemma_A} below. Nevertheless, we give another
example which will be instructive for the later discussion after Lemma~\ref{lem:lemma_B}.
\begin{example}
Fix $\delta \in (0,1/2)$ and $\epsilon < h_b(\delta)$. 
By Lemma \ref{lem:lemma_A} (below),  $\BEC_{\epsilon}$ is more capable than $\BSC_\delta$. Let $X_1,X_2$ be some
binary random variables (not necessarily independent or unbiased) and $\tilde Y_i=\BEC_{\epsilon}(X_i)$, $Y_i=\BSC_{\delta}(X_i)$ be their observations. By
~\cite[Problem 6.18]{csiszar2011information}, the property of being more capable tensorizes,
implying that
\begin{equation}\label{eq:bcc_1}
	I(X_1, X_2; \tilde Y_1, \tilde Y_2) > I(X_1, X_2; Y_1, Y_2)\,.
\end{equation}
Somewhat counter-intuitively, however, there exists functions of $(X_1,X_2)$ for which this inequality is reversed
(demonstrating thus, that $\BEC_\epsilon \not \succeq_{l.n.}\BSC_\delta$). Indeed, consider $X_1 \dperp X_2 \sim \Ber(1/2)$ and $U=\mathrm{XOR}(X_1,X_2)$. Then a simple computation shows
$$ I(U;Y_1,Y_2)=1-h_b(2\delta(1-\delta)), \qquad I(U; Y_1, Y_2) = (1-\epsilon)^2\,.$$
Since $h_b(\delta) > 1-\sqrt{1-h_b(2\delta(1-\delta))}$ by taking $\epsilon$ in between these two quantities we
ensure that both~\eqref{eq:bcc_1} and the following hold
	$$ I(U; \tilde Y_1, \tilde Y_2) < I(U; Y_1, Y_2)\,.$$
\label{ex:counter_ex}
\end{example}

\begin{definition}[{BMS ~\cite[\S 4.1]{richardson2008modern}}] Let $W$ be a memoryless channel with binary input
alphabet $\mathcal{X}$ and output alphabet $\mathcal{Y}$. We say that $W$ is a binary memoryless symmetric channel
($\BMS$) if there exists a measurable involution of $\mathcal{Y}$, denoted $-$, such that $W(y|0)=W(-y|1)$ for all $y\in
\mathcal{Y}$.
\end{definition}

We also define the total variation distance (TV) and  $\chi^2$-divergence between two probability measures $P$ and $Q$ as follows:
\begin{align*}
\TV(P,Q)&\triangleq \frac{1}{2}\int |dP-dQ|,\\
\chi^2(P\| Q)&\triangleq\int (\frac{dP}{dQ})^2dQ-1.
\end{align*}
For an arbitrary pair of random variables we define
$$ I_{\chi^2}(X;Y) = \chi^2(P_{X,Y} \| P_X \otimes P_Y)\,,$$
where $P_X \otimes P_Y$ denotes the joint distribution on $(X,Y)$ under which they are independent.

Let $W$ be a $\BMS$ channel, $X\sim \Ber(1/2)$ and $Y=W(X)$ be the output induced by $X$. We
define $W$'s probability of error, capacity, and $\chi^2$-capacity as follows
	\begin{align}
		P_e(W) &= \frac{1-\TV(W(\cdot|0), W(\cdot|1))}{ 2},\\
	   C(W) &= I(X;Y)\\
	   C_{\chi^2}(W) &= I_{\chi^2}(X;Y)\,.
	\end{align}
The reason for naming $C$ and $C_{\chi^2}$ capacity is because of their extremality properties\footnote{Both statements
can be shown by first explicitly checking the case of $W=\BSC_\delta$ and then by representing a $\BMS$ as a mixture of
BSCs.}
	$$ C = \sup_{P_X} I(X;Y)\,, \quad C_{\chi^2} = \sup_{P_X} I_{\chi^2}(X;Y)\,.$$
For the BMS channels there is the following comparison ($C(W)$ is measured with $\log$ units to arbitrary base):
	$$ {\log e\over 2} C_{\chi^2}(W) \le C(W) \le C_{\chi^2}(W) \log 2\,. $$
We also mention that for any pair of channels $W\le_{ln}W'$ we always have $I(X;W(X))\le I(X;W'(X))$ and
$I_{\chi^2}(X;W(X)) \le I_{\chi^2})(X; W'(X))$, and in particular $C(W)\le C(W')$ and $C_{\chi^2}(W)\le C_{\chi^2}(W')$.

\begin{lemma}\label{lem:lemma_A}
  The following holds:
  \begin{enumerate}
  \item Among all $\BMS$ channels with the same value of $P_e(W)$ the least
  degraded is $\BEC$ and the most degraded is $\BSC$, i.e.
\begin{equation}\label{eq:la_deg}
    	\BSC_{\delta} \preceq_{deg} W \preceq_{deg} \BEC_{2\delta}\,,
\end{equation}    
  where $\preceq_{deg}$ denotes the (output) degradation order. 

  \item Among all $\BMS$ with the same capacity $C$ the most capable is $\BEC$ and the least capable is
  $\BSC$, i.e.:
\begin{equation}\label{eq:la_mc}
    	\BSC_{1-h_b^{-1}(C)} \preceq_{mc} W \preceq_{mc}  \BEC_{1-C}\,,
\end{equation}    
  where $\preceq_{mc}$ denotes the more-capable order, and $h_b^{-1}:[0,1]\to[0,1/2]$ is the functional inverse of the (base-2) binary entropy function $h_b:[0,1/2]\to[0,1]$.

  \item Among all $\BMS$ channels with the same value of $\chi^2$-capacity $\eta=I_{\chi^2}(W)$ 
  the least noisy is $\BEC$ and the most noisy is $\BSC$, i.e.
\begin{equation}\label{eq:la_ln}
  	\BSC_{1/2-\sqrt{\eta}/2} \preceq_{ln} W \preceq_{ln} \BEC_{1-\eta}\,, 
\end{equation}  
  where $\preceq_{ln}$ denotes the less-noisy order.
  \end{enumerate}
\end{lemma}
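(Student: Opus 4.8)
\emph{Proof proposal.} The plan is to treat all three parts by one device: replace $W$ by its canonical \emph{$\BSC$-decomposition}, express the constrained functional ($P_e$, $C$, or $C_{\chi^2}$) and the quantity to be compared as expectations over the decomposition measure, and finish with a scalar Jensen-type inequality. Recall that every $\BMS$ channel $W$ is equivalent (mutually degraded) to the channel $X\mapsto(\Theta,\BSC_\Theta(X))$, where $\Theta\in[0,1/2]$ has a law $\mu=\mu_W$ read off from the magnitude of the log-likelihood ratio and $\Theta$ is disclosed at the output; equivalently $W=\EE_{\Theta\sim\mu}[\BSC_\Theta]$ with the mixing variable observed. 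Since mutual degradation preserves output $D(\cdot\|\cdot)$, $I(\cdot;\cdot)$ and $I_{\chi^2}(\cdot;\cdot)$, every comparison below may be carried out in this representation, and (each identity being a one-line computation for $W=\BSC_p$, cf. the footnote above) one gets $P_e(W)=\EE_\mu[\Theta]$, $C(W)=1-\EE_\mu[h_b(\Theta)]$, $C_{\chi^2}(W)=\EE_\mu[(1-2\Theta)^2]$; thus the three parts fix, respectively, $\EE_\mu[\Theta]$, $\EE_\mu[h_b(\Theta)]$, and $\EE_\mu[(1-2\Theta)^2]$.

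\emph{Part 1 (degradation, $P_e(W)=\delta$).} For $\BSC_\delta\preceq_{deg}W$, post-compose $W$ with its bit-MAP decoder under the uniform prior (ties broken by a fair coin): by symmetry of $W$ this is a symmetric binary channel, hence a $\BSC$, whose crossover equals the Bayes error $\frac{1}{2}(1-\TV(W(\cdot|0),W(\cdot|1)))=\delta$. For $W\preceq_{deg}\BEC_{2\delta}$, use instead the Hahn--Jordan decomposition $W(\cdot|0)-W(\cdot|1)=\lambda_+-\lambda_-$; then $\|\lambda_+\|=\|\lambda_-\|=1-2\delta$, and the identity $W(\cdot|0)-\lambda_+=W(\cdot|1)-\lambda_-$ lets one write $W(\cdot|x)=(1-2\delta)\nu_x+2\delta\rho$ with $\nu_0=\lambda_+/(1-2\delta)$, $\nu_1=\lambda_-/(1-2\delta)$ mutually singular probability measures and $\rho$ a probability measure \emph{not depending on $x$}. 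The stochastic map sending a revealed bit $x$ of $\BEC_{2\delta}$ to $\nu_x$ and the erasure symbol to $\rho$ then realizes $W$ as a degradation of $\BEC_{2\delta}$.

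\emph{Part 2 (more capable, $C(W)=C$).} For $X\sim\Ber(\alpha)$ the representation gives $I(X;W(X))=\EE_\mu[h_b(\alpha\ast\Theta)]-\EE_\mu[h_b(\Theta)]=\EE_\mu[g_\alpha(h_b(\Theta))]-(1-C)$, where $g_\alpha(v)\triangleq h_b(\alpha\ast h_b^{-1}(v))$ and $\EE_\mu[h_b(\Theta)]=1-C$. Mrs.\ Gerber's lemma is exactly the assertion that $g_\alpha$ is convex on $[0,1]$, with $g_\alpha(0)=h_b(\alpha)$, $g_\alpha(1)=1$. Then (i) Jensen gives $\EE_\mu[g_\alpha(h_b(\Theta))]\ge g_\alpha(\EE_\mu[h_b(\Theta)])=g_\alpha(1-C)$, so $I(X;W(X))\ge h_b(\alpha\ast\delta_C)-h_b(\delta_C)$ for $\delta_C$ with $C(\BSC_{\delta_C})=C$, i.e.\ $\BSC_{\delta_C}\preceq_{mc}W$; and (ii) a convex function lies below the chord through its endpoints, so $g_\alpha(v)\le h_b(\alpha)+(1-h_b(\alpha))v$, whence $\EE_\mu[g_\alpha(h_b(\Theta))]\le h_b(\alpha)+(1-h_b(\alpha))(1-C)$ and $I(X;W(X))\le h_b(\alpha)\,C=I(X;\BEC_{1-C}(X))$, i.e.\ $W\preceq_{mc}\BEC_{1-C}$.

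\emph{Part 3 (less noisy, $C_{\chi^2}(W)=\eta$).} By~\eqref{eq:less_noisy_kl} it suffices to compare $D(V\!\cdot\!P_X\|V\!\cdot\!Q_X)$ over binary priors $P_X=\Ber(\alpha)$, $Q_X=\Ber(\beta)$. In the $(\Theta,\BSC_\Theta)$-representation the chain rule gives $D(W\!\cdot\!P_X\|W\!\cdot\!Q_X)=\EE_\mu[\Phi_{\alpha,\beta}((1-2\Theta)^2)]$, where $\Phi_{\alpha,\beta}(s)\triangleq D(\Ber(\alpha\ast p_s)\|\Ber(\beta\ast p_s))$ and $p_s\triangleq\frac{1}{2}(1-\sqrt s)$; here $\Phi_{\alpha,\beta}(0)=0$, $\Phi_{\alpha,\beta}(1)=D(\Ber(\alpha)\|\Ber(\beta))$, while the same quantities for $\BSC_{(1-\sqrt\eta)/2}$ and $\BEC_{1-\eta}$ equal $\Phi_{\alpha,\beta}(\eta)$ and $\eta\,\Phi_{\alpha,\beta}(1)$. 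Thus both extremes follow from a single fact: for every $\alpha,\beta$ the map $s\mapsto\Phi_{\alpha,\beta}(s)$ is \emph{convex} on $[0,1]$ --- a $\chi^2$-analogue of Mrs.\ Gerber's lemma. Indeed, Jensen then gives $\EE_\mu[\Phi_{\alpha,\beta}((1-2\Theta)^2)]\ge\Phi_{\alpha,\beta}(\eta)$, i.e.\ $\BSC_{(1-\sqrt\eta)/2}\preceq_{ln}W$; and convexity with $\Phi_{\alpha,\beta}(0)=0$ forces $\Phi_{\alpha,\beta}(s)\le s\,\Phi_{\alpha,\beta}(1)$ on $[0,1]$, so $\EE_\mu[\Phi_{\alpha,\beta}((1-2\Theta)^2)]\le\eta\,\Phi_{\alpha,\beta}(1)$, i.e.\ $W\preceq_{ln}\BEC_{1-\eta}$. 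Proving this convexity is the main obstacle: I would either reduce it (using $\Phi_{\alpha,\beta}=\Phi_{1-\alpha,1-\beta}$ together with a monotonicity/limiting argument) to a one-parameter inequality checked via $\partial^2\Phi_{\alpha,\beta}/\partial s^2$, or decompose the numerator channels $\BSC_{\alpha\ast p_s}$, $\BSC_{\beta\ast p_s}$ through their own $\BSC$-representations so that $\Phi_{\alpha,\beta}$ becomes a nonnegative combination of elementary convex functions of $s$. That $C_{\chi^2}$ is the conserved quantity here is natural, since $s=(1-2p)^2=C_{\chi^2}(\BSC_p)$ is precisely the variable in which this divergence is convex.
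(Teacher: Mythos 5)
Your Parts 1 and 2 are correct and essentially coincide with the paper's treatment: Part 1 is the standard hard-decision/erasure-decomposition argument (the paper simply cites it as known), and Part 2 is exactly the paper's argument --- represent $W$ as a $\Theta$-mixture of BSCs with $\Theta$ disclosed, invoke Mrs.\ Gerber's lemma for convexity of $h_b(\alpha\ast\delta)$ against $h_b(\delta)$, then Jensen for the BSC side and the chord bound for the BEC side. (Minor note: your $\delta_C$ with $C(\BSC_{\delta_C})=C$, i.e. $\delta_C=h_b^{-1}(1-C)$, is the correct parameter; the subscript as printed in the lemma statement appears to be a typo.)

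In Part 3, however, there is a genuine gap. You correctly reduce both inequalities to a single claim --- that $s\mapsto\Phi_{\alpha,\beta}(s)=d(\alpha\ast p_s\,\|\,\beta\ast p_s)$ with $p_s=\tfrac12(1-\sqrt s)$ is convex on $[0,1]$ --- and your derivation of both extremes from that claim (Jensen on one side, the chord through $\Phi_{\alpha,\beta}(0)=0$ on the other) is exactly how the paper proceeds; this is precisely what the paper calls ``a variant of the MGL.'' But you then stop, naming the convexity as ``the main obstacle'' and offering two unexecuted routes. That convexity \emph{is} the theorem: it is where essentially all of the paper's work in Part 3 goes. The paper proves it by brute force: substitute $x=(1-2\delta)^2$, compute $\partial_x^2 d(x;p,q)\ge 0$, which after the further substitution $u=1-2s$, $v=1-2\sigma$ reduces to a two-variable inequality $f(u,v)\ge 0$ on $[0,1]^2$; one checks the boundary directly and shows the only interior stationary points lie on the diagonal $u=v$, where $f$ vanishes identically, so the minimum is $0$. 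Neither of your proposed alternatives is obviously viable --- in particular, decomposing $\BSC_{\alpha\ast p_s}$ and $\BSC_{\beta\ast p_s}$ through their own BSC-representations does not by itself produce a nonnegative combination of convex functions of $s$, since the two numerator and denominator channels must be coupled through the same $s$. Until the convexity of $\Phi_{\alpha,\beta}$ is actually established, Part 3 is not proved.
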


The next lemma states that if the incoming messages to BP are comparable, then the output messages are comparable as well. 
\begin{lemma}\label{lem:lemma_B}
Fix some random transformation $P_{Y|X_0,X_1^m}$ and $m$
  $\BMS$ channels $W_1,...,W_m$. Let $W: X_0\mapsto (Y,Y_1^m)$ be a (possibly non-$\BMS$) channel 
  defined as follows. First, $X_1,..., X_m$ are generated as i.i.d $\Ber(1/2)$. Second, 
  each $Y_j$ is generated as an observation of $X_j$ over the $W_j$, i.e. $Y_j=W_j(X_j)$ (observations are all
  conditionally independent given $X_1^m$). Finally, $Y$
  is generated from all $X_0,X_1^m$ via $P_{Y|X,X_1^m}$ (conditionally independent of $Y_1^m$ given $X_1^m$). Define the $\tilde W$ channel similarly, but with $W_j$'s replaced with $\tilde W_j$'s. The following statements hold:
  \begin{enumerate}
   \item If $\tilde W_j \preceq_{deg} W_j$ then $\tilde W \preceq_{deg} W$
   \item If $\tilde W_j \preceq_{ln} W_j$ then $\tilde W \preceq_{ln} W$
   \end{enumerate}
\end{lemma}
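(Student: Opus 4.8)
The plan is to prove the two parts separately; part~1 is immediate, and part~2 is the substantive one.

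\textbf{Part 1 (degradation).} Since $\tilde W_j \preceq_{deg} W_j$, there is a channel $V_j$ with $\tilde W_j = V_j \circ W_j$. Let $V$ be the channel acting on $(Y, Y_1,\dots,Y_m)$ that leaves $Y$ untouched and passes each $Y_j$ through an independent copy of $V_j$. Then $V \circ W = \tilde W$: running $W$ generates $X_1^m \sim \Ber(1/2)^{\otimes m}$, then $Y_j = W_j(X_j)$ and $Y \sim P_{Y|X_0,X_1^m}$, with all components conditionally independent given $(X_0,X_1^m)$; applying $V$ replaces $Y_j$ by $V_j(W_j(X_j))$, which conditioned on $X_j$ has law $\tilde W_j(\cdot\mid X_j)$, and the conditional‑independence structure given $(X_0,X_1^m)$ survives because the $V_j$'s use fresh, mutually independent randomness. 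This is precisely the generative description of $\tilde W$, so $\tilde W = V\circ W \preceq_{deg} W$.

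\textbf{Part 2 (less noisy).} The trick of Part~1 fails here because the less‑noisy order is not preserved by post‑processing through a fixed channel in general (cf.\ Example~\ref{ex:counter_ex}). Instead I would swap the channels one coordinate at a time: define $W^{(\ell)}$ to be the combined channel built from $\tilde W_1,\dots,\tilde W_\ell, W_{\ell+1},\dots,W_m$, so $W^{(0)}=W$ and $W^{(m)}=\tilde W$; it suffices to prove $W^{(\ell)} \preceq_{ln} W^{(\ell-1)}$ for each $\ell$ and then conclude by transitivity of $\preceq_{ln}$. Fix $\ell$ and write the output of $W^{(\ell-1)}$ as $(A,Y_\ell)$ and that of $W^{(\ell)}$ as $(A,\tilde Y_\ell)$, where $A$ collects $Y$ together with all outputs at coordinates $i\neq\ell$. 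The joint law of $\big(U,X_1^m,A\big)$ is the same in both worlds for any $U$ with $U-X_0-(\text{all outputs})$, because $A$ never involves coordinate $\ell$'s channel output. Hence, using $I(U;A,Y_\ell)=I(U;A)+I(U;Y_\ell\mid A)$ and the analogous identity with $\tilde Y_\ell$, it is enough to show $I(U;\tilde Y_\ell\mid A)\le I(U;Y_\ell\mid A)$, which I obtain by conditioning on $A=a$ and averaging. The key point is that, conditionally on $A=a$, the triple $(U,X_\ell,Y_\ell)$ still forms a Markov chain $U-X_\ell-Y_\ell$ with transition kernel $W_\ell$ (and $(U,X_\ell,\tilde Y_\ell)$ one with kernel $\tilde W_\ell$): this holds because $Y_\ell = W_\ell(X_\ell)$ is produced using channel noise independent of $(X_0,X_1^m,U)$ and of all other channel noises, so $Y_\ell \dperp (U,A)\mid X_\ell$. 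Consequently, for $P_A$-a.e.\ $a$, the inequality $I(U;\tilde Y_\ell\mid A=a)\le I(U;Y_\ell\mid A=a)$ is exactly an instance of the defining property of $\tilde W_\ell \preceq_{ln} W_\ell$, applied to the joint input law $P_{U,X_\ell\mid A=a}$. Here it is essential that the less‑noisy order quantifies over \emph{all} joint distributions $P_{UX}$, not merely over all priors $P_X$ (which would give only the more‑capable order), since $P_{U,X_\ell\mid A=a}$ need not be a product. Averaging over $a\sim P_A$ yields $I(U;\tilde Y_\ell\mid A)\le I(U;Y_\ell\mid A)$, hence $W^{(\ell)}\preceq_{ln}W^{(\ell-1)}$, and telescoping finishes the proof. (One can alternatively phrase the single swap as the fact that $\preceq_{ln}$ tensorizes and is closed under pre‑composition with an arbitrary input channel; the argument above is essentially a proof of that tensorization.)

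\textbf{Main obstacle.} The delicate step is Part~2: one must not imitate Part~1 with a fixed post‑processing map, and instead verify that conditioning on the remaining observations $A$ leaves the single‑coordinate Markov structure $U-X_\ell-Y_\ell$ intact (it does, by independence of coordinate $\ell$'s channel noise from everything else), which is exactly what licenses invoking the ``all joint $P_{UX}$'' form of the less‑noisy definition, cf.~\eqref{eq:less_noisy_kl} and Example~\ref{ex:counter_ex}.
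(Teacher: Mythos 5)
Your proof is correct and follows essentially the same route as the paper's: part 1 via an explicit degradation map, and part 2 via a one-coordinate-at-a-time swap that reduces to $I(U;\tilde Y_\ell\mid A)\le I(U;Y_\ell\mid A)$ and then invokes the less-noisy definition in the conditional universe $A=a$, which is exactly the paper's inductive argument. The only cosmetic difference is that you justify the conditional Markov structure $U-X_\ell-Y_\ell$ given $A$ by direct independence of the coordinate-$\ell$ channel noise, whereas the paper verifies the same fact through a d-separation/DAG argument and an explicit factorization of $p_{UY_1\tilde Y_1|X_1YY_2^m}$.
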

\begin{remark}
An analogous statement for more capable channels does not hold. To see this, let $Y=X_0+X_1+X_2 \mod 2$.
Then the channel $X_0\mapsto (Y,Y_1,Y_2)$ is equivalent to $U\mapsto (Y_1,Y_2)$ in the setting of example
\ref{ex:counter_ex}, thus implying that $I(X_0; Y, Y_1,Y_2)$ decreases while replacing $W_j$ with more capable channels.
We pose the following \textit{open question:} In the setup of Lemma \ref{lem:lemma_B}, assume further that
$Y=f(X_0,X_1^m)$ where $f:\{0,1\}^{m+1}\to \{0,1\}^r$. For what class of functions can the above lemma be extended to
more capable channels? Fig.~\ref{fig:soft_info_bounds} suggests that it holds for the case where each of the $r$
components of $f$ is a majority.
\label{remark:mc_vs_ln}
\end{remark}

The lemmas are proved in Appendix~\ref{apx:complem}. Equipped with them we can prove rigorous upper/lower bounds on the
BER and mutual information -- this will be executed in Propositions~\ref{prop:approx_gives_lower_bound} and~\ref{prop:approx_gives_lower_bound_soft} below. Here we wanted to
pause and discuss several issues pertaining to the generality of the method contained in
Lemmas~\ref{lem:lemma_A}-\ref{lem:lemma_B}.

\begin{remark}[LDPC as LDGM] In the formulation of Lemma~\ref{lem:lemma_B} we enforced that ``source'' bits
$X_0,X_1,\ldots,X_m$ be independent. This is perfectly reasonable for the analysis of the LDMC and LDGM codes, for which
we take $Y$ as a noisy observation of the majority, or XOR of $X_0,\ldots,X_m$. 
However, for the LDPC codes the ``source'' bits are not independent -- rather they are
chosen uniformly among all solutions to a parity-check $X_0+X_1+\cdots+X_m = 0 \mod 2$. This, however, can be easily
modeled by taking $X_0,X_1,\ldots,X_m$ independent $\Ber(1/2)$ and defining $Y=X_0+X_1+\cdots+X_m \mod 2$. In other
words, codes such as LDPC which restrict the possible input vectors can be modeled as LDGM codes with noiseless
observations of some outputs. In this way, our method applies to codes defined by sparse constraints.

Similarly, although Lemma~\ref{lem:lemma_B} phrased in terms of a ``factor node'', it can also be applied to a
``variable node'' processing by modeling the constraint $X_1=\cdots=X_m=X_0$ as a sequence of noiseless observations
$X_1+X_0, X_m+X_0$.
\end{remark}

\begin{remark}[Relation to ``extremes of information combining'']\label{rmk:extinfo} In the original LDPC literature it
was understood that the analysis of the BP performance of sparse linear codes (LDGM/LDPC) can be done rigorously
by a method known as density evolution. However, analytically implementing the method is only feasible over the erasure
channel, leading to a notorious open question of whether LDPCs can attain capacity of the BSC. However, several
techniques were invented for handling this difficulty. All of these are based on replacing the complicated $\BMS$
produced after a single iteration of BP by a BEC/BSC with the same value of some information parameter. The key
difference with our results is that all of the previous work focused on a very \textit{special case} of
Lemma~\ref{lem:lemma_B} where observation $Y$ is obtained as a noisy (or noiseless, see previous remark) observation of
the sum $X_0+\cdots+X_m \mod 2$. As such, those methods do not apply to LDMCs. 

Specifically, the first such method~\cite[Theorem 4.2]{khandekar2003graph} considered parameter known as Bhattacharya distance $Z(W) =
\sum_y \sqrt{W(y|0)W(y|1)}$. Then it can be shown that a (special case of) of Lemma~\ref{lem:lemma_B} holds for it --
see~\cite[Problem 4.62]{richardson2008modern}. Namely, unknown BMSs replaced with $Z$-matched BEC (BSC) result in a
channel with a worse (better) value of $Z$. Note the reversal of the roles of BEC and BSC compared to our comparison --
this is discussed further in the following remark.

Next, the more natural choice of information parameter is the capacity, $C(W)$. Here again one can prove that BEC/BSC
serve as the worst and best channels (in terms of capacity), however their roles are reversed at the variable and (linear) check nodes --
see~\cite[Theorem 4.141]{richardson2008modern}. 

In all, unlike $C_{\chi^2}$-based comparison proposed by us, neither the $Z(W)$ nor the $C(W)$-based comparisons hold 
in full generality of Lemma~\ref{lem:lemma_B} and only apply to linear codes. Furthermore, in a subsequent paper we will
show that the $C_{\chi^2}$-based comparison in fact yields stronger results in many problem.
\end{remark}

\begin{remark}[Weak universal upper bounds on LDPC thresholds] The method of $Z(W)$-comparison~\cite[Theorem
4.2]{khandekar2003graph} allows one to make the
following statement. Consider an infinite ensemble of irregular LDPCs (or IRA or any other linear, locally tree-like
graph codes) and let $\epsilon^*_{\mathrm{BEC}}$ and $\delta^*_{\mathrm{BSC}}$ be their BP decoding thresholds (that is,
the ensemble achieves vanishing error under BP decoding over any
$\BEC_\epsilon$ with $\epsilon < \epsilon^*_{\mathrm{BEC}}$ but not for any $\epsilon>\epsilon^*_{\mathrm{BEC}}$; and
similarly for the $\BSC$). Then we always have
	$$ \delta^*_{\mathrm{BSC}} \ge {1-\sqrt{1-(\epsilon^*_{\mathrm{BEC}})^2}\over 2}\,.$$
This bound is universal in the sense that it is a special case of the more general result that BP error converges to
zero on any $\BMS$ $W$ with $Z(W)< Z(\BEC_{\epsilon^*_{\mathrm{BEC}}})$. 

For the $(3,6)$ regular LDPC we have $\epsilon^*_{\mathrm{BEC}} \approx 0.4294$ so the previous bound yields
$\delta^*_{\mathrm{BSC}} \ge 0.0484$. One can also execute the universal $C(W)$-comparison (although it requires knowledge of full
degree distribution rather than only $\epsilon^*_{\mathrm{BEC}}$), which yields, as an example, 
that for the $(3,6)$ code $\delta^*_{\mathrm{BSC}} \ge 0.0484$ -- see~\cite[\S 4.10]{richardson2008modern}. 

However, it has been observed that no universal \textit{upper bounds} are usually possible via these classical methods, cf.~\cite[Problem
4.57]{richardson2008modern}. Similarly, our Lemmas~\ref{lem:lemma_A}-\ref{lem:lemma_B} immediately imply that the BP error does not
converge to zero over any $\BMS$ $W$ with $C_{\chi^2}(W)<C_{\chi^2}(\BEC_{\epsilon^*_{\mathrm{BEC}}})$. This yields a
universal upper bound on the BP threshold of general ensembles, and for the BSC implies
\begin{equation}\label{eq:univ_ub}
		\delta^*_{\mathrm{BSC}} \le {1-\sqrt{1-\epsilon^*_{\mathrm{BEC}}}\over 2}\,.
\end{equation}
Unfortunately, for the regular $(3,6)$ ensemble this evaluates to $\delta^*_{\mathrm{BSC}} \le 0.1223$, which is worse
than the trivial bound $\delta^*_{\mathrm{BSC}} \le 1-h^{-1}(R)$, where $R$ is the rate of the code ($R=1/2$ for
the $(3,6)$ example). We may conclude that bound~\eqref{eq:univ_ub} is only ever useful for those ensembles whose BEC
BP-threshold $\epsilon^*_{\mathrm{BEC}}$ is very far from $1-R$, i.e. for codes with a large gap to (BEC) capacity.
\end{remark}

\subsection{E-functions}
We recall that, in general, a computational graph of small depth ($o(\log(k))$) corresponding to a (check-regular) code ensemble is with high probability a tree (cf. \cite{richardson2008modern}, Exercise 3.25). For such ensembles, we want to study the dynamics of the decoding error along the iterations of BP.  Hence, we need to understand how the error flows in and out of the local neighborhood of a target node. In other words, we want to understand how the BP dynamics contract the input error.  

The notions of E-functions are useful for this purpose. They can be viewed as a mapping of the input error density at the leaf nodes (in the beginning of a decoding iteration) to the output error density at the target node (at the end of the iteration). There are two types of E-functions studied in this work: the erasure functions and the error functions. 

\begin{figure*}[!t]
\centering
\subfloat[Factor graph representation of the (observed) equations]{\includegraphics[width=0.65\textwidth]{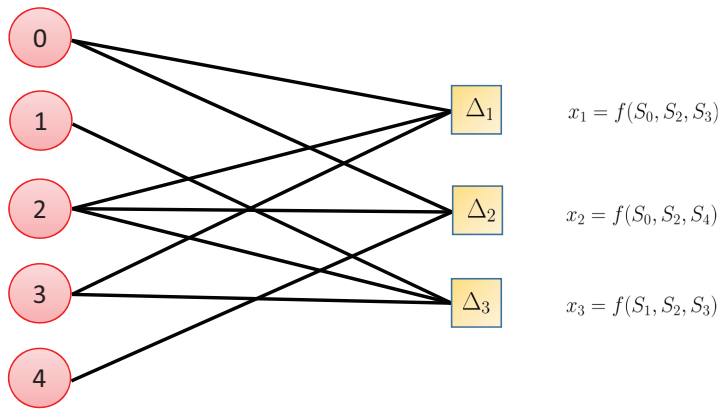}}
\label{fig:factor_graph_a}
\hfil
\subfloat
[Local neighborhood of $0$ in the unfolded factor graph]{
\includegraphics[width=0.7\textwidth]{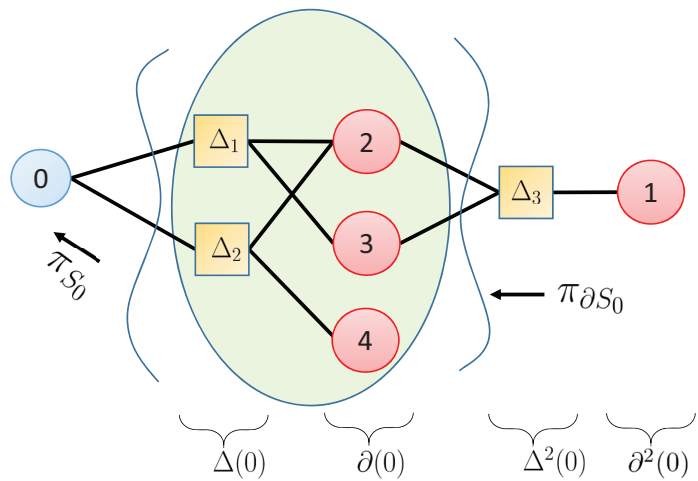}}
\label{fig:factor_graph_b}
\caption{The factor graph of a code and the local neighborhood of a target node are shown. a) The check nodes correspond to observed (unerased) coded bits and represent the constraints imposed by such observations. b) The factor graph can be unfolded with respect to a target node. The immediate (variable) neighbors of the target nodes in such unfolding form its local neighborhood. A recursive algorithm can first estimate the marginal probabilities $\pi_{\partial S_0}$ for the local neighborhood and then compute the posterior $\pi_{S_0}$ using (\ref{eq:decomposed_posterior}). Here we recall that $\partial {S_0}=S_{\partial(0)}$. }
\label{fig:factor_graph}
\end{figure*}

\begin{figure*}[!t]
\centering
\subfloat[Computational tree of depth $\ell$ for a (check-regular) ensemble of degree 3]{\includegraphics[width=\textwidth]{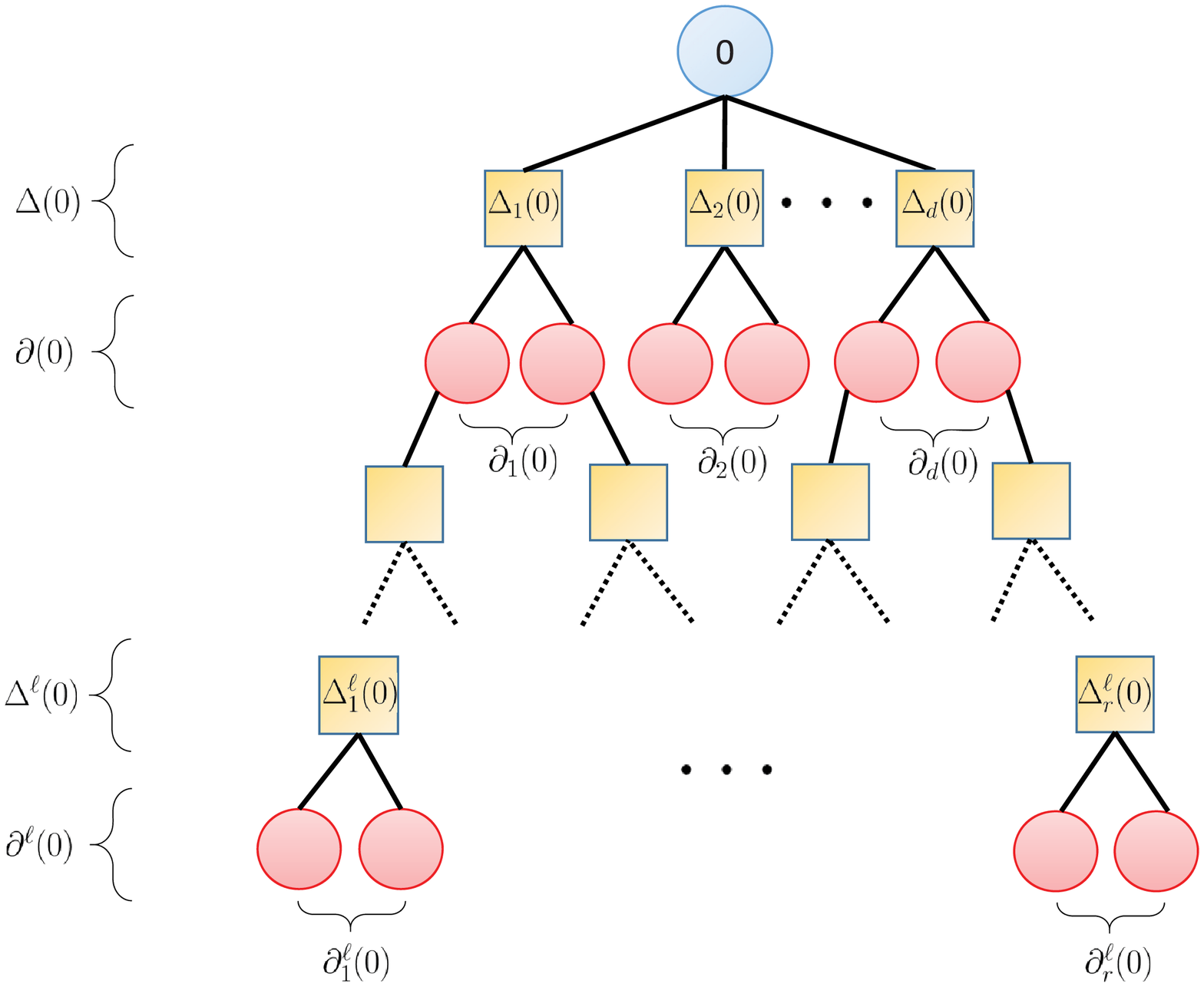}}
\label{fig:comp_trees_a}
\hfil
\subfloat
[Local neighborhood of a variable node with BEC inputs]{
\includegraphics[width=\textwidth]{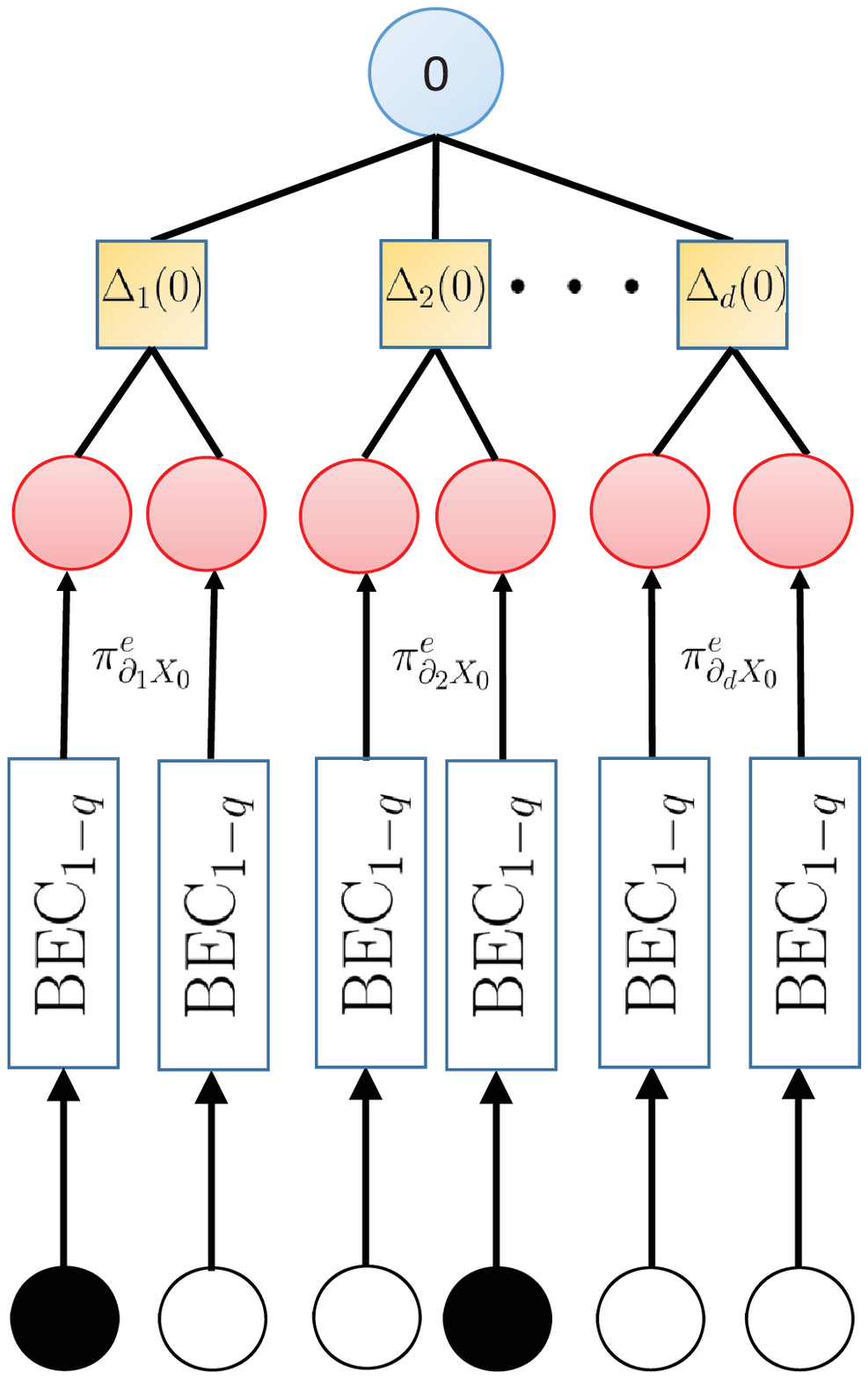}}
\label{fig:comp_trees_b}
\caption{ a) A computational tree for a (check-regular) ensemble of degree 3 obtained after $\ell$ unfoldings w.r.t a target node along with the (local) indexing used in the analysis of BP. We refer to $\partial^j(0)$ as the $j$-th order neighborhood of $0$ and $\partial^{\ell}(0)$ as the boundary of the tree. b) The local neighborhood of the target node with leaves observed through BEC channels. This local graph is used to define the erasure function. }
\label{fig:comp_trees}
\end{figure*}

\begin{definition}[Erasure function]
Consider a code ensemble generated by a Boolean function $f:\{0,1\}^m\to\{0,1\}$ with variable node degrees sampled from $\mathrm{Deg}$. Fix $\alpha=C/R$ and consider a computational tree of depth 1 as in Fig.~\ref{fig:comp_trees}b corresponding to the target bit $X_0$. Let $M_j=f(S_0,S^{(j)})$, $j=1,\cdots,d$, where $S^{(j)}\stackrel{}{\sim}\mathrm{Ber}(1/2)^{\otimes(m-1)}$ are the boundary nodes. Suppose that each boundary node is observed through a (memoryless) $\BEC$ channel, i.e., $Y^{(j)}=\BEC_{\bar{q}}(S^{(j)})$ where $\bar{q}=1-q$ is the probability of error. The function $$E^{\BEC}_d(q)\triangleq\EE [\P(S_0=1|M_1,\cdots,M_d,Y^{(1)},\cdots,Y^{(d)})|S_0=0].$$ is called the $d$-th erasure polynomial of the ensemble. Here the expectation is taken with respect to the  randomization over bits as well as the noise in the observations. The erasure function is defined as the expectation of $E_d^\BEC$ (over the ensemble):
\[
E^{\BEC}(\alpha,q)\triangleq\sum_{k} \P(\mathrm{Deg}=k)E^\BEC_k(q).
\]
The $d$-th truncated easure polynomial is
\[
E^{\BEC}_{\le d}(\alpha,q)\triangleq\sum_{k\le d} \P(\mathrm{Deg}=k)E^\BEC_k(q).
\]
\label{def:erasure_function}
\end{definition}
Similarly, we can define the notion of an error function.
\begin{definition}[Error function]
In the setup of \textup{Definition \ref{def:erasure_function}}, let $Y^{(j)}=\BSC_{q}(S^{(j)})$ be the result of passing $S^{(j)}$ through a (memoryless) $\BSC$ channel with crossover probability $q$. The function $$E^{\BSC}_d(q)\triangleq\EE [\P(S_0=1|M_1,\cdots,M_d,Y^{(1)},\cdots,Y^{(d)})|S_0=0]$$ is called the $d$-th error polynomial of the ensemble. Likewise, the error function is defined as
\[
E^{\BSC}(\alpha,q)\triangleq\sum_{k} \P(\mathrm{Deg}=k)E^\BSC_k(q).
\]
The $d$-th truncated error polynomial is
\[
E^{\BSC}_{\le d}(\alpha,q)\triangleq\sum_{k\le d} \P(\mathrm{Deg}=k)E^\BSC_k(q)+\frac{1}{2} \P(\mathrm{Deg}>d).
\]
\label{def:error_function}
\end{definition}

\begin{remark}
We briefly discuss the effect of truncating the E-functions here. Clearly
$E^\BEC\ge E^\BEC_{\le d}$
holds pointwise since we drop some non-negative terms from $E^\BEC$ to obtain $E^\BEC_{\le d}$. Likewise
$
E^\BSC\le E^\BSC_{\le d}
$
since we assume all the high degree nodes are in error when computing $E^\BSC_{\le d}$. In fact, due to monotonicity, a better upper bound on $E^\BSC$ would be
\[
E^{\BSC}_{\le d}(\alpha,q)\le \sum_{k\le d} \P(\mathrm{Deg}=k)E^\BSC_k(q)+E_{d+1}^\BSC(q)\sum_{k> d} \P(\mathrm{Deg}=k).
\]
In practice, we choose the truncation degree to be large enough that makes this adjustment not so crucial. In either case, if an error probability is lower bounded by $E^\BEC$ it is also lower bounded by $E^\BEC_{\le d}$. Likewise, if it is upper bounded by $E^\BSC$, it is also upper bounded by $E^\BSC_{\le d}$.
\label{rmk:truncation}
\end{remark}
\begin{remark}
It is possible to study iterative decoding in terms of the input-output entropy instead of error probability. For linear codes (over the erasure channel), the two methods are equivalent as the EXIT function is proportional to the probability of error. For general codes, however, we would need to invoke a Fano type inequality to relate the two and this step is often lossy. For instance, in the case of LDMCs, we can obtain much better bounds by analyzing the probability of error  directly as shown in Section \ref{sec:fano}.  
\end{remark}

\subsection{Bounds via channel comparison lemmas}

Armed with E-functions and the channel comparison lemmas we can proceed to stating our iterative bounds.

\begin{proposition}
Consider the dynamical system 
\begin{equation}
q^\BEC_{t+1}(x_0)=1-2E_{\le d}^\BEC(\alpha,q^\BEC_{t})
\label{eq:qbec}
\end{equation}
initialized at $q^\BEC_0=x_0$ with $\alpha=C/R$. Similarly, define 
\begin{equation}
q^\BSC_{t+1}(x_0)=E_{\le d}^\BSC(\alpha,q^\BSC_{t})
\label{eq:qbsc}
\end{equation}
with $q^\BSC_0=x_0$.
Let $\delta_\ell^{\mathrm{BP}}$ be the $\BER$ of a  ensemble under $\BP$ after $\ell$ iterations. Likewise, let $\delta^{\mathrm{MAP}}$ be the $\BER$ under the optimal (bitwise $\mathrm{MAP}$) decoder. Then 
\[
\frac{1-q^\BEC_\ell(1)}{2}-o(1)\le \delta^{\mathrm{MAP}}\le \delta_\ell^{\mathrm{BP}}.
\]
Furthermore, 
\[
\frac{1-q^\BEC_\ell(0)}{2}-o(1)\le \delta_\ell^{\mathrm{BP}}\le q^\BSC_\ell(1/2)+o(1)
\]
with $o(1)\to 0$ and $k\to \infty$.
\label{prop:approx_gives_lower_bound}
\end{proposition}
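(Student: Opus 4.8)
The plan is to fix a target bit $S_0$, condition on its depth-$\ell$ computational neighborhood being a tree (which for a check-regular ensemble holds with probability $1-o(1)$ since $\ell=o(\log k)$, uniformly over the $k$ choices of $S_0$), and then to track, level by level, the message channel $W^{(t)}$ carrying a variable after $t$ rounds, sandwiching it \emph{in the degradation order} between the extremal $\BEC$ (for the lower bounds) and the extremal $\BSC$ (for the upper bound) with the matching error probability, using Lemmas~\ref{lem:lemma_A} and~\ref{lem:lemma_B}.

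First I would reduce the four inequalities to statements about the depth-$\ell$ tree. On a tree, $\ell$ rounds of $\BP$ from the trivial messages compute the exact posterior $p_{S_0\mid\Delta^{(\ell)}X_0}$, so on the tree event $\delta_\ell^{\BP}$ equals the error probability of the root channel obtained with uninformative leaves, i.e.\ leaf reliability $q_0=0$. The inequality $\delta^{\MAP}\le\delta_\ell^{\BP}$ is immediate. For the bound on $\delta^{\MAP}$ I would use that, in the tree-like regime, every check touching an interior variable already lies in the tree, so the coded bits outside the tree are functions of the boundary variables $\partial^{\ell}(0)$ together with independent randomness; hence $\delta^{\MAP}$ is no smaller than the error of the genie decoder that is additionally told $\partial^{\ell}(0)$, and that decoder is exact $\BP$ on the tree with perfect leaves, i.e.\ $q_0=1$. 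It therefore suffices to prove, on the depth-$\ell$ tree,
\[
\tfrac{1-q^{\BEC}_\ell(q_0)}{2}\ \le\ P_e\!\left(W^{(\ell)}\right)\ \le\ q^{\BSC}_\ell(\tfrac12),
\]
the lower bound for $q_0\in\{0,1\}$ and the upper bound for $q_0=0$, and then restore the conditioning, the non-tree event costing only $o(1)$ since a bit-error probability is at most $\tfrac12$.

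The core is an induction on $t$. For the lower bounds the claim is $W^{(t)}\preceq_{\mathrm{deg}}\BEC_{1-q^{\BEC}_t}$, with $q^{\BEC}_t=q^{\BEC}_t(q_0)$ from~\eqref{eq:qbec} and base case $\BEC_{1-q_0}=W^{(0)}$. For the step I would chain three facts: (i) by Lemma~\ref{lem:lemma_B}(1) applied to one level of the tree (both check- and variable-node updates are instances of its factor node, cf.\ the remarks after it), degradation of the incoming messages propagates, so $W^{(t+1)}$ is degraded from the channel obtained by replacing the incoming messages with $\BEC_{1-q^{\BEC}_t}$; (ii) by Definition~\ref{def:erasure_function} and Remark~\ref{rmk:truncation} that channel has error probability at least $E^{\BEC}(\alpha,q^{\BEC}_t)\ge E^{\BEC}_{\le d}(\alpha,q^{\BEC}_t)$; (iii) by Lemma~\ref{lem:lemma_A}(1) any $\BMS$ channel with error $e$ is degraded from $\BEC_{2e}$, so, since $2E^{\BEC}(\alpha,q^{\BEC}_t)\ge 1-q^{\BEC}_{t+1}$, transitivity of $\preceq_{\mathrm{deg}}$ gives $W^{(t+1)}\preceq_{\mathrm{deg}}\BEC_{1-q^{\BEC}_{t+1}}$. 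Taking $t=\ell$, and using $P_e(\BEC_\epsilon)=\epsilon/2$ and monotonicity of $P_e$ under degradation, proves the lower bound. The upper bound is the mirror image: one shows $\BSC_{q^{\BSC}_t}\preceq_{\mathrm{deg}}W^{(t)}$ by the same chain with least- and most-degraded channels exchanged (Lemma~\ref{lem:lemma_A}(1) gives $\BSC_e\preceq_{\mathrm{deg}}W$ for a $\BMS$ channel with error $e$; here the truncation over-counts, $E^{\BSC}_{\le d}\ge E^{\BSC}$), and concludes $P_e(W^{(\ell)})\le P_e(\BSC_{q^{\BSC}_\ell(1/2)})=q^{\BSC}_\ell(1/2)$.

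The step I expect to be the main obstacle is making the scalar recursions~\eqref{eq:qbec}--\eqref{eq:qbsc} faithfully summarize the ensemble. The messages entering a node come from sub-trees of differing random degrees, so the true incoming message is a \emph{mixture} rather than a single channel, and the degradation comparison must be pushed through both this mixture and the degree average $\sum_k\P(\mathrm{Deg}=k)(\cdot)$ that defines the E-functions. Since hard error and degradation interact with mixtures only one-sidedly, this requires either convexity of the one-step E-functions $q\mapsto E^{\BEC}_d(q),\,E^{\BSC}_d(q)$ — which, after first up-/down-grading each incoming sibling to its own matched extremal channel via Lemma~\ref{lem:lemma_A}, lets Jensen's inequality supply the replacement by the single extremal channel matched to the \emph{averaged} error — or the standard density-evolution monotonicity of $\BP$ updates with respect to the degradation order on message densities; for regular LDMC ensembles the mixture is trivial and the point largely disappears. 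One also has to pin down the exact sense (and whether soft or hard error) in which the E-functions of Definitions~\ref{def:erasure_function}--\ref{def:error_function} control $P_e$. The remaining bookkeeping — uniformity of the tree event over the $k$ target positions, and the $o(1)$ cost of the non-tree event as $k\to\infty$ — is routine.
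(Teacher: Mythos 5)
Your proposal is correct and follows essentially the same route as the paper: an induction on the depth of the computational tree that sandwiches the root message channel between a $\BEC$ and a $\BSC$ in the degradation order via Lemmas~\ref{lem:lemma_A} and~\ref{lem:lemma_B}, with the MAP lower bound obtained by noting that MAP differs from BP only in the leaf initialization, which is degraded with respect to perfect ($\BEC_0$, i.e.\ $q_0=1$) leaves. The one obstacle you flag --- pushing the degradation comparison through the random-degree mixture --- is resolved in the paper more simply than either of your suggested fixes: the tree structure is declared part of the channel output, so the degree-mixture is itself a single $\BMS$ channel whose error probability is exactly the degree-averaged E-function, and Lemma~\ref{lem:lemma_A} then applies to that channel wholesale, with no convexity or density-evolution monotonicity needed.
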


\begin{proof}
We sample codes from the family and consider the (local) computational graph of a fixed bit $X_0$ with depth $\ell$.  It is known that for large codes, the computational graph of depth $\ell$ has a tree structure with high probability. Hence, we may assume that the graph is a tree. 

Consider the depth $\ell$ tree emanating from $S_0$. The channel
   $T_\ell:S_0\mapsto (\textup{computational tree of depth } \ell, \Delta^{(\ell)}X_{0})$ is a BMS (recall that $\Delta^{(\ell)}X_{0}=X_{\Delta^{(l)}(0)}$ denotes all the coded bits observed in the tree of depth $\ell$). We note that running $\ell$-steps of BP is equivalent to
   decoding $S_0$ from the output of $T_\ell$. In other words $\delta^\BP=P_e(T_\ell)$ is the error we want to bound. Further note that the structure of the tree is included as part of the channel, so that $P_e(T_\ell)$ is computed by randomizing over possible realizations of the graph as well.

   Now condition on the first layer of the tree. If the number of variable nodes in $\partial(0)$ is $m$, then the restriction of $T_{\ell}$ to the first layer has the
   structure of Lemma \ref{lem:lemma_B} (with $P_{\Delta X_0|S_0,\partial S_0}$ being the Boolean functions of various
   subsets in $\Delta(0)$) and each $W_j = T_{\ell-1}$ being the channels corresponding to the trees emanating from
   $X_j$'s (with $j \in \partial(0)$). More explicitly,
   for each choice of $S_0=s_0,\partial S_0=\partial s_0$,  $P_{\Delta X_0|S_0,\partial S_0}$ simply indicates whether or not all the constraints in the local neighborhood are satisfied:  $P_{\Delta X_0|S_0,\partial S_0}(\Delta x_0|s_0,\partial s_0)=\prod_{j\in \Delta(0)} \ind_{\{x_j=f(s_0,\partial_j s_0)\}}$. Furthermore, if we set $W_j=T_{\ell-1}$ to be the corresponding tree channel emanating from $S_j$'s (with $j\in\partial(0)$), then due to the locally tree assumption their observations are independent. 

   Now assume by induction that $T_{\ell-1} \preceq_{\mathrm{deg}} \BEC_{\bar{q}_{\ell-1}}$.
   Then by Lemma \ref{lem:lemma_B}, we have $T_\ell \preceq_{\mathrm{deg}} \tilde T_\ell$ where $\tilde T_\ell$ is the tree of depth $\ell$ in which the channels $W_j$ are replaced with $\BEC_{\bar{q}_{\ell-1}}$. Note that if we condition on the degree $d$ of $S_0$, then the $\tilde{T}$  channel has error  $E_d^\BEC(\alpha,q_{\ell-1}).$ By averaging over the degrees, we obtain
   \[
    P_e(\tilde T_\ell)=E^\BEC(\alpha,q_{\ell-1})\ge  E_{\le d}^\BEC(\alpha,q_{\ell-1})=\bar{q_\ell}/2,
   \]
    where the inequality is due to  truncation (recall that in $E^\BEC_{\le d}$ all nodes of degree larger than $d$ are assumed to have zero error--see Remark \ref{rmk:truncation}). 
    To complete the induction step, note that $\tilde T_\ell\preceq_{\mathrm{deg}}\BEC_{\bar{q}_\ell}$ by Lemma \ref{lem:lemma_A}. We thus have $P_e(T_\ell) \ge \bar{q}_{\ell}/2$ as desired. 
    
    The proof of the BSC upper bound is obtained in a similar manner after replacing the input channels to $\tilde T_\ell$ with BSCs and invoking the reverse sides of Lemmas \ref{lem:lemma_B},\ref{lem:lemma_A} again. 
    
    Finally, $\BP$ and $\mathrm{MAP}$ decoding differ only by the initialization of beliefs at the leaf nodes. Since the MAP channel at the leaves is a degradation of $\BEC_0$, the lower bound on MAP follows as well.

\end{proof}

\subsection{Computing E-functions for $\mathrm{LDMC(3)}$}
\label{sec:Ecurve_ldmc3}
In the rest of this section, we provide an algorithm to compute the E-functions for LDMC(3) and use Proposition \ref{prop:approx_gives_lower_bound} to obtain upper and lower bounds for BP and bit-MAP decoders for this family of codes. The degree distribution of LDMC(3) is asymptotically $\mathrm{Poi}(3\alpha)$ distributed where $\alpha=C/R$. In this case, the truncated erasure polynomial is
\[
E^\BEC_{\le d}(\alpha,q)=\sum_{k=0}^d \P(\mathrm{Poi}(3\alpha)=k)E^\BEC_k(q).
\]


Computing the erasure polynomials is more involved for LDMC(3) than LDGMs since the BP messages are more complicated. For LDGMs, the messages are trivial in the sense that every uncoded bit remains unbiased after each BP iteration. This does not hold for LDMCs, and it is in fact this very principle that allows BP decoding to initiate for LDMCs without systematic bits. Hence, to analyze BP locally, we need to randomize over all possible realizations of the bits in the local neighborhoods. This is a computationally expensive task in general, but one that can be carried out in some cases by properly taking advantage of the inherent symmetries in the problem.

 The BP update rules are easy to derive for LDMCs. Let $X_j$ be the majority of 3 bits $S_0,S_1,S_2$. Then if $X_j=0$, the check to bit message is
\begin{equation}
m_j=\frac{\P(S_0=0|X_j=0)}{\P(S_0=1|X_j=0)}=1+\frac{1}{r_1}+\frac{1}{r_2},
\label{eq:BP_law_maj3}
\end{equation}
where $r_i=\PP(S_i=0)/\PP(S_i=1)$ are the priors (or input messages to the local neighborhood). The posterior likelihood ratio for $S_0$ is $r_0=\prod_{j\in\Delta(0)}m_j$. 
We now use these update rules to compute the E-polynomials for LDMC(3). We start with the erasure polynomials.

 Let $\bar{q}=1-q$ be the probability of erasure at the boundary. For bits of degree zero, the probability of error is clearly $\frac{1}{2}$ and for bits of degree 1 the probability of error is $\frac{1}{4}$ independent of $q$. To see this, consider the computational tree of a degree $1$ bit $S_0$ at depth 1. There are two leaf bits in tree. Suppose that neither of the leaf bits is erased. This happens with probability $q^2$. Conditioned on this, only when the two leaf bits take different values can $S_0$ be fully recovered and this conditional probability is $\frac{1}{2}$. Otherwise, the bit remains unbiased and must be guessed randomly. The overall contribution of this configuration to the probability of error for $S_0$ is $q^2/2$. One other possible configuration is when only one leaf bit is erased. In this case the target bit is determined whenever the unerased bit disagrees with the majority, which happens with probability $\frac{1}{4}$. When the unerased bit agrees with the majority, it weakens the (likelihood ratio) message sent from the majority to the target bit. In this case, the message passing rule in (\ref{eq:BP_law_maj3}) shows that the probability of error is $\frac{1}{3}$.  Overall, the contribution of this configuration to the probability of error is $2q(1-q)/4$. Finally, if both bits are erased, which happens with probability $(1-q)^2$, then the probability of error is again $\frac{1}{4}$. Adding up all the error terms, we see that $E_1(q)=\frac{1}{4}$. It is true for any monotonic function that $E_1(q)$ is a constant. Indeed if $f$ is monotonic, then the decision rule for estimation of any degree 1 node depends in a deterministic fashion on the value of $f$ and not on the distribution of local beliefs. It can be checked that $E_2(q)$ depends on $q$ non-trivially. 

For the general case, the ideas are the same. Consider the message sent from the a majority check to a target bit modulo inversion. This means that we identify a message $m$ and its inverse $1/m$ as one group of messages. This is a random variable that depends on the erasure patterns as well as the realized values at the leaves. Let us first condition on the erasure patterns. In this case the message is either in $\{0,\infty\}$, $\{1\}$, $\{2,1/2\}$, or $\{3,1/3\}$. In the first case, the conditional error is zero, hence, we assume that one of the latter messages is sent. Let $M_i$ be the message sent from the $i$-th majority to the target bit modulo inversion. If we represent $\{1\}$ with a constant, $\{2,1/2\}$ with variable $s$, and $\{3,1/3\}$ with variable $t$, then the distribution of $M_i$ (modulo inversion) can be represented by the following polynomial
\begin{equation}
f(s,t,q)=q^2/2+2q(1-q)s+(1-q)^2t
\label{eq:fbec3}
\end{equation}
where $1-q$ is the erasure probability at the leaves. For a target node of degree $d$, the joint distributions of messages  $M_1,\cdots,M_d$ is given by a product distribution $\prod_{i}p_{M_i}$. Modulo permutation of messages, these can be represented by 
\begin{equation}
f(s,t,q)^d=\sum_{j,k:j+k\le d} f^d_{jk}(q)s^jt^k.
\label{eq:poly_expansion}
\end{equation}
Define $R_i=\ind_{\{M_i\in \{2,1/2\}\}}$ and $T_i=\ind_{\{M_i\in \{3,1/3\}\}}$ to be the indicators that either $\{2,1/2\}$ or $\{3,1/3\}$ are sent, respectively. Let $R=\sum_{i=1}^d R_i,T=\sum_{i=1}^d T_i$. Note that $\P(R=j,T=k)=f^d_{jk}(q)$, i.e., 
the coefficient of $s^jt^k$ in the above expansion of $f(s,t,q)^d$ is the probability of the event $\{R=j,T=k\}$. If we find the conditional error $E_{jk}$ associated with each monomial term in $f$, then we can conveniently represent the erasure polynomial as follows
\begin{equation}
E^{\BEC}_d(q)=\sum_{j,k:j+k\le d} f^d_{jk}(q) E_{jk}.
\label{eq:Ed}
\end{equation}
To this end, define $M(j,k)=(M_i(j,k))$ for all $i\le d$ with
\begin{equation}
M_i(j,k)=\left\{\begin{array}{cc}2 & 0\le i\le j\\ 3 & j< i\le j+k\\ 1& \textup{otherwise},  \end{array}\right.
\label{eq:Mjk}
\end{equation}
to map $R,T$ back to a realization of the incoming messages to $S_0$. By symmetry
\[
\P(\hat{S_0}\neq S_0|R=j,T=k)=\P(\hat{S_0}\neq S_0|M(j,k)).
\]
Let $A=(A_i)$ with $A_i=\ind_{\{\Delta_iX_0=S_0\}}$ being the indicator that the $i$-th majority agrees with the target bit. Let $p_{a|jk}=\P(A=a|M(j,k))$ be the conditional probability that $a$ is realized given the incoming messages.  Since the events $\{\Delta_iX_0=S_0\}$ are independent conditioned on $M_i$'s we have
\begin{equation}
p_{a|jk}=\prod_i \P(A_i=a_i|M_i(j,k))=\prod_i \frac{1}{1+M_i(j,k)^{2a_i-1}}.
\label{eq:agree_prob}
\end{equation}
The conditional probability of error given the joint realization of messages and majority votes is given by
\begin{equation}
E_{jk|A}=\min(\frac{1}{1+\prod M_i(j,k)^{1-2a_i}},\frac{\prod M_i(j,k)^{1-2a_i}}{1+\prod M_i(j,k)^{1-2a_i}})
\label{eq:error_prob}
\end{equation}
It is  convenient to define
\begin{equation}
E_{jk}=\sum_{a\in \{0,1\}^d} p_{a|jk} E_{jk|A}
\label{eq:Ejk}
\end{equation}
and think of it as the error associated to the monomial $y^jz^k$ in (\ref{eq:poly_expansion}). Algorithm \ref{algo:Ed} summarizes the proposed procedure to compute the erasure polynomial. 
For instance, for degree $4$ nodes we have the following erasure polynomial:
\begin{align*}
E^{\BEC}_4(q)&=0.03125q^8 + 0.25q^7(-q + 1) + 1.25q^6(-q + 1)^2 \\
&+2.875q^5(-q + 1)^3 + 4.6875q^4(-q + 1)^4 \\
&+4.4375q^3(-q + 1)^5+ 2.84375q^2(-q + 1)^6 \\
&+ 0.9375q(-q + 1)^7+ 0.15625(-q + 1)^8.
\end{align*} 
\begin{algorithm}[t]
\caption{Compute $E_d(q)$}
\begin{algorithmic}
\Function{ErrorPoly}{$d$}:
\State Define $f(s,t,q)=q^2/2+3/2q(1-q)s+(1-q)^2t$
\State Expand the $d$-th power of $f$
\[
f^d(q)=\sum_{j,k} f^d_{jk} s^jt^k
\]
\State Initialize $E:= 0$
\For{k:=1 to $d$ and $j\le k$}{}
\State Compute $E_{jk}$ using (\ref{eq:Mjk})-(\ref{eq:Ejk})
\State Update $E:= E+E_{jk}f^d_{jk}$\\
\EndFor
\textbf{return} $E$ 
\EndFunction
\end{algorithmic}
\label{algo:Ed}
\end{algorithm}

Fig.~\ref{fig:exit_deg48} compares $E^\BEC_d$ with the empirical BER of  degree d nodes across samples from its depth 1 computational tree with $\BEC$ inputs for d=4,8. 
For many code ensembles an exact computation of $E^\BEC_d$ is often computationally prohibitive. In such cases, one can sample from the computational tree and find $E^\BEC_d$'s by solving a regression problem. Such functions are useful in optimizing codes as we will see in Section \ref{sec:comp_ldgm}. 
\begin{figure*}[!t]
\centering
\subfloat[$d=4$]{\includegraphics[width=0.5\textwidth]{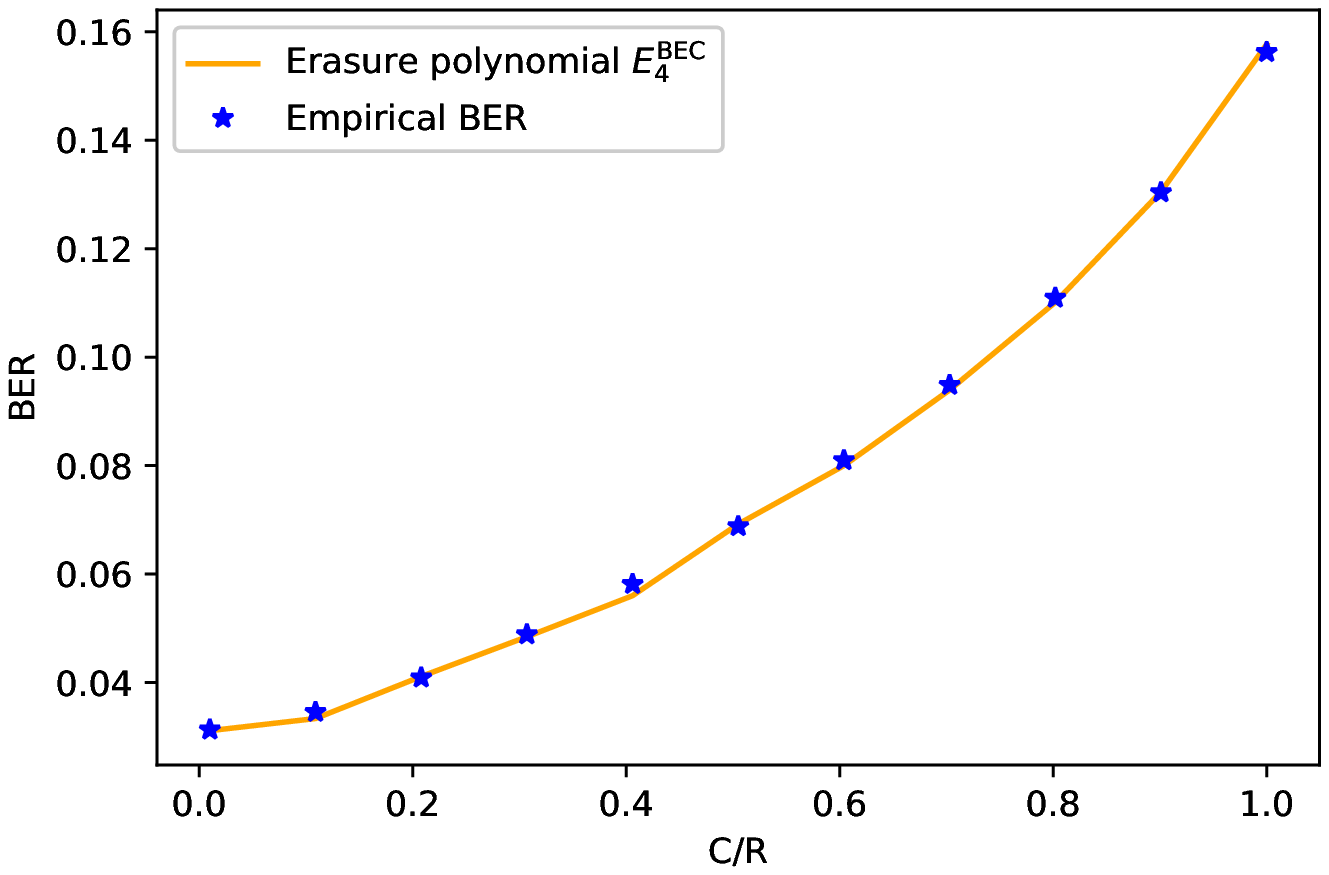}%
}
\hfil
\subfloat[$d=8$]{\includegraphics[width=0.5\textwidth]{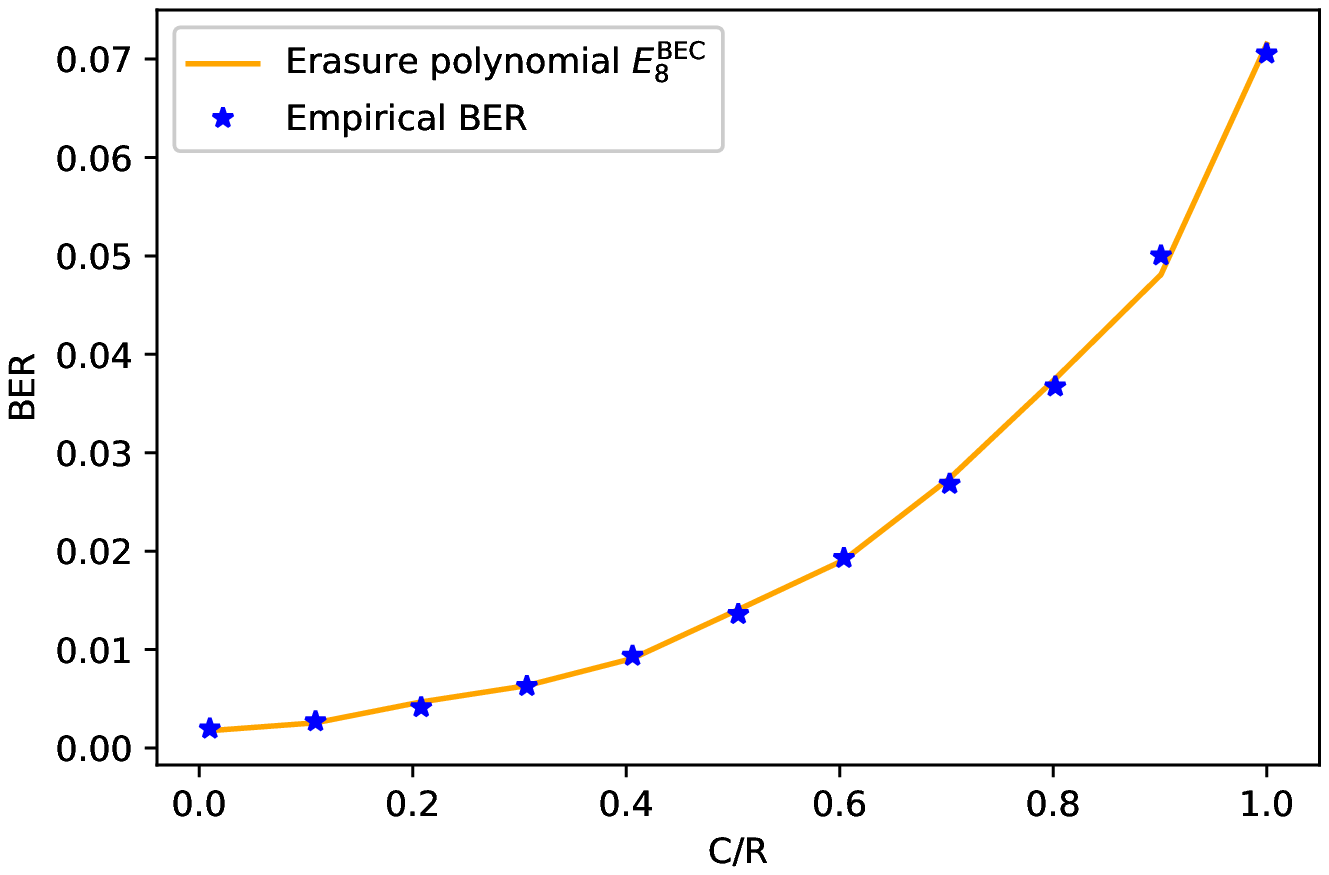}}%
\caption{Comparing the erasure polynomials $E^\BEC_4,E^\BEC_8$ with their empirical means. The empirical curves are obtained using $50000$ samples from the computational trees of depth 1 for target nodes of degrees 4 and 8, respectively, with leaves observed through $\mathrm{BEC}_{\epsilon}$ as in Fig.~\ref{fig:comp_trees}b.}
\label{fig:exit_deg48}
\end{figure*}

Recall the definitions of $q_t^\BEC(x_0)$ and $q_t^\BSC(x_0)$ from (\ref{eq:qbec})-(\ref{eq:qbsc}). Once we compute the E-polynomials, we iterate the dynamical system in (\ref{eq:qbec})-(\ref{eq:qbsc}) to find bounds on the decoding error. We compare the bounds with the empirical performance of BP in Fig.~\ref{fig:BP_fixed} for LDMC(3). We see a good agreement between the two. In particular, we see that the lower bound for LDMC(3) is almost tight. To explain this, we need to consider the distribution of posterior beliefs in LDMC(3). As shown in Fig.~\ref{fig:hist}, the empirical histogram of beliefs after convergence of BP at $C/R=1$ has three major spikes: two spikes at $p=0,1$ and one at $p=0.5$. The rest of the beliefs are almost uniformly distributed across the range $[0,1]$. 
It thus seems reasonable to approximate the posteriors obtained by BP as if they were induced by erasure channels. We emphasize that this phenomenon is specific to ensembles of degree $3$. For larger degrees, the histogram has a pronounced uniform component (see Fig.~\ref{fig:hist_d5}). Thus one cannot expect a similar agreement between the BEC lower bound and the BER performance (see Fig.~\ref{fig:ldmc_d5vsd3}). 
\begin{figure}[ht]
\centering
\subfloat[$C/R=1$]{\includegraphics[width=0.5\textwidth]{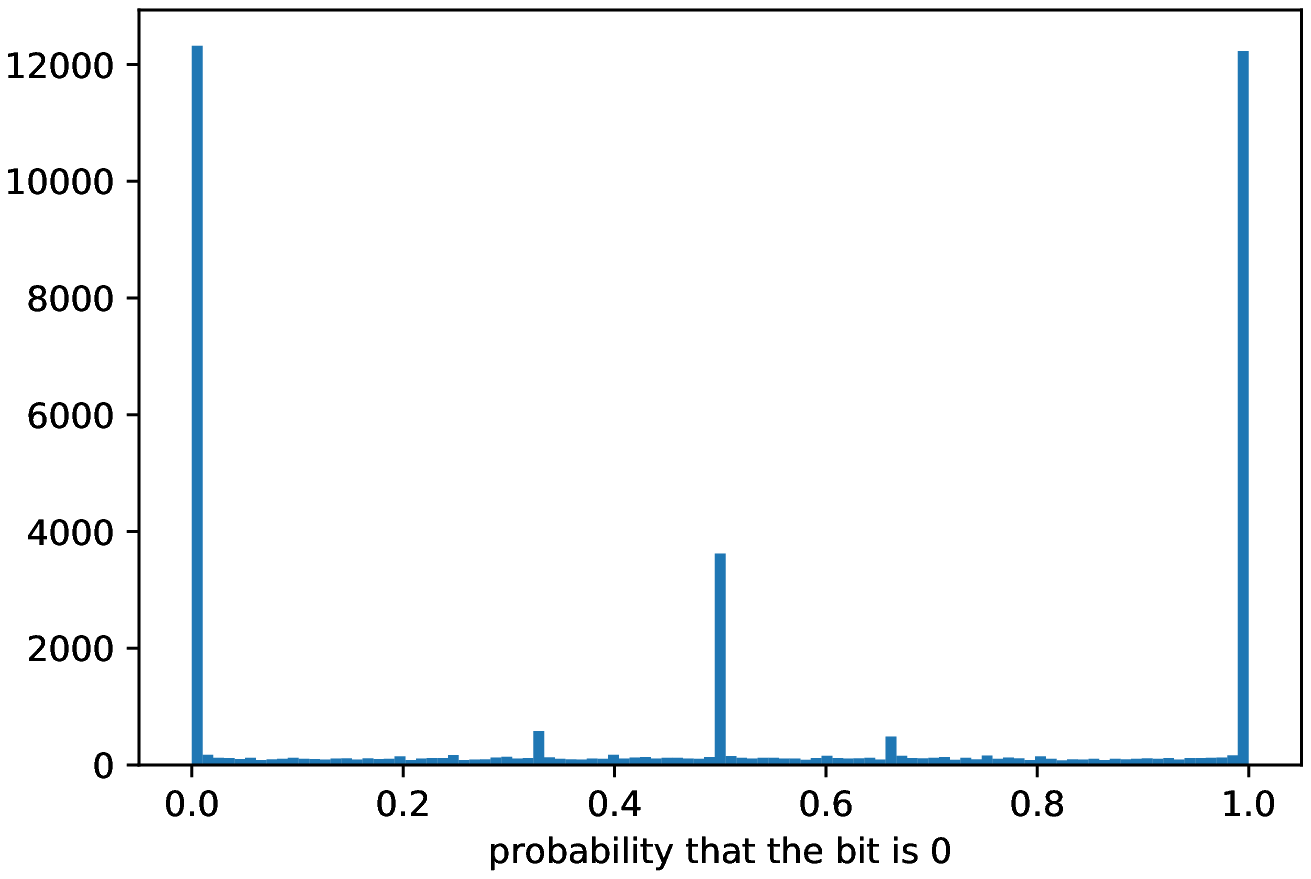}}
\subfloat
[$C/R=0.25$]{
\includegraphics[width=0.5\textwidth]{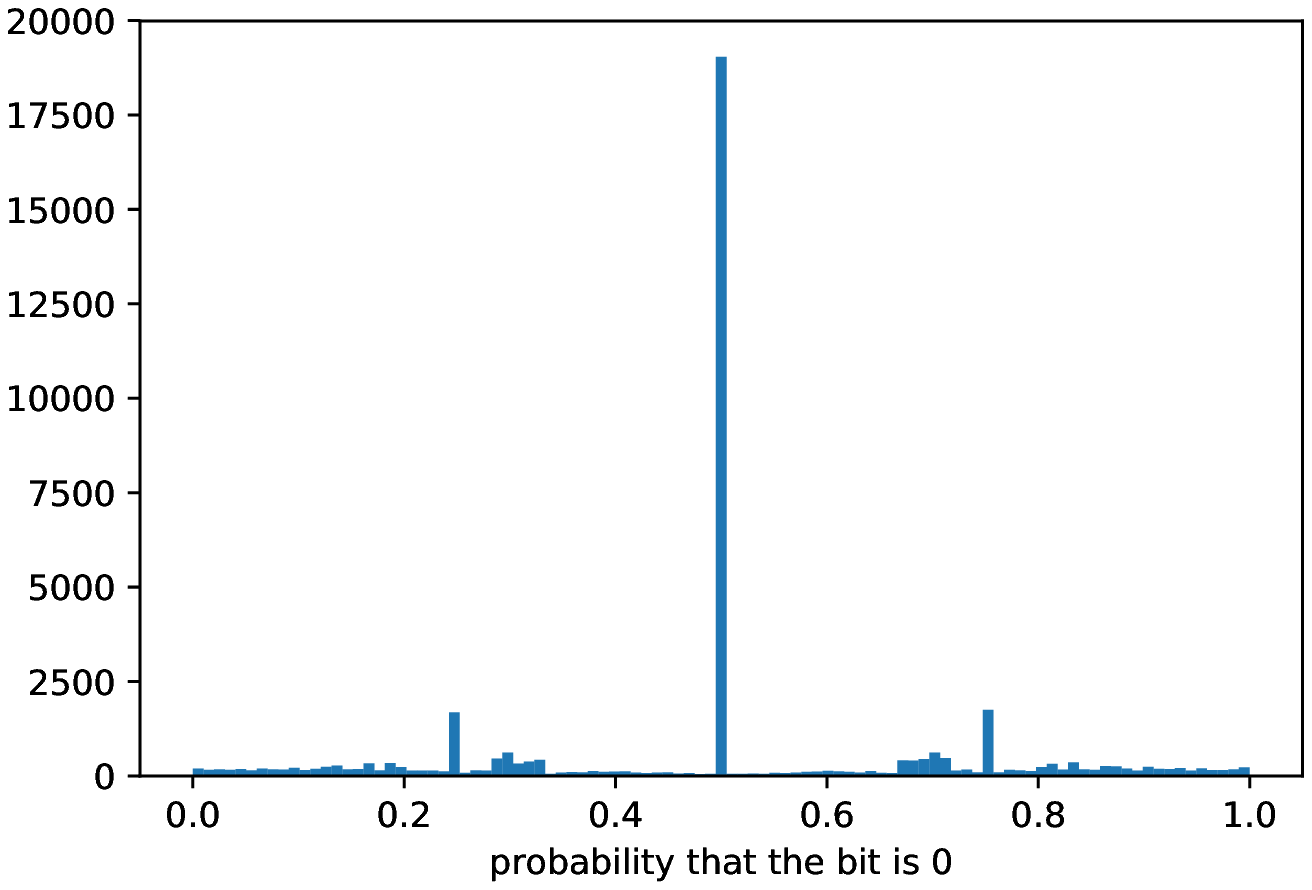}}
\caption{The empirical histogram of belief distributions for LDMC(3) with $k=40000$ bits. The number of bits that are $0$ with probability close to $p$ are shown as a function of $p$ for a) $C/R=1$ b) $C/R=0.25$.}
\label{fig:hist}
\end{figure}

\begin{table*}[t]
\centering
\begin{tabular}[width=0.75\textwidth]{|l|l|l|l|l|l|l|l}
\hline
$C/R$&$(E^\BEC_2,\BER_2)$ & $(E^\BEC_3,\BER_3)$  &$(E^\BEC_4,\BER_4)$&$(E^\BEC_5,\BER_5)$ \\
 \hline
 0.25& (0.194,0.202) &(0.127,0.146)  &   (0.097,0.117)&(0.068,0.093)\\
 \hline
 0.5& (0.166,0.177)  &(0.106,0.124)  & (0.070,0.090)& (0.047,0.066) \\
 \hline
 1& (0.137,0.139)&(0.077,0.081)&(0.044,0.047)&(0.025,0.028)\\
\hline 
\end{tabular}
\caption{Comparing $\BER_d$, the empirical bit error rate of degree $d$ nodes after 10 iterations of BP, with the theoretical lower bounds $E^\BEC_d$ at various $C/R$'s. The lower bounds are computed at $1-2\BER$ for each $C/R$ where $\BER$ is obtained empirically. }
\label{table:E_A_BER}
\end{table*}

\begin{figure}[ht]
\centering
     \subfloat[Density evolution lower bounds for MAP and BP]{\includegraphics[width=0.5\textwidth]{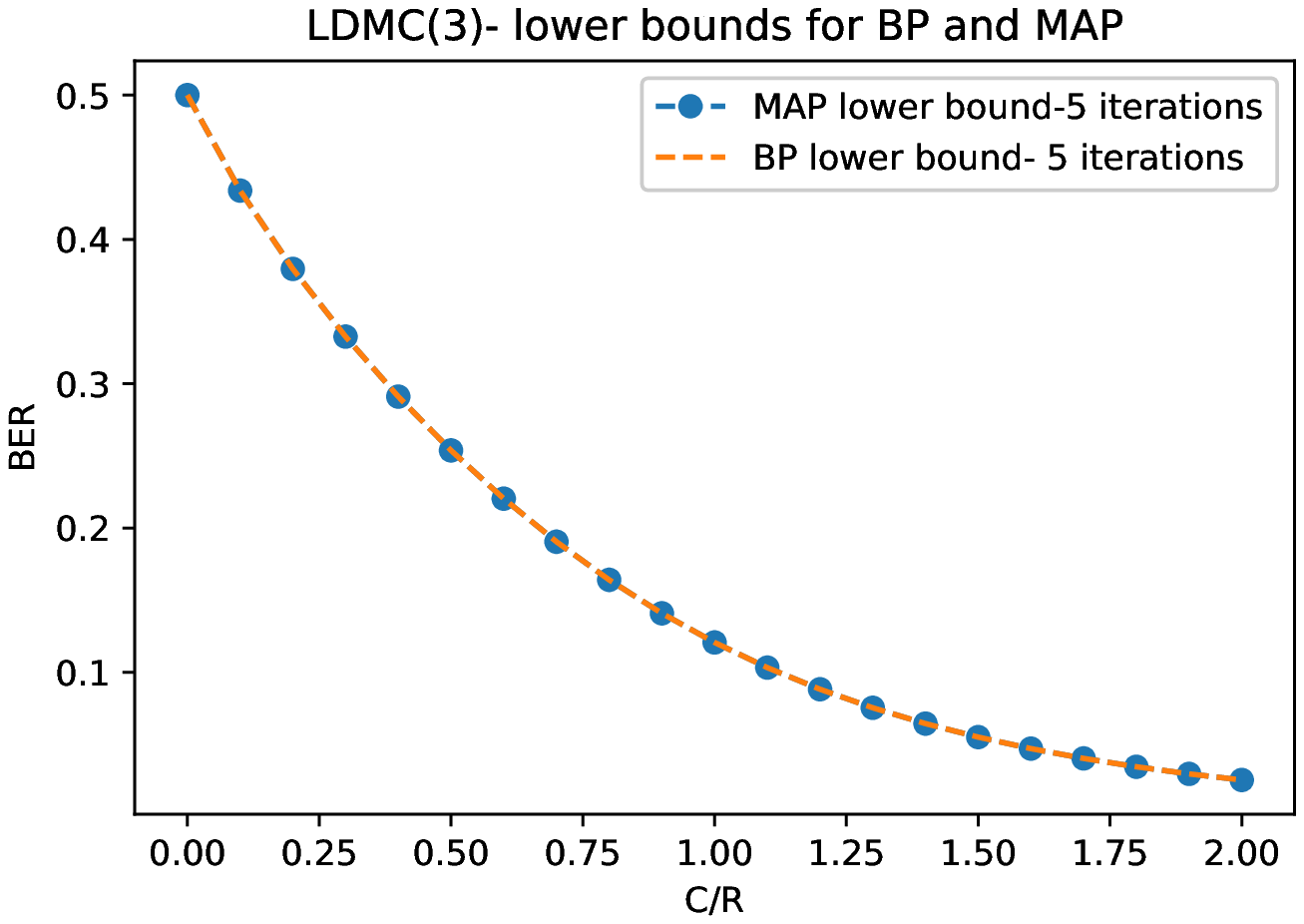}}
     \subfloat[Bounds from density evolution vs empirical BER]{\includegraphics[width=0.5\textwidth]{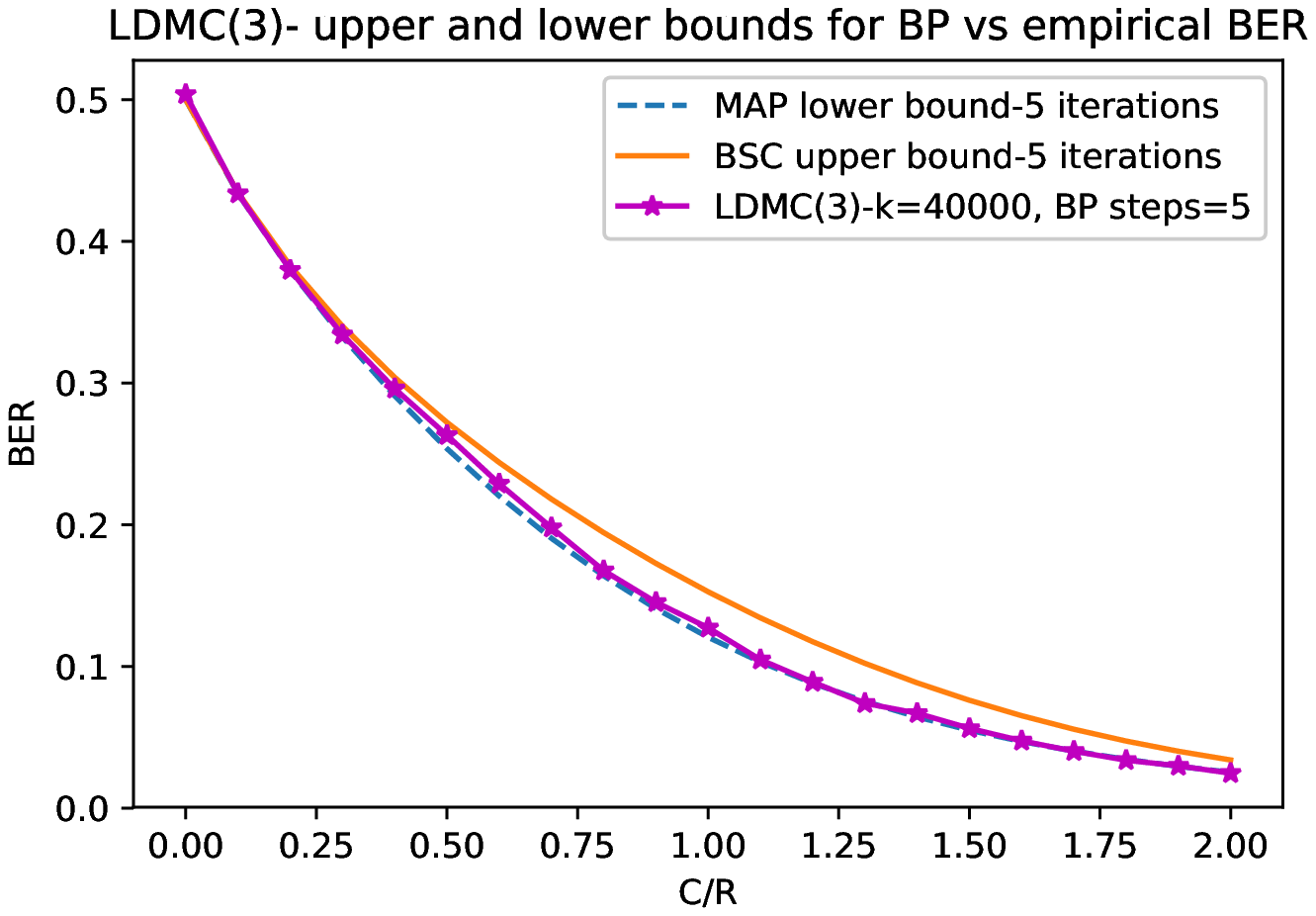}}

\caption{The LDMC(3) performance with 5 iterations of BP along with the bound of Proposition  \ref{prop:approx_gives_lower_bound} using $\ell=5$ and $E_{d\le 10}$-functions. The BP lower bound is $q_5^\BEC(0)$ from (\ref{eq:qbec}) using the erasure functions associated with the polynomial in (\ref{eq:fbec3}). The MAP lower bound is $q_5^\BEC(1)$. The BSC upper bound is $q_5^\BSC(1/2)$ from (\ref{eq:qbsc}) using the error polynomial in (\ref{eq:fbsc3}). a) The density evolution dynamics of (\ref{eq:qbec}) has a unique fixed point. Hence $q_\ell^\MAP$ and $q_\ell^\BEC$  converge to the same point. We remark that this property does not hold for general codes (see Conjecture \ref{conj:monotone} and Remark \ref{remark:q0vsq1} below).  b) The BP performance is compared against the bitwise MAP lower bound. The lower bound is almost tight since the empirical histogram of beliefs in LDMC(3) is much closer to one induced by an erasure channel than BSC (see Fig.~\ref{fig:hist}). }
\label{fig:BP_fixed}
\end{figure}

The erasure polynomials needed to compute $E^\BEC_{\le 10}$ are listed in Appendix \ref{apx:polys} in Python form. Table \ref{table:E_A_BER} compares the values of $E^\BEC_d(1-2\BER)$ with the empirical BER of degree $d$ nodes in the LDMC(3) ensemble after 10 iterations of BP. 

The ideas to compute the BSC upper bound are similar. Recall that in (\ref{eq:Ejk}), $E_{jk}$ is the error associated to the monomial $s^jt^k$ (meaning that j of type 1 and k of type 2 messages are received) for LDMC(3). In general we can re-write (\ref{eq:Ejk})
 in the form
\[
\sum_{} E_{i_1,\cdots_{i_s}}f^d_{i_1,\cdots_{i_s}}(q)
\]
where again $E_{i_1,\cdots_{i_s}}$ is a channel-independent term that corresponds to the conditional  error given the input types at the boundary. The only term that depends on the channel is $f^d_{i_1,\cdots_{i_s}}(q)$. Thus for any channel once we find the corresponding $f$-polynomial we can construct upper/lower bounds as above.

Let us construct the $f$-polynomial of LDMC(3) for BSC. Again consider the local neighborhood of a target node connected to one majoiry. Note that for the two leaf nodes in the boundary, each realization $00,01,10,11$ is equally likely (after possible flips by BSC). We need to compute the likelihood that they agree with their majority given the realization. Let $\partial S_0$ be the boundary bits, $\partial Y$ be their observations through $\BSC_p$, and $X$ be the majority. We proceed as follows. 
\begin{itemize}
\item The observed value is $ Y=00$.
We can check that 
\[
\P(X=0|\partial Y=00)=
(1-p)^2+p(1-p).
\]
The message corresponding to this event is
\[
r_0=1+2/\rho
\]
with $\rho:=\frac{1-p}{p}$. The complimentary event $\P(X=1|\partial Y=00)$ has probability $p(1-p)+p^2$ and the corresponding message sent to the target node is
\[
r_0=\frac{1}{1+2\rho}.
\]
Let $s$ represent $1+2/\rho$ and $t$ represent $1+2\rho$ (modulo inversion). Overall, the outgoing message for the assumed observed value is 
\[
r_0= t(p(1-p)+p^2)+s((1-p)^2+p(1-p)).  
\]
\item Suppose that $\partial Y=11$ is observed. By symmetry
\[
\P(X=0|\partial Y=11)=\P(X=1|\partial Y=00)=p(1-p)+p^2.
\]
The corresponding message is $1+2\rho$. Likewise
\[
\P(X=1|\partial Y=11)=\P(X=0|\partial Y=00)=p(1-p)+(1-p)^2
\]
with message $\frac{1}{1+2/\rho}.$ 
The outgoing message is 
\[
r_0= t(p(1-p)+p^2)+s((1-p)^2+p(1-p)).  
\]
\item Suppose that $\partial Y=01$ or $\partial Y=10$ is observed. Then 
\[
\P(X=0|\partial Y=10)=\P(X=1|\partial Y=10)=1/2
\]
by symmetry. The corresponding messages in each case are, $1+\rho+1/\rho$ for $X=0$ and $\frac{1}{1+\rho+1/\rho}$ for $ X=1$,
which we represent by $z$. 
\item Adding up all the terms, we get the following $f$-polynomial to compute $E^\BSC$:
\begin{equation}
f=\frac{z}{2}+\frac{1}{2}(t(p(1-p)+p^2)+s((1-p)^2+p(1-p))).
\label{eq:fbsc3}
\end{equation}
\end{itemize}
We use this polynomial to compute $E^\BSC_{\le 10}$ and obtain an upper bound on BP error using Proposition \ref{prop:approx_gives_lower_bound}. The upper bound is compared with the simulation results in  Fig.~\ref{fig:BP_fixed}. The above polynomial is also used in Section \ref{sec:channel_transform} to derive bounds on soft information.

\subsection{Comparing $\mathrm{LDMC(3)}$ with $\mathrm{LDMC}(5)$}
It is natural to ask how the BER curves behave for LDMC(d) as $d$ grows. This question is in general computationally difficult to answer. The girth of the computational graph grows exponentially fast with $d$ and BP iterations do not seem to stabilize quickly enough when $d$ is large. Hence, one needs to consider codes of large length and more iterations of BP.  Nevertheless, the case of degree 5 can still be worked out with modest computational means. Here we compare the performance of LDMC(5) with LDMC(3). We also compute the erasure function of LDMC(5) and compare the corresponding bound with simulations. As mentioned before, the spiky nature of histogram observed in Fig.~\ref{fig:hist} is specific to the ensembles of  degree $3$ and hence one cannot expect the BEC lower bound of Proposition \ref{prop:approx_gives_lower_bound} to give equally good predictions on BER for higher degrees. 

We first work out the computation of $E^\BEC$ for LDMC(5). As before, we need to consider various cases for realization of erasures at the input layer:

\begin{itemize}
\item No input bits are erased. This case occurs with probability $q^4$. If the input bits are balanced, no error occurs. The complimentary event in which the bits are not balanced has probability $5/8$, in which case the message to the target bit is  $r_0=1$ and error is $1/2$. The corresponding term is $5/8q^4$.
\item One bit is erased. This happens with probability $4q^3(1-q)$. There are two cases to consider in which an error may occur: 1) all three unerased bits agree with the majority; this happens with probability 1/4, and the corresponding message is $1$. 2) Two unerased bits agree with the majority; this happens with probability 9/16; the corresponding message is $2$, which we represent with $u$. Overall, the error term is $4q^3(1-q)(1/4+9/16u)$.
\item Two bits are erased. This happens with probability $6q^2(1-q)^2$. There are two cases in which an error may occur: 1) both unerased bits agree wit the majority; this happens with probability $7/16$, and the corresponding message is $4/3$, denoted by $w$. 2) one unerased bit agrees with the majority; this happens with probability $1/2$, and the corresponding message is $3$, denoted by $z$. Overall, the error term is $6q^2(1-q)^2(7/16w+1/2z)$. 
\item Three bits are erased. This happens with probability $4q(1-q)^3$. Two cases need to be considered: 1) the unerased bit agree with the majority, which happens with probability 11/16, in which the message is $7/4$; we represent this message by $v$. 2) the unerased bit disagrees with the majority, which happens with probability $5/16$ and gives a message of $4$, represented here by $s$. The corresponding term is $4q(1-q)^4(11/16v+5/16s)$.

\item All bits are erased. This happens with probability $(1-q)^4$ in which case the message is $11/5$. We represent this message by $t$. The corresponding term is $(1-q)^4t$.
\item Adding up all the terms, we get the following polynomial
\begin{equation}
f(q)=5/8q^4+4q^3(1-q)(1/4+9/16u)+6q^2(1-q)^2(7/16w+1/2z)+4q(1-q)^3(11/16v+5/16s)+(1-q)^4t.
\label{eq:fbec5}
\end{equation}
\end{itemize}
\begin{figure}[ht]
\centering
     \subfloat[Density evolution lower bounds for MAP and BP]{\includegraphics[width=0.5\textwidth]{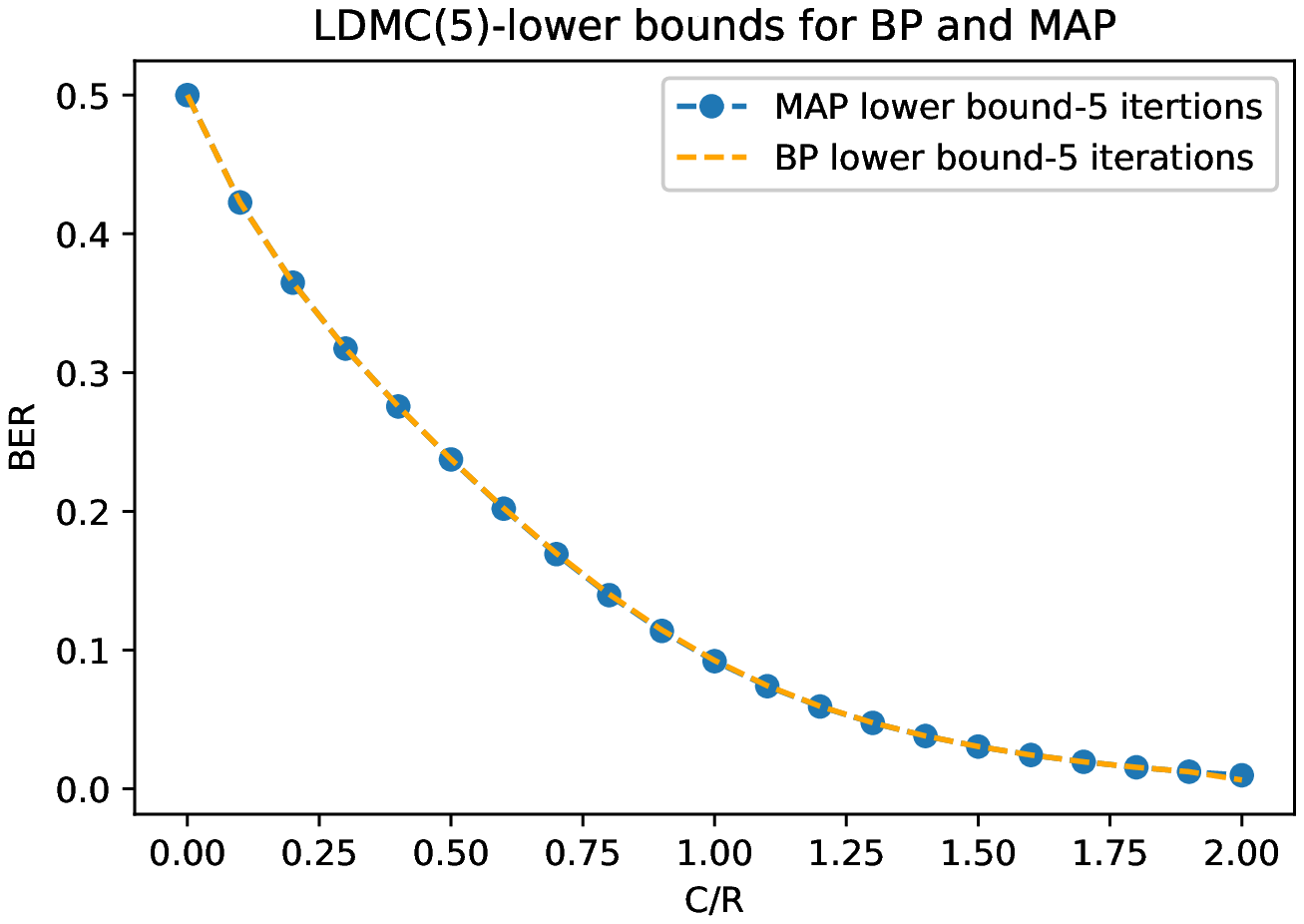}}
\subfloat[Bounds from density evolution vs empirical BER]{\includegraphics[width=0.5\textwidth]{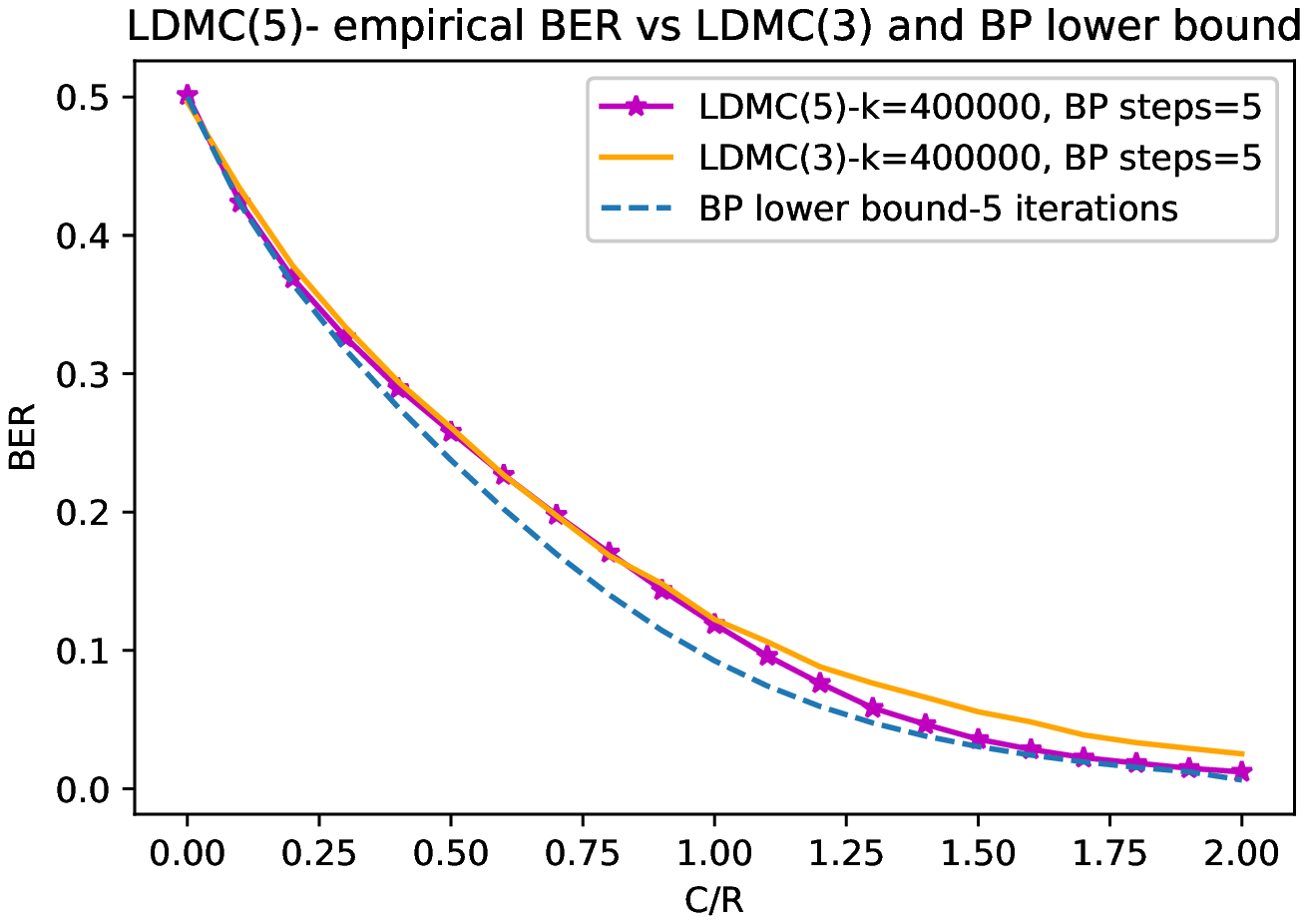}}

\caption{The LDMC(5) performance with 5 iterations of BP along with the lower bound of Proposition  \ref{prop:approx_gives_lower_bound} using $\ell=5$ and $E^\BEC_{\le 10}$-function. The BP lower bound is for LDMC(5), and is equal to $q^\BEC_5(0)$ in (\ref{eq:qbec}) with the erasure function obtained from (\ref{eq:fbec5}). The MAP lower bound is $q^\BEC_5(1)$. a) The density evolution dynamics of (\ref{eq:qbec}) has a unique fixed point. Hence both bitwise MAP and BP lower bounds converge to the same point. b) The BP performance for LDMC(5) is compared against its lower bound $q^\BEC_5(0)$  and the performance of LDMC(3). The lower bound can be seen to be looser compared with the scenario in Fig.~\ref{fig:BP_fixed}. This can be attributed to the fact that the empirical histogram of beliefs in LDMC(5) shown in Fig.~\ref{fig:hist_d5} is less spiky compared with Fig.~\ref{fig:hist} and can no longer be well approximated by one parameter (i.e., the erasure probability). Furthermore, LDMC(5) can be seen to perform better than LDMC(3) for all erasure probabilities. However, we note that longer codes are needed to avoid short cycles in the computational graph and achieve good decoding performance for $d=5$. }
\label{fig:ldmc_d5vsd3}
\end{figure}

\begin{figure}[ht]
\centering
\subfloat[$C/R=1$]{\includegraphics[width=0.5\textwidth]{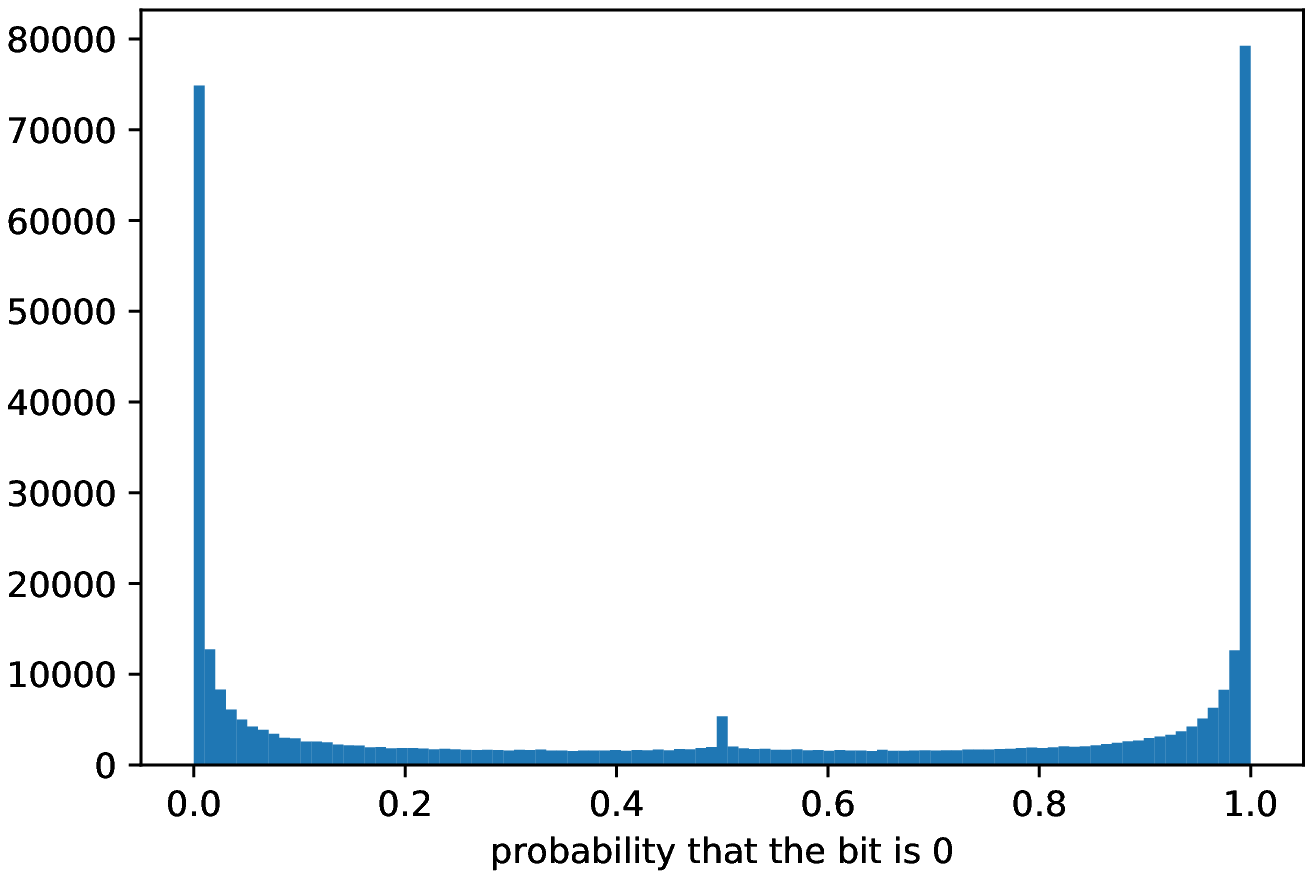}}
\subfloat
[$C/R=0.25$]{
\includegraphics[width=0.5\textwidth]{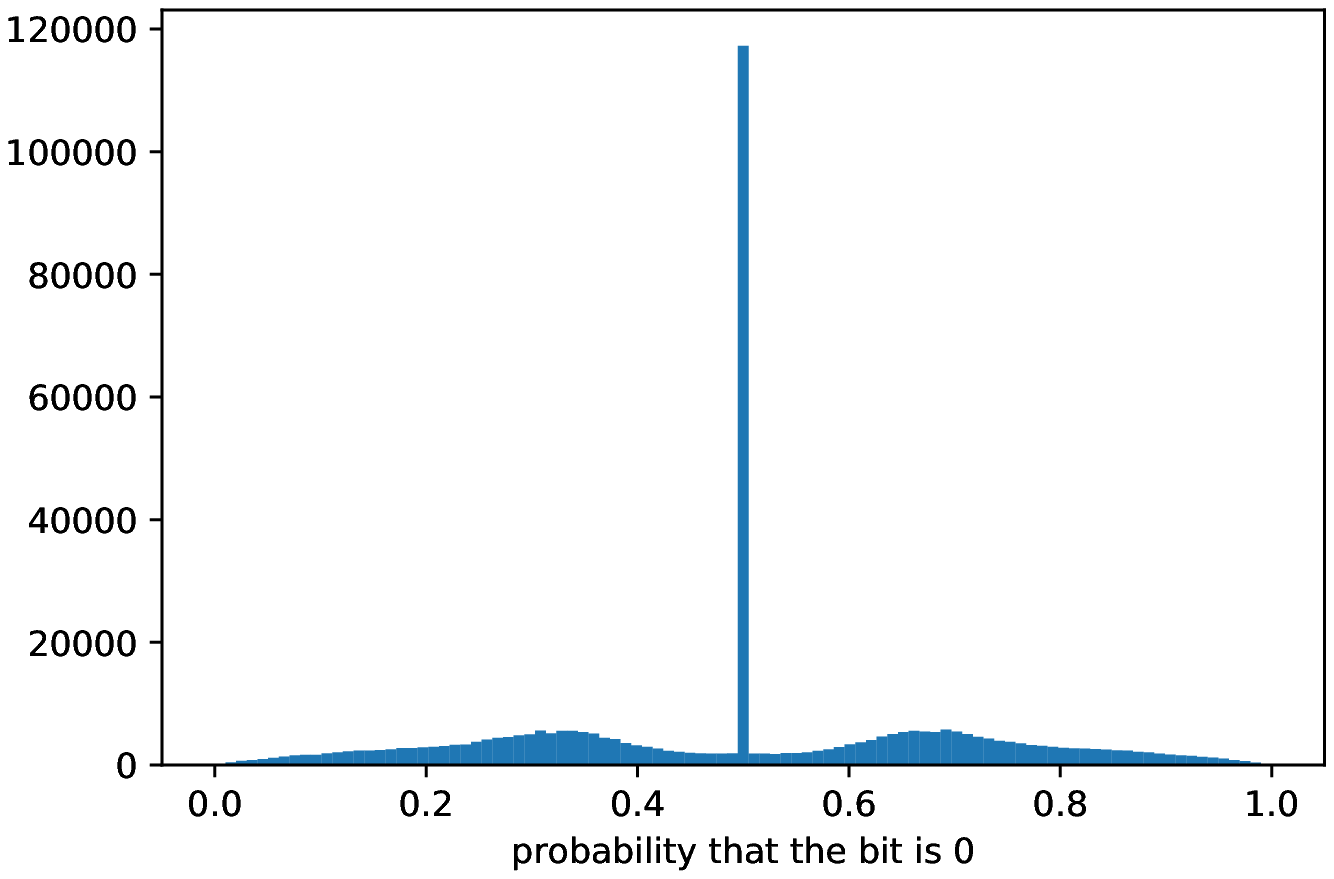}}
\caption{The empirical histogram of belief distributions for LDMC(5) with $k=400000$ bits after 5 iterations of BP. The number of bits that are $0$ with probability close to $p$ are shown as a function of $p$ for a) $C/R=1$ b) $C/R=0.25$.}
\label{fig:hist_d5}
\end{figure}
Using (\ref{eq:Ed}) we compute $E_{\le 10}^\BEC$ for LDMC(5) and then apply Proposition \ref{prop:approx_gives_lower_bound} to compute a lower bound on BER. The results are shown in Fig.~\ref{fig:ldmc_d5vsd3} along with comparisons between ensembles of degree $3$ and $5$ for 5 iterations of BP. We note that the effect of truncation is of a lower order than the scale of the plots in Fig.~\ref{fig:ldmc_d5vsd3}. Since $E_d(q)$ is monotonically decreasing in $d$ and $q$, we can deduce for all $\alpha\le 1$ that
\[
|E^\BEC(\alpha,q)-E^\BEC_{\le 10}(\alpha,q)|\le E^\BEC_{10}(0)\PP(\mathrm{Poi}(5)>10)=0.001.
\]
Thus the gap between $q_\ell$ and $\delta^{BP}$ for the degree 5 ensemble cannot be attributed to the truncation, but rather to the role of the ``uniform'' component of the belief histogram shown in Fig.~\ref{fig:hist_d5}. 

We still see in Fig.~\ref{fig:ldmc_d5vsd3}a that $q^{\BEC}_l$ converges to a unique point regardless of the initial condition for LDMC(5). We remark that the same holds for the error dynamics of the large $d$ limit obtained below in (\ref{eq:dynamics_large_d}). In the view of these observations, we put forth the following conjecture:

\begin{conjecture}
For any ensemble generated by a monotone function, $q^\BEC_\ell(x)$ converges to a unique fixed point independent of $x$.
\label{conj:monotone}
\end{conjecture}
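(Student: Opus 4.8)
The plan is to view the recursion (\ref{eq:qbec}) as the iteration of the single scalar map $\Phi(q)\triangleq 1-2E^{\BEC}_{\le d}(\alpha,q)$ on $[0,1]$, and to reduce the conjecture to showing that $\Phi$ has a unique fixed point. First I would collect the structural properties of $\Phi$. Since $E^{\BEC}_{\le d}(\alpha,q)=\sum_{k\le d}\P(\mathrm{Deg}=k)E^{\BEC}_k(q)$ is a sub-probability-weighted average of the error polynomials $E^{\BEC}_k(q)\in[0,\tfrac12]$, the map $\Phi$ is a polynomial sending $[0,1]$ into $[0,1]$. The key use of monotonicity of the generating function $f$ is that $E^{\BEC}_k(q)$ is non-increasing in $q$: increasing $q$ makes the $\BEC_{1-q}$ observations at the leaves less degraded, so by Lemma~\ref{lem:lemma_B}(1) the whole depth-$1$ channel $S_0\mapsto(\text{observations})$ becomes less degraded and its MAP error can only drop. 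Hence $\Phi$ is non-decreasing on $[0,1]$ (this is the monotonicity already asserted in Remark~\ref{rmk:truncation}, applied to the degree-$k$ pieces).

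Given that $\Phi:[0,1]\to[0,1]$ is continuous and non-decreasing, the orbit from $x=0$ satisfies $q^{\BEC}_0=0\le\Phi(0)=q^{\BEC}_1$, and applying the monotone $\Phi$ repeatedly yields a non-decreasing bounded sequence $q^{\BEC}_t(0)$, converging to the smallest fixed point $\underline q$. Symmetrically $q^{\BEC}_t(1)$ is non-increasing and converges to the largest fixed point $\overline q$. For any $x\in[0,1]$ we have $0\le x\le 1$, so monotonicity of $\Phi$ gives $q^{\BEC}_t(0)\le q^{\BEC}_t(x)\le q^{\BEC}_t(1)$ for all $t$; thus if $\underline q=\overline q$ every orbit is squeezed to this common limit, which is precisely the conjecture. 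So the problem is equivalent to: $\Phi$ has no two distinct fixed points in $[0,1]$.

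For the uniqueness step, note first that $\Phi(0)>0$: with truncation this is immediate since $E^{\BEC}_{\le d}(\alpha,0)\le\tfrac12\,\P(\mathrm{Deg}\le d)<\tfrac12$ (and without truncation it still holds whenever $f$ genuinely depends on its inputs, e.g. a majority, because a degree-$k$ node with $k\ge1$ has $E^{\BEC}_k(0)<\tfrac12$: even an all-erased boundary biases $S_0$ through the majority constraints). Hence $q=0$ is not fixed. As a polynomial distinct from the identity, $\Phi$ has finitely many fixed points, so if it had two distinct ones we could take adjacent fixed points $q_1<q_2$; then $g\triangleq\Phi-\mathrm{id}$ vanishes at $q_1,q_2$, is smooth, and has a constant sign on $(q_1,q_2)$, which forces $\Phi'(q_1)\ge1$ (if $g>0$) or $\Phi'(q_2)\ge1$ (if $g<0$). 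Therefore it suffices to prove $\Phi'(q^{*})<1$ at every fixed point $q^{*}$, i.e. $\sum_k\P(\mathrm{Deg}=k)\bigl(-\tfrac{d}{dq}E^{\BEC}_k(q^{*})\bigr)<\tfrac12$. A cleaner sufficient condition is that $\Phi$ be concave, i.e. that every $E^{\BEC}_k(q)$ be convex in $q$; for $\mathrm{LDMC}(3)$ this is verifiable in closed form --- e.g. $E^{\BEC}_2(q)=\tfrac14-\tfrac14q+\tfrac14q^2-\tfrac14q^3+\tfrac18q^4$ has $(E^{\BEC}_2)''(q)=\tfrac12(1-3q+3q^2)>0$ --- and combined with $\Phi(0)>0$ a concave increasing self-map of $[0,1]$ crosses the diagonal exactly once.

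The main obstacle is this convexity (or the weaker derivative bound). Writing $E^{\BEC}_k(q)=\sum_{S}q^{|S|}(1-q)^{n_k-|S|}e_S$, where $e_S\in[0,\tfrac12]$ is the bit-MAP error when exactly the leaf set $S$ survives and $S\mapsto e_S$ is non-increasing under inclusion because $f$ is monotone, makes the statement plausible, but the expectation of a monotone decreasing set-function under a product $\mathrm{Ber}(q)$ measure need not be convex in $q$ for an arbitrary such $e_S$; so the inequality must use finer structure of $f$ --- most likely that $f$ is a threshold function (majority) rather than merely monotone --- presumably via an FKG-type correlation inequality or a coupling of the joint law of the upward likelihood-ratio messages at two values of $q$. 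I would first try to establish such a coupling directly for $\mathrm{LDMC}(d)$, and only afterward try to pin down exactly which property of $f$ is being used (which would indicate whether the conjecture really holds for all monotone $f$ or should be restricted to threshold-type ensembles).
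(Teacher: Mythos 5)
First, a point of calibration: the statement you are addressing is stated in the paper as Conjecture~\ref{conj:monotone} and is left \emph{open} there. The paper offers only numerical evidence (the unique fixed points observed for LDMC(3), LDMC(5), and the large-$d$ dynamics of (\ref{eq:dynamics_large_d})) together with the LDGM counterexample of Remark~\ref{remark:q0vsq1} showing that some hypothesis on the generating function is necessary. So there is no proof in the paper to compare against, and your proposal has to be judged as an attempt at an open problem.

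Your reduction is sound as far as it goes: writing the recursion (\ref{eq:qbec}) as iteration of the scalar map $\Phi(q)=1-2E^{\BEC}_{\le d}(\alpha,q)$, observing that $\Phi$ is non-decreasing (larger $q$ means less degraded leaf channels, hence by Lemma~\ref{lem:lemma_B} a less degraded depth-one channel and a smaller error quantity), and sandwiching every orbit between the monotone orbits started from $0$ and $1$ correctly reduces the conjecture to the assertion that $\Phi$ has a unique fixed point in $[0,1]$. Two caveats, one minor and one fatal. Minor: the monotonicity of $\Phi$ in $q$ has nothing to do with monotonicity of $f$ --- it holds for \emph{every} ensemble by the degradation argument, and indeed the LDGM map $\Phi(q)=1-e^{-\alpha d q^{d-1}}$ is increasing yet has two fixed points. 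So the monotone-$f$ hypothesis must do all of its work in the uniqueness step, not in the set-up, and attributing its "key use" to the monotonicity of $E^{\BEC}_k$ in $q$ misplaces it. Fatal: the uniqueness step is exactly where your argument stops. Concavity of $\Phi$ (equivalently, convexity of each $E^{\BEC}_k$ in $q$), or the weaker bound $\Phi'(q^*)<1$ at every fixed point, would suffice, and you verify convexity for a single polynomial of a single ensemble; but, as you yourself note, the expectation of a monotone set function under product $\Ber(q)$ sampling of the leaves need not be convex in $q$, and you do not identify which property of the majority (or of a general monotone $f$) forces the required inequality. That missing inequality is not a routine detail to be filled in later --- it is the entire mathematical content of the conjecture, and it remains unproved in your write-up just as it does in the paper. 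What you have is a correct reformulation together with a plausible programme, not a proof.
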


\begin{remark}
We note that the conjecture does not hold for general ensembles. For instance, we have $q_\ell^\BEC(1)=1$ for LDGM(d) whereas $q_\ell^\BEC(0)=0$ for all $\ell$. In fact, Mackay showed in ~\cite{mackay1999good} that the ensembles generated by $\mathrm{XOR}$ are {\em very good}, meaning that for large enough degree they can asymptotically achieve arbitrarily small error for rates close to capacity under $\MAP$ decoding. Evidently, such performance cannot be achieved by $\BP$ since for any degree larger than $1$, $q=0$ is a fixed point of $\BP$, i.e., $\BER$ is $1/2$ for all $\ell$. This point shall be explained further in Section \ref{sec:comp_ldgm} (see Fig.~\ref{fig:Dq}).
\label{remark:q0vsq1}
\end{remark}

\subsection{Tighter bounds for systematic $\mathrm{LDMC}(d)$ with $d=3,5$}
It is possible to obtain tighter bounds for systematic ensembles. Here we study the case of systematic regular LDMC(3). The next section extends the analysis to the large $d$ limit for LDMC(d). 

Consider a regular ensemble of (check) degree $d$. Let $\epsilon$ be the probability of erasure and $R$ be the rate of the code with variable degree $1+d(1-R)/R$. Note that we need $d(1-R)/R\in \mathbb{Z}$ to ensure that a regular systematic code exists. As before let $\alpha=C/R$. For a regular systematic ensemble of rate $R$, we have the following erasure function:
\begin{equation}
E^\BEC(q,\alpha)=(1-\alpha R)\sum_{i\le d(1-R)/R} \PP(\mathrm{Bin}(\frac{d(1-R)}{R},\alpha R)=i)E_i^\BEC(q,\alpha).
\label{eq:sys_reg}
\end{equation}
The key observation is that BP can be initially loaded with the information obtained from the systematic bits. In other words we can iterate the dynamical system in (\ref{eq:qbec}) with $x_0=1-\alpha R$ and $E^\BEC$ as above. Clearly, $q^\BEC_1(x_0)$ gives an exact estimate for the first iteration of BP and can serve as an upper bound for the error $\delta_\ell^\BP$. The results are shown in Fig.~\ref{fig:ldmc_tight_bounds} for $R=1/2$ and $d=3$. The bounds can be seen to be rather tight. We note that the accuracy of these bounds depend primarily on the rate and the check degree of the ensemble and not the regularity assumption.  
\begin{figure}[ht]
\centering
\subfloat[LDMC(3) vs bounds]{\includegraphics[width=0.5\textwidth]{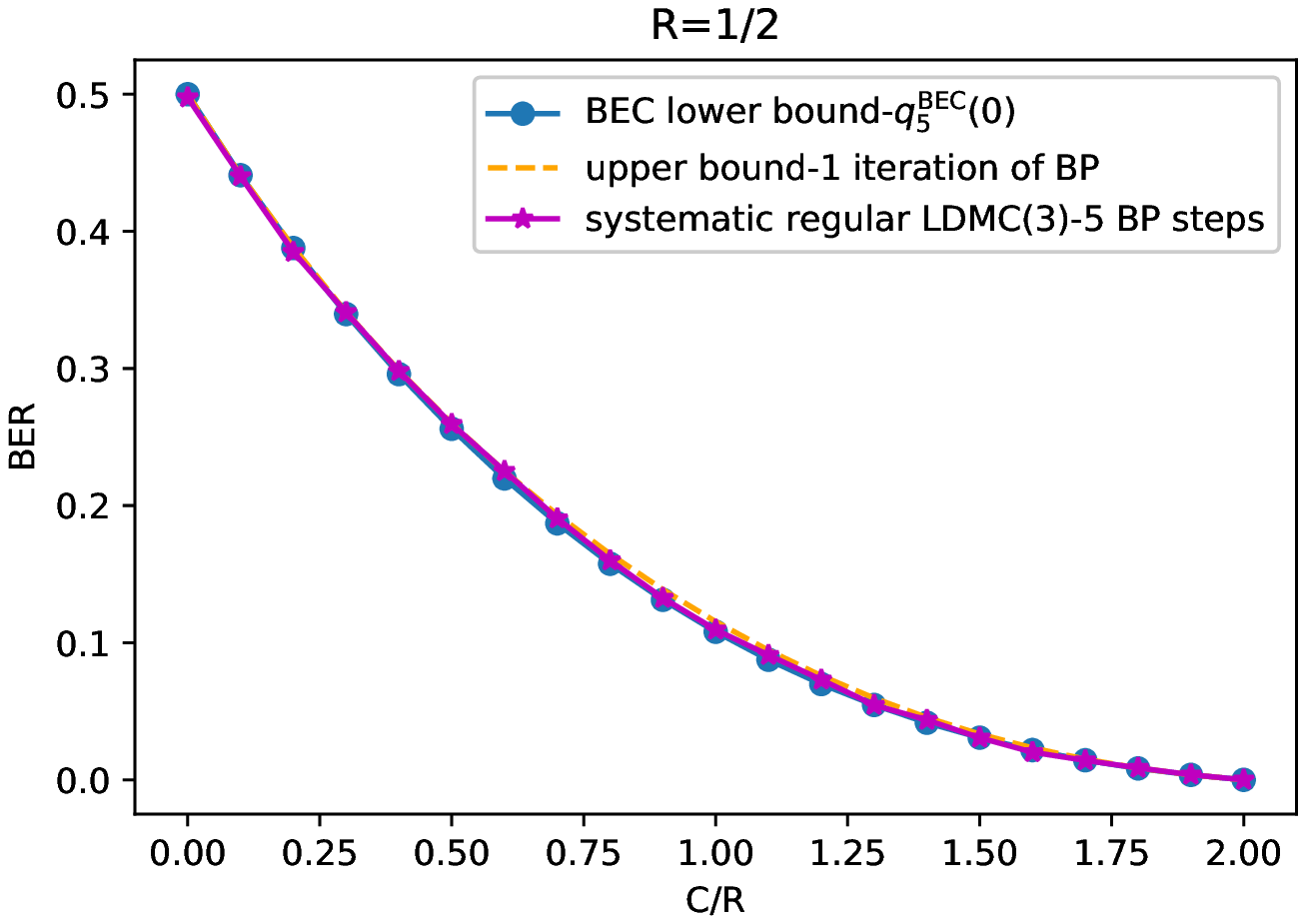}}
\subfloat
[LDMC(5) vs bounds]{
\includegraphics[width=0.5\textwidth]{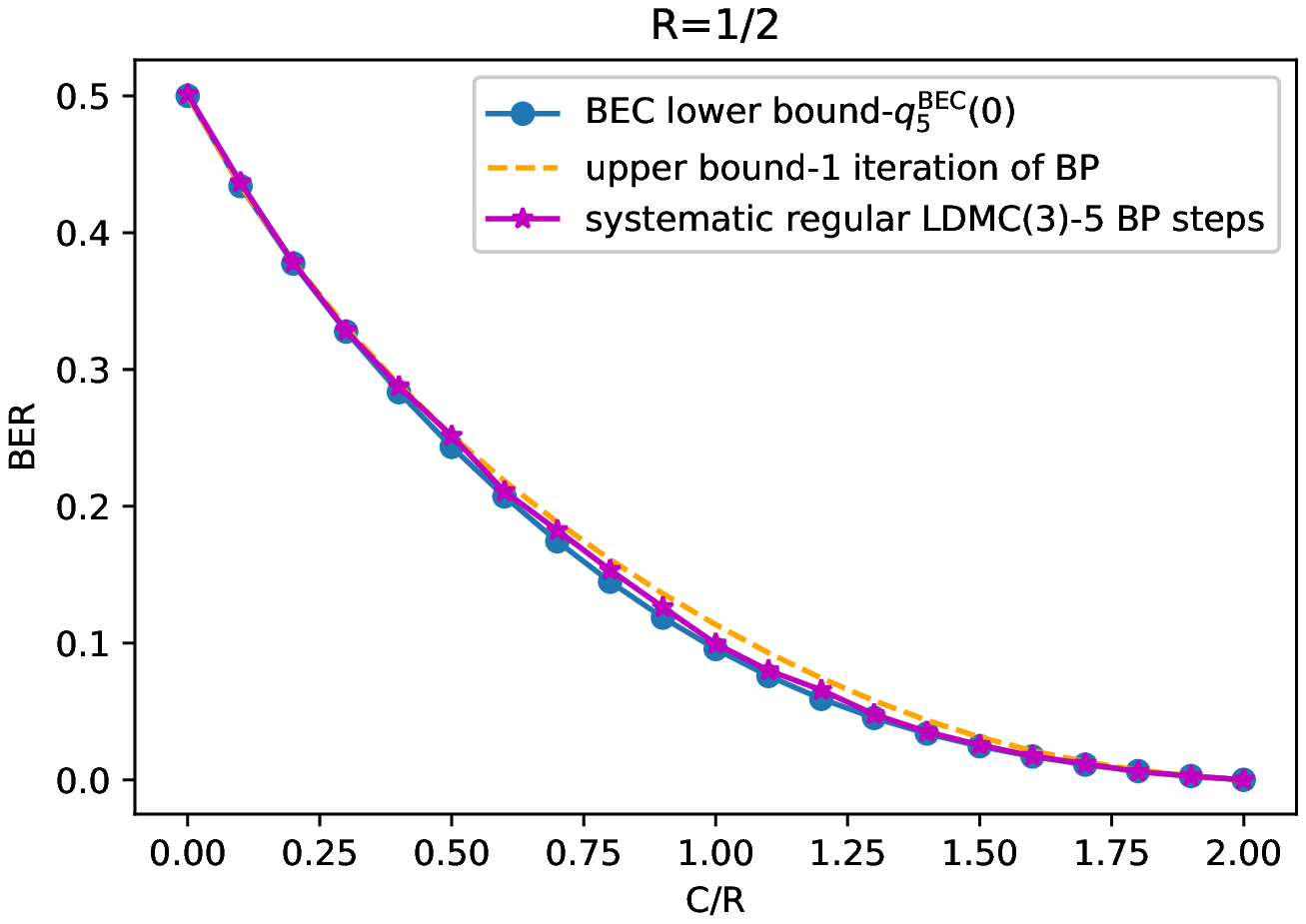}}
\caption{The performance of systematic regular LDMC(d) of rate $R=1/2$, $k=40000$, using 5 steps of BP along with the upper and lower bounds obtained from the erasure functions for (a) d=3 and (b) d=5. The upper and lower bounds match for systematic LDMC(3). The upper bound uses one iteration of (\ref{eq:qbec}) with the erasure function as in (\ref{eq:sys_reg}) and initialized at $x_0=1-\alpha R$. The initial point is the fraction of unerased bits observed in the systematic portion of the code.}
\label{fig:ldmc_tight_bounds}
\end{figure}

\subsection{Upper bound for systematic $\mathrm{LDMC}(d)$ as $d\to \infty$}
\label{sec:upper_bound}
Now we consider the case where the node degree tends to infinity for systematic LDMC(d) of rate $\frac{1}{2}$. To get an upper bound for LDMC codes in this case, we can analyze one step of BP. To do this, we first need to understand what a typical majority to bit message looks like as degree increases. 

Consider a majority $X$ of $d+1$ bits $S_0,\cdots,S_d$. Let $r_i:=\pi_{S_i}(0)/\pi_{S_i}(1)$, $i>0$ be the incoming messages to the local neighborhood. Let $r_0:=\pi_{S_0}(0)/\pi_{S_0}(1)$ be the outgoing message. Then the BP update rule when $X=0$ is as follows:
\begin{equation}
r_0=1+\frac{\sum_{|I|=d/2}\prod_{i\in I} 1/r_i}{\sum_{|I|< d/2}\prod_{i\in I} 1/r_i}.
\label{eq:BP_update_large_d}
\end{equation}
Set $\alpha:=C/R$. Initially, around $p:=\alpha/2$ fraction of the bits are returned by the channel. We have that of the $d-1$ nodes that the target bit is connected to, around $dp$ are recovered perfectly. In this case, roughly $dp/2$ send a message of $r_i=\infty$ and the rest send $r_i=0$. There are around $(1-p)d$ nodes that are undecided and send a message of $1$ into the local neighborhood. Then if we group the terms in the numerator that contain the strong $1/r_i=\infty$ messages with the terms that send the uninformative $r_i=1$, we get the dominating terms in both the numerator and denominator of (\ref{eq:BP_update_large_d}). Let $S'$ be the subset of nodes with $r_i=1$. Given that $|S'|\approx d(1-p)$, the outing message is asymptotically
\[
r_0\approx 1+\frac{\sum_{I\subset S',|I| =d(1-p)/2} 1}{\sum_{I\subset S',|I| \le (d-2)(1-p)/2} 1}= 1+\frac{{d(1-p)\choose d(1-p)/2}}{\sum_{j\le (d-2)(1-p)/2 }{d(1-p)\choose j}}.
\]
By Stirling's approximation, the numerator behaves as
\[
2^{d(1-p)}\sqrt{\frac{2}{d(1-p)\pi}},
\]
and the denominator is roughly
\[
2^{d(1-p)}/2.
\]
Then the triangle to bit message when $X=0$ behaves like
\[
1+2\sqrt{\frac{2}{d(1-p)\pi}}.
\]
Some of the incoming messages to the target bit will cancel each other and the rest will amplify. If $N_0$ is the number of majorities that evaluate to $0$ and $N_1$ is the number of majorities that evaluate to 1, then the decoding error is
\begin{equation}
\frac{1}{1+(1+2\sqrt{\frac{2}{d(1-p)\pi}})^{|N_0-N_1|}}.
\label{eq:dec_err}
\end{equation}
If we integrate this expression w.r.t the distribution of $N_0-N_1$ then we get the average error at $x_0$. One can show that the probability that a node agrees with its majority is:
\[
\frac{1}{2}(1+\sqrt{\frac{2}{\pi d}}).
\]
Note that $N_0-N_1$ is asymptotically normal by the CLT. When $X=0$, $N_0-N_1$ has mean $d'\alpha\sqrt{\frac{2}{\pi d}}$ and variance $d'\alpha$ where $d'=d(1-r)$. When $r=1/2$ we get $d'=d/2$ and initially we have $p=1/2$. Thus $N_0-N_1\sim d'\alpha\sqrt{\frac{2}{\pi d(1-p)}}+\sqrt{d'\alpha} Z$ where $Z$ is standard normal. We can write this as $N_0-N_1\sim \sqrt{d'\alpha}(\sqrt{d'\alpha}\sqrt{\frac{2}{\pi d}}+Z)$.

We can integrate (\ref{eq:dec_err}) w.r.t to this density to find the average decoding error after one iteration of BP. Setting $d'=d(1-r)$ and taking the limit as $d\to\infty$, we find that 
\begin{equation}
\lim_{d\to \infty} \int_{-\infty}^\infty \frac{1}{1+(1+2\sqrt{\frac{2}{d(1-p)\pi}})^{|\sqrt{d\alpha/2}(z+\sqrt{\frac{\alpha}{\pi}})|}}
f(z) dz= \int_{-\infty}^\infty \frac{1}{1+e^{|2\sqrt{\frac{2\alpha(1-r)}{\pi(1-p)}}z+\frac{4\alpha(1-r)}{\pi\sqrt{1-p}}|}}
f(z) dz.
\label{eq:dynamics_large_d}
\end{equation}
This integral gives an upper bound on the decoding error of BP in the asymptotic regime of large $d$. Fig.~\ref{fig:upper-bound-sys} shows the above bound versus the empirical performance of systematic LDMC(17). BP converges fast for systematic LDMCs, which explains the accuracy of this one step prediction. 

\begin{figure}[ht]
\centering
\includegraphics[width=0.5\textwidth]{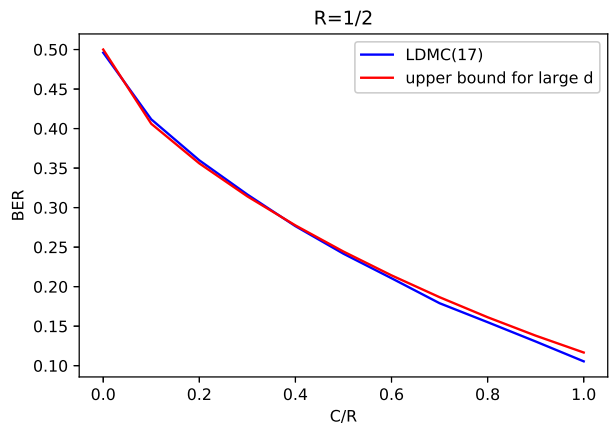}
\caption{The empirical performance of systematic LDMC(17) after 5 iterations along with the predicted on step error in the large $d$ limit obtained from (\ref{eq:dynamics_large_d}) based on one step of BP.}
\label{fig:upper-bound-sys}
\end{figure}

\section{Improving LDGMs using LDMCs}
\label{sec:comp_ldgm}
In this section we study LDGMs and code optimization. Recall that LDGM($d$) is the  ensemble of degree $d$ generated by the parity function. We show that a joint design over LDGMs and LDMCs can uniformly improve the performance of LDGMs for all noise levels in some simple settings.  

As discussed in the introduction, LDGMs are some of the most widely used families of linear codes. They are known to be good both in the sense of coding \cite{mackay1999good} and compression \cite{wainwright2005lossy}. In fact, \cite{mackay1999good} shows that LDGM($d$) (for odd\footnote{When $d$ is even the all one vector is in the kernel of the generator matrix. This implies that BER is 1/2.} $d\ge 3$) enjoys, from a theoretical perspective, almost all the good properties of random codes.   Indeed as shown in Fig.~\ref{fig:ldgm_map}, even relatively short LDGMs can achieve reasonably small error under MAP decoding. As the codes get longer, and the degrees grow, the error can be made arbitrarily small for all $C/R>1$. From a practical perspective, however, their decoding is problematic. The problem is that MAP decoding is not easy to implement in practice even for moderate size codes. BP decoding is not feasible either since for such codes, as generated, BP has a trivial local minima in which all bits remain unbiased. One may hope that adding a small number of degree 1 nodes would enable BP to get around this initial fixed point and achieve near optimal performance. Unfortunately, this is not the case. While small perturbations in degree distribution can sometimes lead to huge boosts in performance (e.g. see Fig.~\ref{fig:LDGM_LDMC} for the degree 2 case), it is often not enough to reach MAP level performance. In general. boosting BP performance for LDGMs is a non-trivial task that often involves some careful code optimization with many relevant parameters. We briefly discuss this matter next.

\begin{figure}[ht]
\centering
\includegraphics[width=0.5\textwidth]{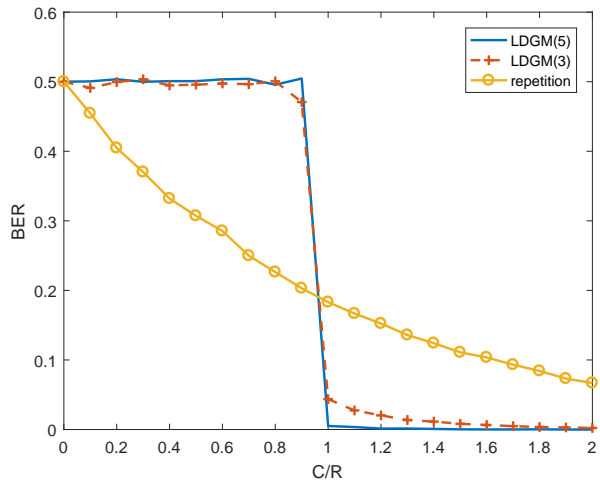}
\caption{The empirical performance of LDGM($d$) for $d=1,3,5$ under (bitwise) MAP decoding for rate $R=1/2$ and $k=1000$. The bitwise decoding is obtained by computing $\hrank$. LDGMs quickly achieve a threshold-like performance as their degree increases. 
}
\label{fig:ldgm_map}
\end{figure}

To understand how LDGM($d$) behaves under BP, we first construct its erasure function and then appeal to Proposition \ref{prop:approx_gives_lower_bound}. With the notation of Fig.~\ref{fig:comp_trees}, we note that a parity check $\Delta_j$ of degree $d$ can determine a target bit $X_0$ if all of its $d-1$ leaf bits $\partial_j X_0$ in the local neighborhood are unerased. Otherwise, it sends an uninformative message. Thus if $1-q$ is the probability of erasure coming into the local neighborhood after some iterations of BP,  then at the next iteration the parity check determines the bit with probability $q^{d-1}$. This gives the $i$-th erasure polynomial
$
E^\BEC_i(q)=1/2(1-q^{d-1})^i.
$
Since the variable node degrees are Poisson distributed (with parameter $\alpha d$), we obtain the following erasure function
\begin{equation}
E_{\mathrm{LDMG}(d)}^\BEC(\alpha,q)=\frac{1}{2}\sum_i \PP(\mathrm{Poi}(\alpha d)=i)(1-q^{d-1})^i=\frac{1}{2}e^{-\alpha dq^{d-1}}.
\end{equation}
We can now study, as before, the local dynamics of error under BP. Let $q^\BEC_\ell(x_0)$ be computed as done in Proposition \ref{prop:approx_gives_lower_bound}. We note that $q^\BEC_\ell(0)$ is in this case (asymptotically) exact for predicting BP error, meaning that whenever the computational graph is a tree the average error is equal to (and not just lower bounded by) $(1-q^\BEC_\ell)/2$. This is due to the fact that parity checks preserve the BEC structure of the input messages, hence, we can write $$\delta^\BP_\ell=\frac{1-q^\BEC_\ell(0)}{2}+o(1).$$
Thus for BP to make any progress during decoding, the following  (necessary) contraction constraint must be satisfied:
\begin{equation}
D(\alpha,q)\triangleq\frac{1-q}{2}-E_{(d)}^\BEC(\alpha,q)> 0.
\label{eq:Dq}    
\end{equation}
In other words, we simply want the outgoing error to be less than the error flowing in. We call the above function the D-function of the ensemble. Similarly, we can define the truncated D-function:
\begin{equation}
D_{\le d}(\alpha,q)\triangleq\frac{1-q}{2}-E_{\le d}^\BEC(\alpha,q).
\label{eq:Dq_truncated}    
\end{equation}
For iterative decoding to take off, we need $D(0)> 0$. Then the first point where $D(q)=0$ occurs determines the limiting performance of BP. Fig.~\ref{fig:Dq} shows the results for LDMC(3) and a mixed LDGM ensemble, which is defined as follows. Let $\Lambda$ be a degree distribution over check degrees. An LDGM ensemble is said to be $\Lambda$-mixed if each check node in the code is selected i.i.d from $\Lambda$.

\begin{figure}[ht]
\centering
\subfloat[$D(\alpha,q)$ for mixed LDGM at $\alpha=1.1$]{\includegraphics[width=0.5\textwidth]{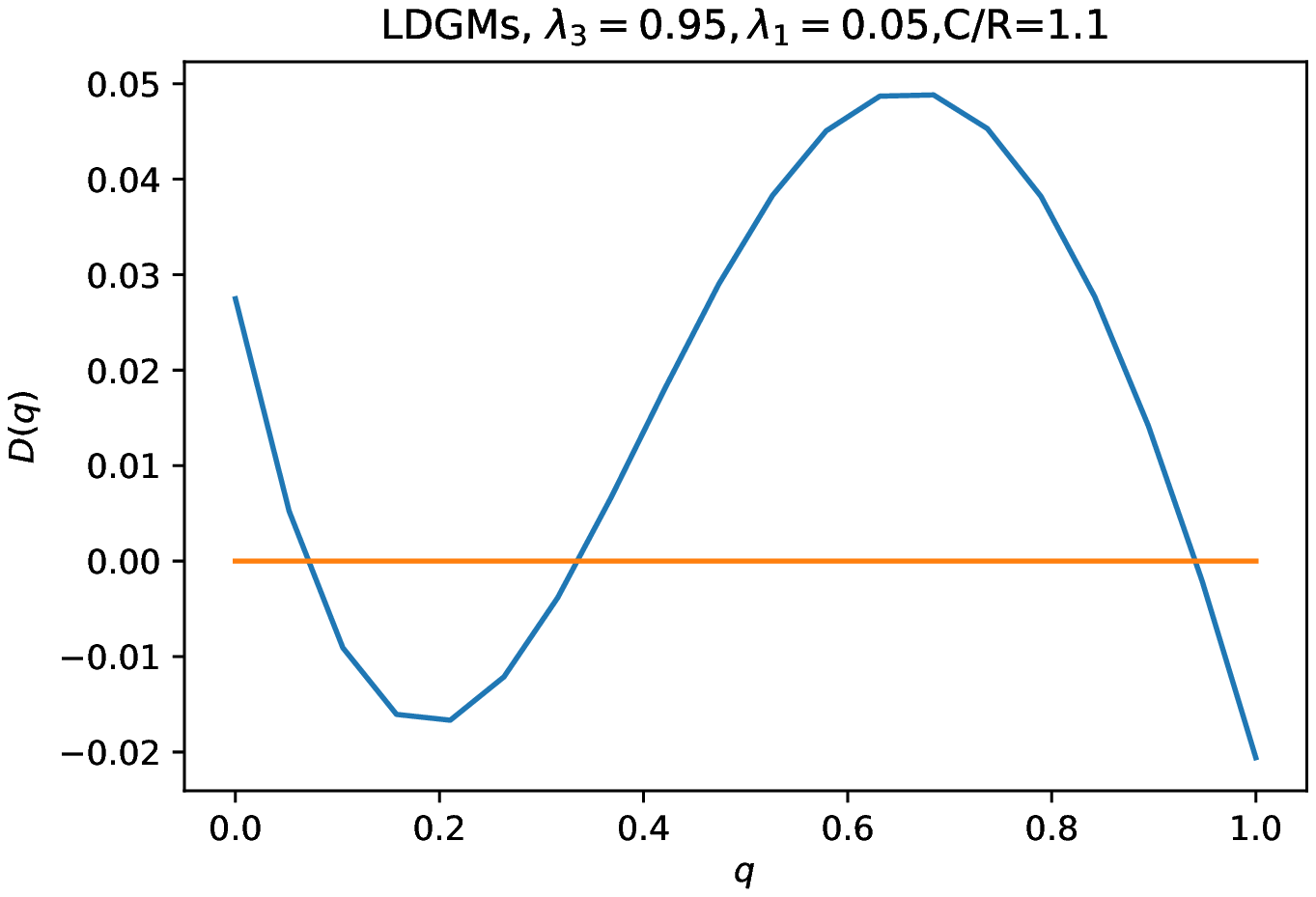}}
\subfloat
[$D_{\le 10}(\alpha,q)$ for LDMC(3) at $\alpha=1.1$]{
\includegraphics[width=0.5\textwidth]{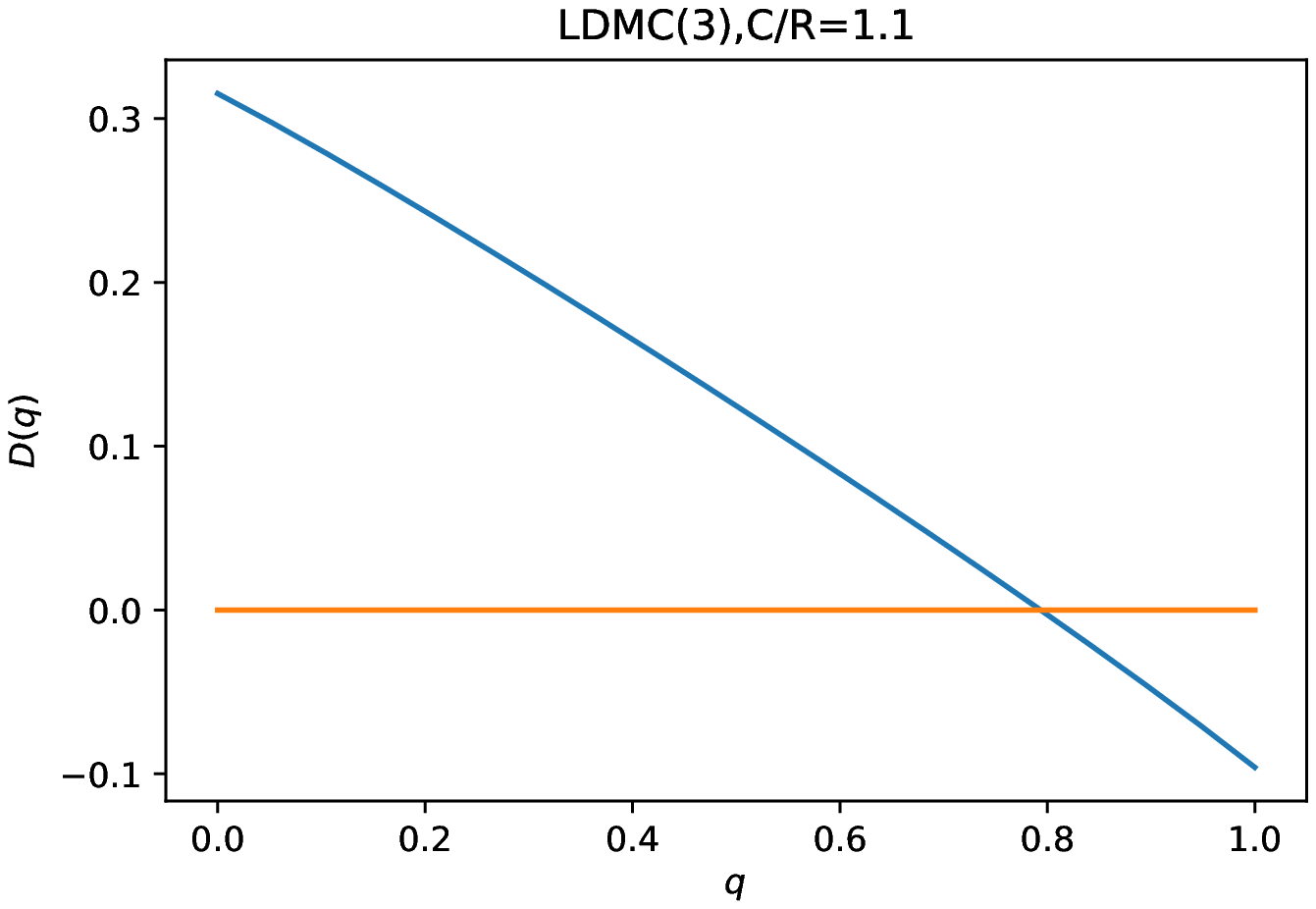}}
\caption{The D-functions from (\ref{eq:Dq})-(\ref{eq:Dq_truncated}) viewed as a function of $q$ for $\alpha=1.1$. The first zero of $D(q)$ determines the fixed points of BP dynamics in (\ref{eq:qbec}). The fixed points remain stable with respect to small variations in the truncation degree $d=10$. Two ensembles are considered: a) A mixed LDGM ensemble using $\lambda_1=0.05$ fraction of degree 1 nodes and $\lambda_3=0.95$ fraction of degree 3 nodes. The degree 1 nodes are needed so that $D(0)>0$ is satisfied. It can be seen that BP has fixed point near 0. Thus small perturbations in degree distributions cannot help BP reach the MAP level of performance shown in Fig.~\ref{fig:ldgm_map}, which corresponds to the right most zero of the D-function. More sophisticated optimization is required to improve the performance. b) The LDMC(3) ensemble. As expected from the observations in Figs. \ref{fig:BP_fixed}-\ref{fig:ldmc_d5vsd3}, $D(q)$ has a unique fixed point. See also Conjecture \ref{conj:monotone}. Furthermore, the relatively large value of $D(0)$ suggests that the convergence is fast for this ensemble. }
\label{fig:Dq}
\end{figure}

 It is easy to define the erasure function of such an ensemble in terms of the erasure function of its regular components. Let $\lambda_i:=\PP(\Lambda=i)$. Suppose that $\Lambda$ has finite support with cardinality $m$. Then the erasure polynomial of an $\Lambda$-mixed LDGM ensemble is simply
\begin{equation}
E_\Lambda^\BEC(\alpha,q)=2^{m-1}\prod_{i=1}^m E_{\mathrm{LDGM}(i)}^\BEC(\alpha \lambda_i,q).
\label{eq:Ebec_mixed}    
\end{equation}
We note that this expression is half the probability that a variable node receives an uninformative message from each component of the code in the ensemble. The code optimization problem can now be formulated in terms of the dynamical system in (\ref{eq:qbec}) associated with this E-function. Suppose that we want to run $\ell$ iterations of BP to decode an LGDM. Let $q^\BEC_{\ell,\alpha}(0)$ be density of unerased bits after $\ell$ iterations with $C/R=\alpha$. If we are interested in minimizing the BP error at two different $C/R$'s, say $\alpha_1$ and $\alpha_2$, then the following optimization problem becomes relevant
\begin{align*}
\textup{maximize}_{\Lambda}\,\,& q_{l,\alpha_1}^{\BEC}(0)+q_{l,\alpha_2}^{\BEC}(0)\\
&\sum_{i} \lambda_i=1\\
&\lambda_i\ge 0.
\end{align*}

This is a non-convex optimization problem. We can solve it up to local optimality using gradient descent. Solving for $\alpha_1=0.9,\alpha_2=1.1$ over LDGM($d$) with $d\le 3$, we find that $\lambda_1=0.08,\lambda_2=0.22,\lambda_3=0.7$. The D-curve and the corresponding performance are shown in Fig.~\ref{fig:LDGM_LDMC_p9}. The same figure demonstrates the performance after we simply remove the lower degree parities by setting $\lambda_1=0,\lambda_2=0$, and replace them with an LDMC(3). We can also jointly optimize over the LDGM/LDMC(3) ensemble.

Since LDMC(3) dominates the repeition code everywhere, we expect this new LDGM/LDMC ensemble to have lower error than the pure LDGM ensemble. Another possible solution for LDGM optimization is $\lambda_1=0.1099,\lambda_2=0.1409,\lambda_3=0.7492$. The performance is shown in Table \ref{table:LDGM_LDMC}, which demonstrates that the joint design uniformly dominates the LDGM ensembles. We note that for LDMCs we compute the performance by simulating over $10^5$ data bits across 10 trials using 80 iterations of BP, whereas for LDGMs we compute the theoretical limits from the fixed points of (\ref{eq:qbec}). The attainable performance in practice is slightly worse than fixed point prediction and often requires minor adjustments to the degree distributions. We can see that the LDGM family exhibits a sharp transition at the end point $C/R=0.9$ while the combined ensembles degrades more smoothly beyond this point while maintaining smaller error everywhere else. Fig.\ref{fig:LDGM_LDMC_p9}b shows that the D-curve with optimal parameters almost touches the $x$-axis for some small $q$ when $\alpha=0.9$. This is an artifact of the optimal designs. Such proximity with zero induces a near fixed point, from which BP requires many iterations to escape until it reaches good performance. 

As another example, setting $\ell:=40,\alpha_1:=0.7,\alpha_2:=1.1$ in the above optimization problem and solving for $\lambda_i$'s in the joint LDGM(d)/LDMC(3) ensemble (with $d\le 3$) gives $\lambda_1=0.0,\lambda_2=0.0,\lambda_3=0.42$ and  $\lambda_{\mathrm{LDMC(3)}}=0.58$. For the LDGM ensemble (with $d\le 3$) we get $\lambda_1=0.003,\lambda_2=0.988,\lambda_3=0.009$. The results are shown in Fig.~\ref{fig:LDGM_LDMC}. We note that the LDGM ensemble has a small fraction of degree 3 and degree 1 nodes. This fact seems to be consistent w.r.t to changes in $\ell$ or the initial points of gradient descent. We further note that an LDGM ensemble with $\lambda_2=1$ has a $\BER$ of 1/2 for all $C/R$. The small fraction of degree $1$ nodes are needed both to initialize the iterative decoding and to break the symmetries in the solution space of LDMC(2).


\begin{figure}[ht]
\centering
\subfloat[LDGM vs  LDGM/LDMC.]{\includegraphics[width=0.55\textwidth]{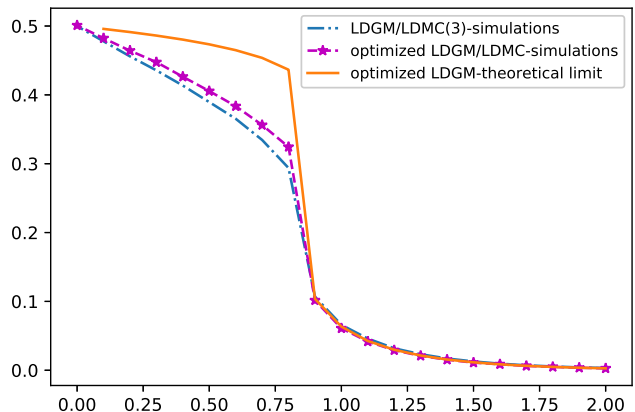}}
\subfloat
[D-curve at $\alpha=0.9$ for the optimized LDGM.]{
\includegraphics[width=0.55\textwidth]{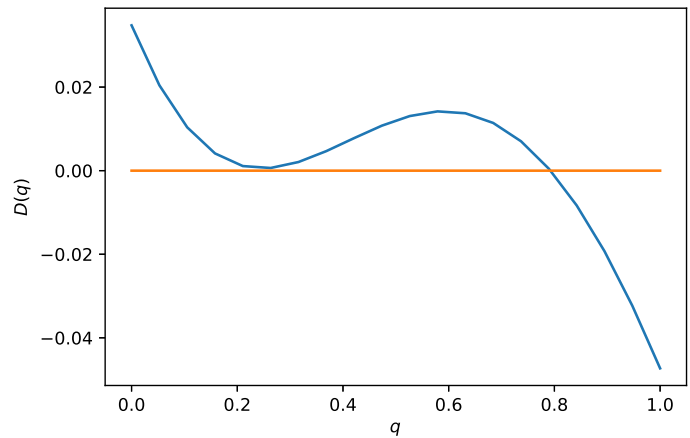}}\\
\subfloat
[BP error vs time steps at $\alpha=0.9$.]{
\includegraphics[width=0.55\textwidth]{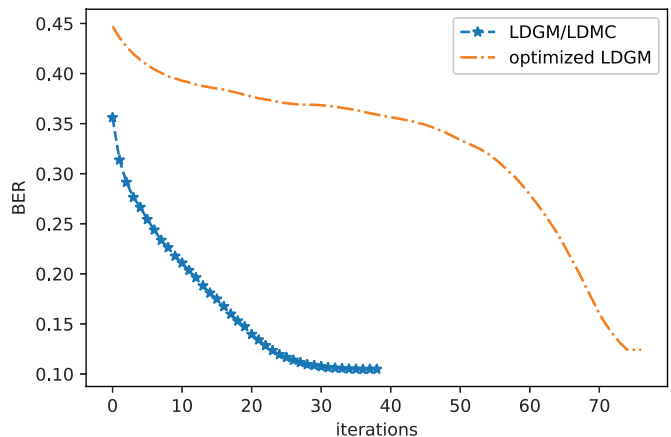}}

\caption{a) Code performance is compared for optimized LDGMs (over degrees $\le 3$), combined  LDGM/LDMC, and optimized LDGM/LDMC ensembles. The LDGM ensemble is optimized for $\alpha_1=0.9$ and $\alpha_2=1.1$. It has parameters $\lambda_1=0.08$, $\lambda_2=0.22$, $\lambda_3=0.7$. The LDGM curve shows the theoretical limit predicted by density evolution and corresponds to the fixed points of the dynamical system in (\ref{eq:qbec}). The LDGM/LDMC curve is obtained by simulations using 60 steps of BP for $k=10^5$ data bits. This ensemble is created by replacing the degree 1 and 2 components of the optimized ensemble with LDMC(3), i.e., it has parameters $\lambda_3=0.7$ and $\lambda_{\mathrm{LDMC}(3)}=0.3$. Since LDMC(3) dominates repetition for all noise levels, it is reasonable to expect that the combined LDGM/LDMC ensemble performs better. The optimized LDGM/LDMC curve shows simulated performance using 100 BP iterations for $k=10^5$ data bits. The ensemble has parameters $\lambda_3=0.75$ and $\lambda_{\mathrm{LDMC}(3)}=0.25$ obtained by optimizing the combined ensemble at the same $C/R$ points mentioned above. b) The D-curve for the optimized LDGM ensemble is shown. The low values of $D$ (relative to the position of the fixed point) and the near zero point at $q\approx 0.25$ indicate that BP requires many iterations to converge. This behavior is typical for optimal designs shown at the bottom figure. The fixed point near $0.8$ is compatible with the error of $0.105$ obtained at $\alpha=0.9$ on the left. c) The progress of BP error is shown at $C/R=0.9$. The nearly flat region in the error curve can be explained by the presence of a near fixed point in the D-curve. Overall, the experiments suggest that the joint ensembles converge  faster and achieve better performance.}
\label{fig:LDGM_LDMC_p9}
\end{figure}

\begin{table*}[t]
\centering
\begin{tabular}[width=0.75\textwidth]{|l|l|l|l|l|l|l|l|}
\hline
$C/R$&2 & 1.5  &1.2& 1.1& 1& 0.9&0.8  \\
 \hline
 optimized LDGM I& 0.00279 &0.0115  &   0.0297&0.0425 & 0.06336&0.10367&0.43652\\
 \hline
 optimized LDGM II& 0.00268  &0.01118  & 0.02915& 0.04183& 0.06278&0.10459&0.42548 \\
 \hline
 optimized LDGM/LDMC(3)& 0.00277&0.01107&0.02906&0.04173&0.06093&0.01016&0.3244\\
\hline 
\end{tabular}
\caption{The empirical performance of optimized LDGM/LDMC(3) ensemble is compared with theoretical limits of two optimized LDGM ensembles. The LDGM/LDMC(3) ensemble has parameters $\lambda_{\mathrm{LDMC}(3)}=0.25$ and $\lambda_{3}=0.75$. The performance is obtained by averaging over $10^5$ data bits across 10 trials using 100 iterations of BP. The optimized LDGM I ensemble has parameters $\lambda_1=0.08,\lambda_2=0.22,\lambda_3=0.7$ and LDGM II has parameters $\lambda_1=0.1099,\lambda_2=0.1409$,$\lambda_3=0.7492$. The theoretical bounds are computed using fixed points of (\ref{eq:qbec}).}
\label{table:LDGM_LDMC}
\end{table*}

\begin{figure}[ht]
\centering
\includegraphics[width=0.5\textwidth]{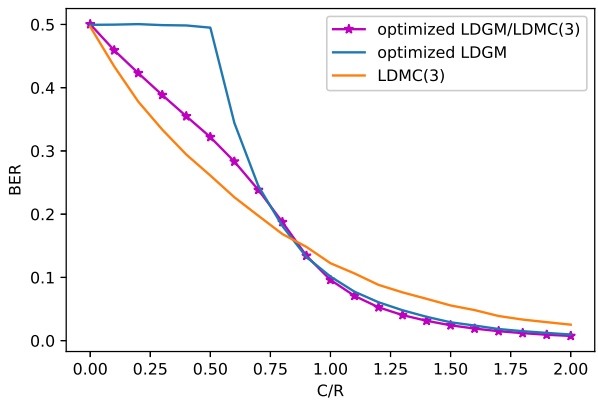}
\caption{BER curves for optimized LDGM/LDMC is compared with optimized LDGM. All curves are obtained by simulations on codes with  $k=200000$ data bits. The code optimization is carried over codes with degrees $d\le 3$. The optimized joint design has degree distribution $\lambda_1=0.0,\lambda_2=0.0,\lambda_3=0.42$ and  $\lambda_{\mathrm{LDMC(3)}}=0.58$. The optimized LDGM has degree distribution  $\lambda_1=0.003,\lambda_2=0.988,\lambda_3=0.009$. The codes are optimized to minimize the sum of BERs at $\alpha_1=0.7$ and $\alpha_2=1.1$. }
\label{fig:LDGM_LDMC}
\end{figure}


\section{Codes as channel transforms}
\label{sec:channel_transform}
In this section we study LDMCs from the perspective of a channel transform. This notion arises when one employs a concatenated code. Concatenated codes are the codes that act on pre-coded information. This means that the input to the code is not an arbitrary point in the alphabet space, but rather the codeword of an outer code. This technique is often used to design codes with high performance and low decoding complexity. For instance, to approach the capacity of the erasure channel with LDPCs one needs to use high degree variable nodes. These in turn create short cycles in the computational graph of the BP decoder, which is problematic for accuracy of BP. To mitigate the impact of cycles, one needs to use very large codes and many iterations of BP, leading to long delays in the communication system as well as an expensive decoding procedure. A common method to circumvent these difficulties is to employ a two (or more) layer design. A low complexity inner code $f_i:\mathcal{A}^k\to\mathcal{A}^n$ is used to reduce the channel error, without necessarily correcting any erasure pattern. Then an outer error correcting code $f_o:\mathcal{A}^{m}\to\mathcal{A}^k$ cleans up the remaining error. The outer code here can be an LDPC but one that faces a weakened channel, hence, it requires fewer BP iterations and can be made to be shorter. It can also be a (short) error correcting code that relies on syndrom decoding. In either case, the overall communication path looks like the following $$\mathcal{A}^{m}\stackrel{f_o}{\to} \mathcal{A}^{k}\underbrace{\stackrel{f_i}{\to} \mathcal{A}^{n}\stackrel{\mathrm{BEC}_\epsilon}{\to} Y\stackrel{g_i}{\to}}_{Q(\epsilon):\text{channel transform}} \mathcal{B}^k\stackrel{g_o}{\mapsto} \mathcal{A}^m.$$ Here $Y$ is the outcome of the channel, $g_i$ is inner decoder and $g_o$ is the outer decoder. The domain of the outer decoder is chosen to be different from the alphabet of the message space on purpose. This is to accommodate various decoding messages that maybe transmitted from the inner decoder to the outer decoder. Two common choices in the literature are: 1) hard decision decoding ($\mathcal{B}=\mathcal{A}$);  in this case the inner decoder can only transmit a hard decision on each bit to the outer decoder corresponding to its best estimate of what the bit value is. 2) soft-decision decoding( $\mathcal{B}=\mathbb{R}$); in this case the inner decoder is allowed to send the bitwise probabilities of error to the outer decoder. In either case, we can view the action of the inner code together with its decoder as one channel $Q$. 

For hard decision decoding, it is clear that the channel (after interleaving) is a BSC with crossover probability equal to BER. For soft-decision decoding, the output of the channel transform is a sequence of probabilities. We view this channel as a product channel that sends the marginals on every bit to the outer decoder $Q(\epsilon):\mathcal{A}^k\to \prod_{i=1}^k \pi_i$. In practice, often an interleaver is placed between the inner and outer decoder to ensure that the bit errors are not correlated, hence, it makes sense to model the action of the inner code with a product channel. To study the performance of codes as channel transforms under soft decision decoding we introduce the notion of soft information
\[
\iota(\epsilon)\triangleq1-\EE[\frac{1}{k} \sum_{i=1}^k h_i(\epsilon)],
\]
where $h_i=h_b(\pi_i)$ is the binary entropy of the $i$-th marginal produced by $Q(\epsilon)$. The soft information can be seen as the average per-bit information sent from the inner code to the hard decision (outer) decoder. If the inner code is wrapped with an interleaver, $\iota$ will closely approximate the capacity of the inner channel $Q(\epsilon)$. In this case, two information bits of the inner code are likely to fall in different blocks of the outer error correcting code. Hence, the possible dependencies between the bits is not relevant.

\subsection{H-functions}
We shall also be interested in estimating the amount of soft information that an inner code produces. Such estimates can prove fruitful in code design under soft decoding in the same manner that the bounds from E-functions were used in code optimization in the previous section. To obtain these estimates, we first define the notion of E-entropies:
\begin{definition}[E-entropies]
Take the setup of Definition \ref{def:erasure_function}. Let $Y_i$'s be $\BEC$ induced observations as before. Define the $d$-th erasure entropy polynomial to be
\[
H_d^\BEC(q)\triangleq\EE[h_b(S_0|M_1,\cdots,M_d,Y^{(1)},\cdots,Y^{(d)})],
\]
where the expectation is taken w.r.t the ensemble distribution. Similarly, the $d$-th truncated erasure entropy polynomial $H^\BEC$ is defined as
\[
H_{\le d}^\BEC(\alpha,q)\triangleq\sum_{k\le d} \P(\mathrm{Deg}=k)H_k^\BEC(q).
\]
The (untruncated) erasure entropy function $H^\BEC(q,\alpha)$ is defined as the limit (in $d$) of $H_{\le d}^\BEC(q,\alpha)$. The corresponding error entropies $H_d^\BSC$ are defined in an analogous way to $E^\BSC$ by replacing $\BEC$ channels  with $\BSC$s as done in Definition \ref{def:error_function}. Similarly, we have the $d$-th truncated error entropy function:
\[
H_{\le d}^\BSC(\alpha,q)\triangleq\sum_{k\le d} \P(\mathrm{Deg}=k)H_k^\BEC(q)+ \PP(\mathrm{Deg}>d).
\]
Finally, the (untruncated) error entropy function $H^\BSC(\alpha,q)$ is defined as the limit of $H^\BSC_{\le d}(\alpha,q)$. 
\label{def:H_func}
\end{definition}
We define the $\chi^2$-entropy functions $\mathcal{H}$ analogously.
\begin{definition}[$\chi^2$-entropies]
Take the setup of Definition \ref{def:erasure_function}. Let $Y_i$'s be $\BEC$ induced observations as before. Define the $d$-th erasure $\chi^2$-entropy polynomial to be
\[
\mathcal{H}_d^\BEC(q)\triangleq\EE[1-I_{\chi^2}(S_0;M_1,\cdots,M_d,Y^{(1)},\cdots,Y^{(d)})],
\]
where the expectation is taken w.r.t the ensemble distribution. Similarly, the $d$-th truncated erasure $\chi^2$-entropy polynomial $\mathcal{H}^\BEC$ is defined as
\[
\mathcal{H}_{\le d}^\BEC(\alpha,q)\triangleq\sum_{k\le d} \P(\mathrm{Deg}=k)\mathcal{H}_k^\BEC(q).
\]
The (untruncated) erasure $\chi^2$-entropy function $\mathcal{H}^\BEC(q,\alpha)$ is defined as the limit (in $d$) of $\mathcal{H}_{\le d}^\BEC(q,\alpha)$. The corresponding error $\chi^2$-entropies $\mathcal{H}_d^\BSC$ are defined in an analogous manner to Definition \ref{def:H_func}. 
\label{def:H_chi_func}
\end{definition}

\subsection{Bounds via channel comparison lemmas}
By applying the less noisy parts of Lemma \ref{lem:lemma_A} and Lemma \ref{lem:lemma_B}, we can prove the following bounds on soft information. The idea is that in each iteration of BP we replace the $T_\ell$ channels discussed in the proof of Proposition \ref{prop:approx_gives_lower_bound}, with the least (resp. most) noisy channels while matching $\chi^2$-capacities to obtain upper (resp. lower) bounds on soft information.

\begin{proposition}
Consider the dynamical systems
\begin{equation}
q^\BEC_{t+1}(x_0)=1-\
\mathcal{H}_{\le d}^\BEC(\alpha,q^\BEC_{t}),
\label{eq:qbec_chisoft}
\end{equation}
and
\begin{equation}
q^\BSC_{t+1}(x_0)=1/2-1/2\sqrt{1-\mathcal{H}_{\le d}^\BSC(\alpha,q^\BEC_{t})},
\label{eq:qbsc_chisoft}
\end{equation}
initialized at $x_0$ with $\alpha=C/R$. 
Let $\iota_\ell^{\mathrm{BP}}$ be the soft information output of a ensemble under $\BP$ after $\ell$ iterations. Likewise, let $\iota^{*}$ be the soft information output under the optimal decoder. Then 
\[
q^\BEC_\ell(1)-o(1)\ge \iota^{*}\ge \iota_\ell^{\mathrm{BP}}.
\]
Furthermore, 
\[
q^\BEC_\ell(0)+o(1)\ge \iota_\ell^{\mathrm{BP}}\ge 1-h_b(q^\BSC_\ell(1/2))+o(1)
\]
with $o(1)\to 0$ and $k\to \infty$.\\
\label{prop:approx_gives_lower_bound_soft}
\end{proposition}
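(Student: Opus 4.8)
The plan is to repeat the proof of Proposition~\ref{prop:approx_gives_lower_bound}, but tracking the $\chi^2$-capacity $C_{\chi^2}$ of the depth-$\ell$ tree channel instead of its error probability, and invoking the less-noisy parts of Lemmas~\ref{lem:lemma_A}--\ref{lem:lemma_B} in place of the degradation parts. Fix a target bit $S_0$; with probability $1-o(1)$ its depth-$\ell$ computational graph is a tree. For $x_0\in\{0,1\}$ let $T_\ell^{x_0}\colon S_0\mapsto(\text{depth-}\ell\text{ tree},\Delta^{(\ell)}X_0)$ be the tree channel whose leaves are observed through $\BEC_{1-x_0}$, so that $x_0=0$ gives the BP tree (whose depth-$\ell$ boundary beliefs BP initializes to $\Ber(1/2)$), while $x_0=1$ gives the noiseless-boundary tree, whose Shannon capacity upper-bounds $\iota^{*}$ by a data-processing argument across the depth-$\ell$ boundary, exactly as in the proof of Proposition~\ref{prop:approx_gives_lower_bound}. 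These are BMS channels since majority is self-dual and the leaves are $\Ber(1/2)$. Consequently $\iota_\ell^{\BP}=C(T_\ell^{0})+o(1)$ and $\iota_\ell^{\BP}\le\iota^{*}\le C(T_\ell^{1})+o(1)$, all in $\log_2$ units; recall also $C(\BEC_\epsilon)=C_{\chi^2}(\BEC_\epsilon)=1-\epsilon$, $C(\BSC_q)=1-h_b(q)$, and $C_{\chi^2}(\BSC_q)=(1-2q)^2$.

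For the upper bounds (the BEC system) I would prove, by induction on $\ell$, that $T_\ell^{x_0}\preceq_{ln}\BEC_{1-q^\BEC_\ell(x_0)}$; the base case $\ell=0$ holds since $T_0^{x_0}=\BEC_{1-x_0}$ and $q^\BEC_0(x_0)=x_0$. For the step, condition on the first layer of the tree; it has exactly the structure of Lemma~\ref{lem:lemma_B}, with inner channels being independent copies of $T_{\ell-1}^{x_0}$ and $P_{\Delta X_0|S_0,\partial S_0}$ the indicators of the majority constraints. By the induction hypothesis $T_{\ell-1}^{x_0}\preceq_{ln}\BEC_{1-q_{\ell-1}}$, so Lemma~\ref{lem:lemma_B}(2) lets us replace every inner channel by $\BEC_{1-q_{\ell-1}}$, yielding a BMS channel $\widetilde T_\ell$ with $T_\ell^{x_0}\preceq_{ln}\widetilde T_\ell$. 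Conditioning on the degree $d$ of $S_0$ (visible to the decoder) and using additivity of $I_{\chi^2}$ over a mixture labelled by $d$, $C_{\chi^2}(\widetilde T_\ell)=1-\mathcal{H}^\BEC(\alpha,q_{\ell-1})\le 1-\mathcal{H}^\BEC_{\le d}(\alpha,q_{\ell-1})=q^\BEC_\ell$, the inequality because truncation only drops the non-negative terms $\mathcal{H}^\BEC_k$. Hence $C_{\chi^2}(T_\ell^{x_0})\le C_{\chi^2}(\widetilde T_\ell)\le q^\BEC_\ell$, and Lemma~\ref{lem:lemma_A}(3) applied to the BMS channel $T_\ell^{x_0}$ gives $T_\ell^{x_0}\preceq_{ln}\BEC_{1-C_{\chi^2}(T_\ell^{x_0})}\preceq_{ln}\BEC_{1-q^\BEC_\ell}$, closing the induction. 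Monotonicity of $C$ under $\preceq_{ln}$ then gives $\iota_\ell^{\BP}\le q^\BEC_\ell(0)+o(1)$ and $\iota^{*}\le q^\BEC_\ell(1)+o(1)$.

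For the lower bound (the BSC system) I would induct maintaining $\BSC_{q^\BSC_\ell(1/2)}\preceq_{ln}T_\ell^{0}$, with base case $\BSC_{1/2}\preceq_{ln}T_0^{0}$ (a useless channel dominates the even more useless $\BSC_{1/2}$). In the step, the induction hypothesis and Lemma~\ref{lem:lemma_B}(2) let us replace each inner channel $T_{\ell-1}^{0}$ by $\BSC_{q^\BSC_{\ell-1}}$, producing a BMS channel $\widehat T_\ell\preceq_{ln}T_\ell^{0}$; conditioning on $d$ and noting that the truncation of $\mathcal{H}^\BSC_{\le d}$ over-counts entropy gives $C_{\chi^2}(\widehat T_\ell)\ge 1-\mathcal{H}^\BSC_{\le d}(\alpha,q^\BSC_{\ell-1})$. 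Applying Lemma~\ref{lem:lemma_A}(3) to $\widehat T_\ell$ gives $\BSC_{\frac12-\frac12\sqrt{C_{\chi^2}(\widehat T_\ell)}}\preceq_{ln}\widehat T_\ell$; since $t\mapsto\frac12-\frac12\sqrt t$ is decreasing and $\BSC_q$ is monotone in $q\in[0,1/2]$ under $\preceq_{ln}$, $\BSC_{q^\BSC_\ell}\preceq_{ln}\BSC_{\frac12-\frac12\sqrt{C_{\chi^2}(\widehat T_\ell)}}\preceq_{ln}\widehat T_\ell\preceq_{ln}T_\ell^{0}$, closing the induction. Hence $\iota_\ell^{\BP}=C(T_\ell^{0})+o(1)\ge C(\BSC_{q^\BSC_\ell(1/2)})+o(1)=1-h_b(q^\BSC_\ell(1/2))+o(1)$.

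The main obstacle is not a new idea but carrying three pieces of bookkeeping simultaneously: the recursions propagate a $\chi^2$-capacity while the target $\iota$ is a Shannon mutual information, so the matched surrogates $\BEC_{1-q^\BEC}$ and $\BSC_{q^\BSC}$ may be converted to Shannon capacity only at the final step; the truncations $\mathcal{H}^\BEC_{\le d}$ and $\mathcal{H}^\BSC_{\le d}$ must be verified to err in the directions compatible with an upper, respectively lower, bound; and one must justify $\iota_\ell^{\BP}=C(T_\ell^{0})+o(1)$ and $\iota^{*}\le C(T_\ell^{1})+o(1)$, i.e.\ that the non-tree-like fraction of computational graphs and the bit-to-bit fluctuations of the per-bit mutual information wash out to $o(1)$ after averaging over the ensemble. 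Each is routine given the machinery of the preceding sections, but all must be handled together.
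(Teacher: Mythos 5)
Your proposal is correct and follows exactly the route the paper intends: the paper's own proof is only a two-sentence sketch ("replace E-functions with $\mathcal{H}$-functions and match $\chi^2$-capacities using the less-noisy parts of Lemmas~\ref{lem:lemma_A}--\ref{lem:lemma_B}"), and your induction on $T_\ell^{x_0}$ with the truncation directions and the final conversion from $\chi^2$-matched surrogates to Shannon capacity is the faithful elaboration of that sketch. (You also implicitly and reasonably read the $q^\BEC_t$ inside $\mathcal{H}^\BSC_{\le d}$ in~\eqref{eq:qbsc_chisoft} as $q^\BSC_t$, which is evidently the intended recursion.)
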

\begin{proof}
The proof is similar to that of Proposition \ref{prop:approx_gives_lower_bound}. The idea is that we replace E-functions with $\mathcal{H}$-functions and then in each iteration of BP we replace the $T_\ell$ channels  with the least (resp. most) noisy channels while matching $\chi^2$-capacities to obtain upper (resp. lower) bounds on soft information. 
\end{proof}
To provide more context for the open question in Remark  \ref{remark:mc_vs_ln}, we further study the soft information dynamics while matching capacities. We have the following conjectured bound for majority codes (or perhaps more generally for binary monotone functions).

\begin{conjecture}
For a code ensemble generated by a monotone function, consider the dynamical systems
\begin{equation}
q^\BEC_{t+1}(x_0)=1-\
H_{\le d}^\BEC(\alpha,q^\BEC_{t}),
\label{eq:qbec_soft}
\end{equation}
and
\begin{equation}
q^\BSC_{t+1}(x_0)=h^{-1}(H_{\le d}^\BSC(\alpha,q^\BEC_{t})),
\label{eq:qbsc_soft}
\end{equation}
initialized at $x_0$ with $\alpha=C/R$. Then
$$
q_\ell^\BEC(0)+o(1)\ge \iota_\ell^{\BP}\ge 1-h_b(q_t^\BSC(1/2))+o(1).$$
\label{conj:qbec}
\end{conjecture}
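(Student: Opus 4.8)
\emph{Proof proposal.} The plan is to run the argument of Proposition~\ref{prop:approx_gives_lower_bound_soft} essentially verbatim, but with the ordinary capacity $C$ in place of the $\chi^2$-capacity, and hence with the more-capable order $\preceq_{mc}$ in place of the less-noisy order $\preceq_{ln}$. Fix a target bit $S_0$ and let $T_\ell:S_0\mapsto(\text{depth-}\ell\text{ computational tree},\ \Delta^{(\ell)}X_0)$ be the BMS channel carrying all information available to $\ell$ rounds of BP (the tree shape being part of the output). With high probability the depth-$\ell$ neighbourhood is a tree, and then the BP marginal of $S_0$ is exactly $P_{S_0\mid\text{output of }T_\ell}$; since $S_0\sim\Ber(1/2)$ and $T_\ell$ is BMS, $\EE[h_b(\text{BP marginal})]=H(S_0\mid\text{output of }T_\ell)=1-C(T_\ell)$, so $\iota_\ell^\BP=C(T_\ell)+o(1)$. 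Conditioning on the top layer, $T_\ell$ has exactly the structure of Lemma~\ref{lem:lemma_B}: $P_{Y\mid X_0,X_1^m}$ is the deterministic vector of monotone-function observations on the unerased checks incident to $S_0$, and the leaf channels are the conditionally independent depth-$(\ell-1)$ subtree channels $W_j=T_{\ell-1}$. It therefore suffices to sandwich $C(T_\ell)$ between the two scalar recursions.

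For the upper bound I would prove by induction on $\ell$ that $T_\ell\preceq_{mc}\BEC_{1-q^\BEC_\ell(0)}$, the base case being $T_0\equiv\BEC_1$, matching $q^\BEC_0(0)=0$. Given $T_{\ell-1}\preceq_{mc}\BEC_{1-q^\BEC_{\ell-1}(0)}$, feed this into the more-capable version of Lemma~\ref{lem:lemma_B} (see below) and average over the top layer — using that $\preceq_{mc}$ is preserved under mixing over a side-parameter independent of $S_0$ — to obtain $T_\ell\preceq_{mc}\tilde T_\ell$, where $\tilde T_\ell$ is the same tree with every leaf channel replaced by $\BEC_{1-q^\BEC_{\ell-1}(0)}$. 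Averaging the top degree against $\mathrm{Deg}$, $C(\tilde T_\ell)=1-H^\BEC(\alpha,q^\BEC_{\ell-1})\le 1-H^\BEC_{\le d}(\alpha,q^\BEC_{\ell-1})=q^\BEC_\ell(0)$, the inequality being the (bound-weakening) effect of truncation, cf.\ Remark~\ref{rmk:truncation}. By Lemma~\ref{lem:lemma_A}(2), $\tilde T_\ell\preceq_{mc}\BEC_{1-C(\tilde T_\ell)}$, and since $1-q^\BEC_\ell(0)\le 1-C(\tilde T_\ell)$ we have $\BEC_{1-C(\tilde T_\ell)}\preceq_{mc}\BEC_{1-q^\BEC_\ell(0)}$; chaining gives $T_\ell\preceq_{mc}\BEC_{1-q^\BEC_\ell(0)}$ and hence $C(T_\ell)\le q^\BEC_\ell(0)$. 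The lower bound is symmetric: Lemma~\ref{lem:lemma_A}(2) also says the $\BSC$ of matching capacity is the least capable BMS of that capacity, so the same induction carried through the $\BSC$ recursion~\eqref{eq:qbsc_soft} yields $\BSC_{q^\BSC_\ell(1/2)}\preceq_{mc}T_\ell$ (truncation once more only weakening the bound), whence $\iota_\ell^\BP=C(T_\ell)\ge 1-h_b(q^\BSC_\ell(1/2))+o(1)$. The $o(1)$ bookkeeping — tree girth and ensemble concentration — is identical to the proof of Proposition~\ref{prop:approx_gives_lower_bound}.

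The single missing ingredient, and the main obstacle, is exactly the open problem of Remark~\ref{remark:mc_vs_ln}: that the combining step of Lemma~\ref{lem:lemma_B} is monotone for $\preceq_{mc}$ when $Y=f(X_0,X_1^m)$ with each coordinate of $f$ a local majority, or, for the general statement of the conjecture, a monotone Boolean function. This is strictly harder than its $\preceq_{ln}$ counterpart because $\preceq_{mc}$ does not tensorize against arbitrary correlations — precisely the phenomenon Example~\ref{ex:counter_ex} uses to break the statement for $f=\mathrm{XOR}$ — so the $\preceq_{ln}$ proof of Lemma~\ref{lem:lemma_B}(2) cannot simply be copied. One would need a combining inequality that genuinely uses monotonicity of $f$: for instance an FKG-type correlation argument excluding the ``adversarial'' dependencies that doom XOR, or a direct extremal characterization — in the spirit of ``extremes of information combining'' — of the output capacity at a majority node over all input BMS channels of a prescribed capacity. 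A more modest, unconditional target that already covers the code families studied here is to verify this $\preceq_{mc}$-monotonicity for the $3$- and $5$-majority by the same finite symbolic computation used to build the $E$- and $H$-polynomials (cf.\ Algorithm~\ref{algo:Ed} and the $f$-polynomials~\eqref{eq:fbec3}, \eqref{eq:fbsc3}, \eqref{eq:fbec5}); together with the reduction above, this would settle Conjecture~\ref{conj:qbec} unconditionally for $\mathrm{LDMC}(3)$ and $\mathrm{LDMC}(5)$, leaving the general monotone case as the residual difficulty.
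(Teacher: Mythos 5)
The statement you are asked to prove is stated in the paper as a \emph{conjecture}, and the paper supplies no proof of it; so there is nothing to compare your argument against except the paper's rigorous $\chi^2$-analogue (Proposition~\ref{prop:approx_gives_lower_bound_soft}) and the surrounding discussion. Your reduction is faithful to that framework: you correctly identify $\iota_\ell^{\BP}$ with $C(T_\ell)$ up to $o(1)$, you correctly observe that the recursion~\eqref{eq:qbec_soft}--\eqref{eq:qbsc_soft} is exactly the capacity-matched version of~\eqref{eq:qbec_chisoft}--\eqref{eq:qbsc_chisoft}, and you correctly note that Lemma~\ref{lem:lemma_A}(2) supplies the single-channel extremality step ($\BEC$ most capable, $\BSC$ least capable at fixed capacity). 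You are also right, and commendably explicit, that the entire argument hinges on a more-capable analogue of Lemma~\ref{lem:lemma_B}(2).

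That said, this is a genuine gap, not a technicality: the more-capable combining step is precisely the open question posed in Remark~\ref{remark:mc_vs_ln}, and the paper shows via Example~\ref{ex:counter_ex} that it \emph{fails} for $f=\mathrm{XOR}$, so no generic order-theoretic argument (degradation, tensorization of $\preceq_{mc}$ over independent inputs, or the $\preceq_{ln}$ proof of Lemma~\ref{lem:lemma_B}) can be adapted without genuinely using monotonicity of $f$. Until that combining inequality is established — even for the $3$-majority — the induction step $T_{\ell-1}\preceq_{mc}\BEC_{1-q^\BEC_{\ell-1}}\Rightarrow T_\ell\preceq_{mc}\tilde T_\ell$ has no justification, and both directions of the sandwich collapse. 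Your proposal should therefore be read as a correct \emph{reduction} of Conjecture~\ref{conj:qbec} to the open problem of Remark~\ref{remark:mc_vs_ln}, together with a sensible programme (FKG-type correlation arguments, or finite symbolic verification for $\mathrm{LDMC}(3)$ and $\mathrm{LDMC}(5)$ via the machinery of Algorithm~\ref{algo:Ed}), not as a proof. The only minor caveat in the write-up itself: the claim that "$\preceq_{mc}$ is preserved under mixing over a side-parameter independent of $S_0$" should be stated and checked carefully, since the mixture here is over the random top-layer degree and check-realization, which also enters the kernel $P_{Y\mid X_0,X_1^m}$; this is routine but should not be waved through once the main combining lemma is in place.
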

We skip the computation of H-functions for LDMCs as it is similar to that of E-functions. We simply need to map messages to $\chi^2$-entropies (instead of error probabilities), which is a straightforward extension. 
Fig.~\ref{fig:soft_info_bounds} shows the results. It is natural to conjecture that the $H$-dynamics give the correct upper and lower bounds for monotone functions. We further observe that the associated dynamics for soft information have a unique fixed point similar to what we observed for the error dynamics.
\begin{figure}[ht]
\centering
\subfloat[][soft information bounds for LDMC(3)]{\includegraphics[width=0.5\textwidth]{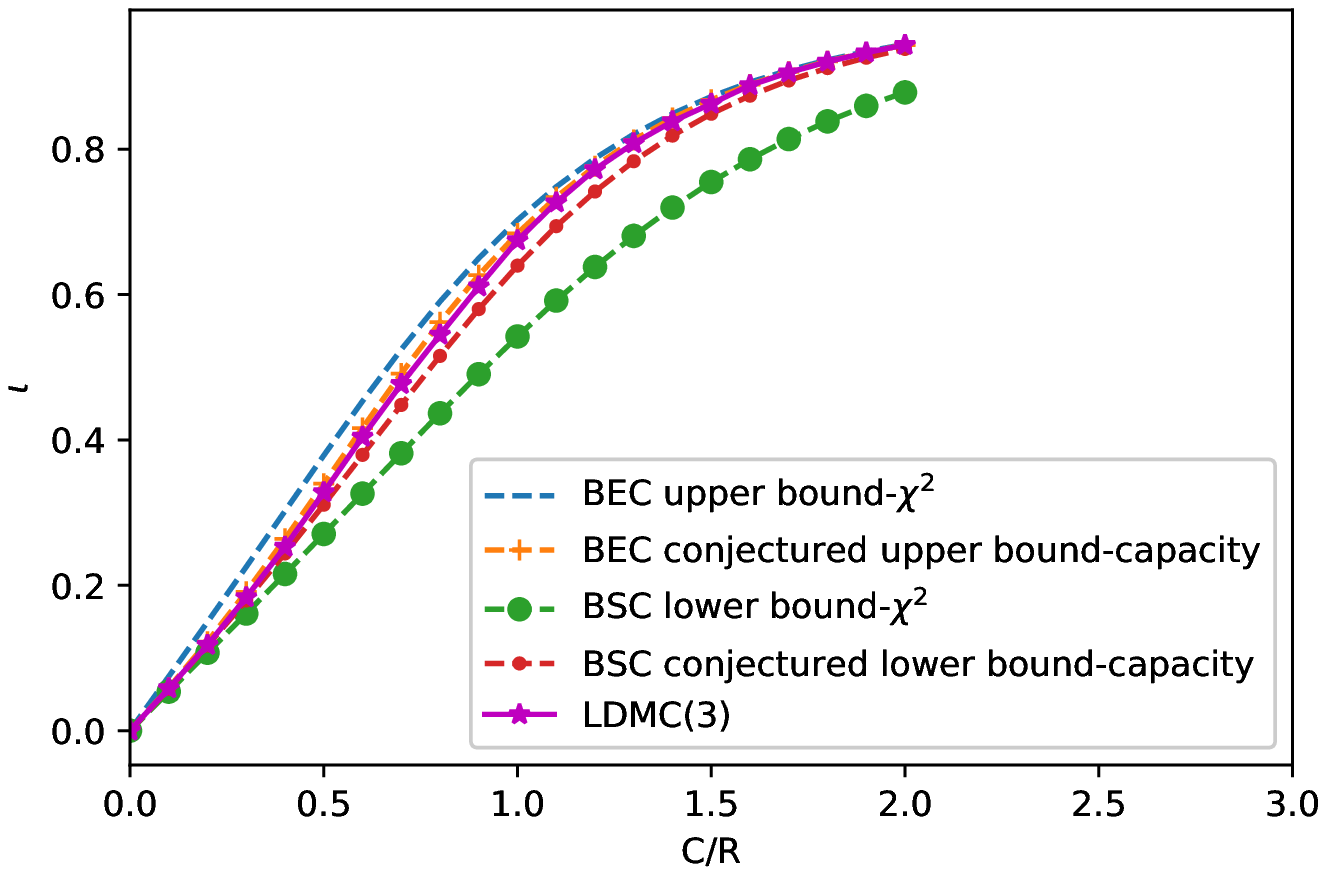}}
\subfloat[][The bound of (\ref{eq:qbec_chisoft}) after 5 iterations]{\includegraphics[width=0.5\textwidth]{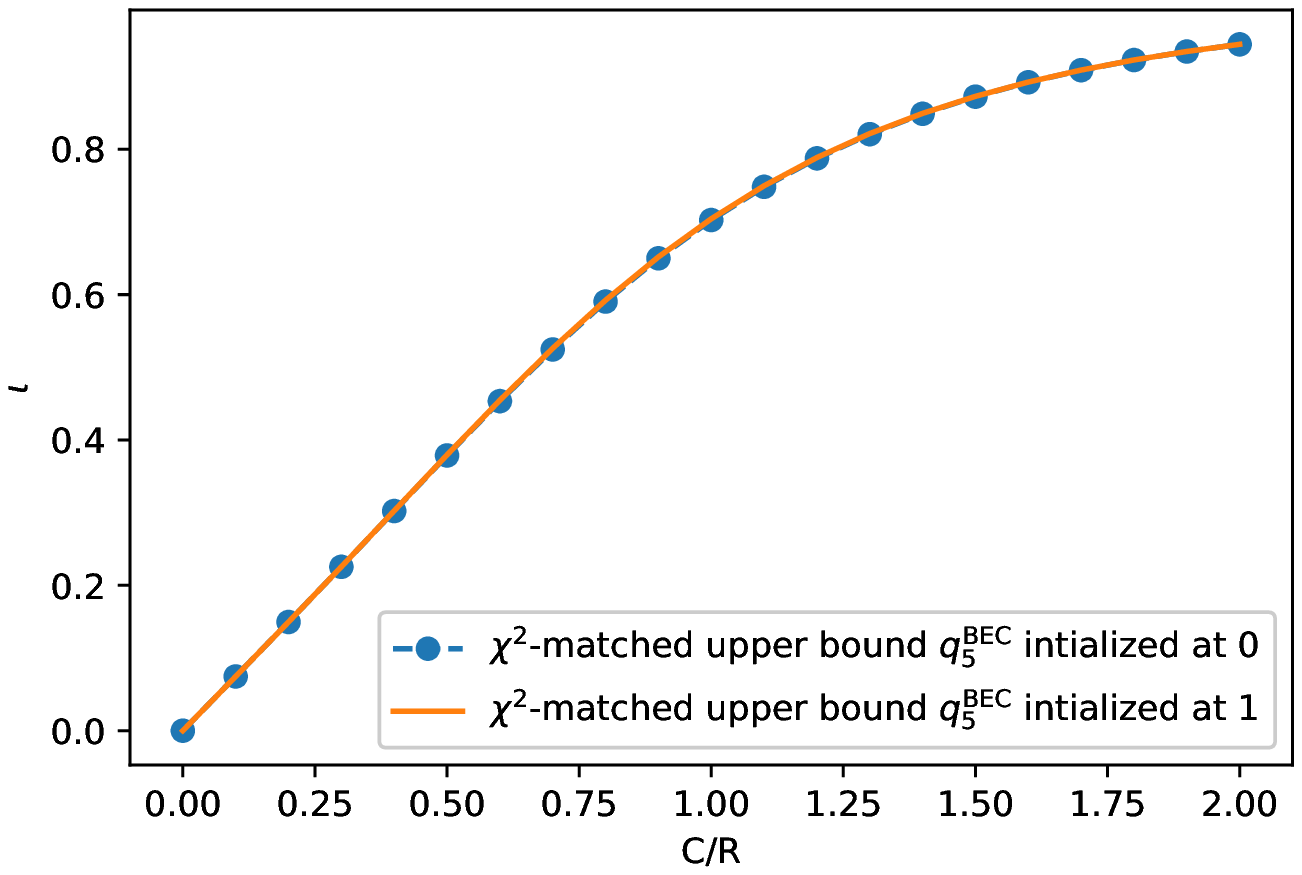}}
\caption{a) Empirical soft information for LDMC(3) with $k=40000$ compared with the lower and upper bounds obtained while matching capacity and $\chi^2$-capacity. The $\chi^2$-bounds are obtained from (\ref{eq:qbec_chisoft})-(\ref{eq:qbsc_chisoft}). The conjectured bounds are obtained by matching capacities as described in (\ref{eq:qbec_soft})-(\ref{eq:qbsc_soft}). b) The soft information dynamics converge to a unique fixed point, independent of the initial condition. The convergence is fast and nearly achieved after only 5 iterations.}
\label{fig:soft_info_bounds}
\end{figure}

We end this section by comparing the soft information of LDMC(3) with linear codes. We note that for a linear code $\iota(\epsilon)=1-2\mathrm{BER}(\epsilon)$. Thus we can use the bounds of Theorem \ref{thm:two_point_converse} together with Proposition \ref{prop:hrank} to obtain similar bounds on soft information.  The results are shown in Fig.~\ref{fig:soft_info}.

\begin{figure}[ht]
\centering
\subfloat[][$R=1/2$]{\includegraphics[width=0.5\textwidth]{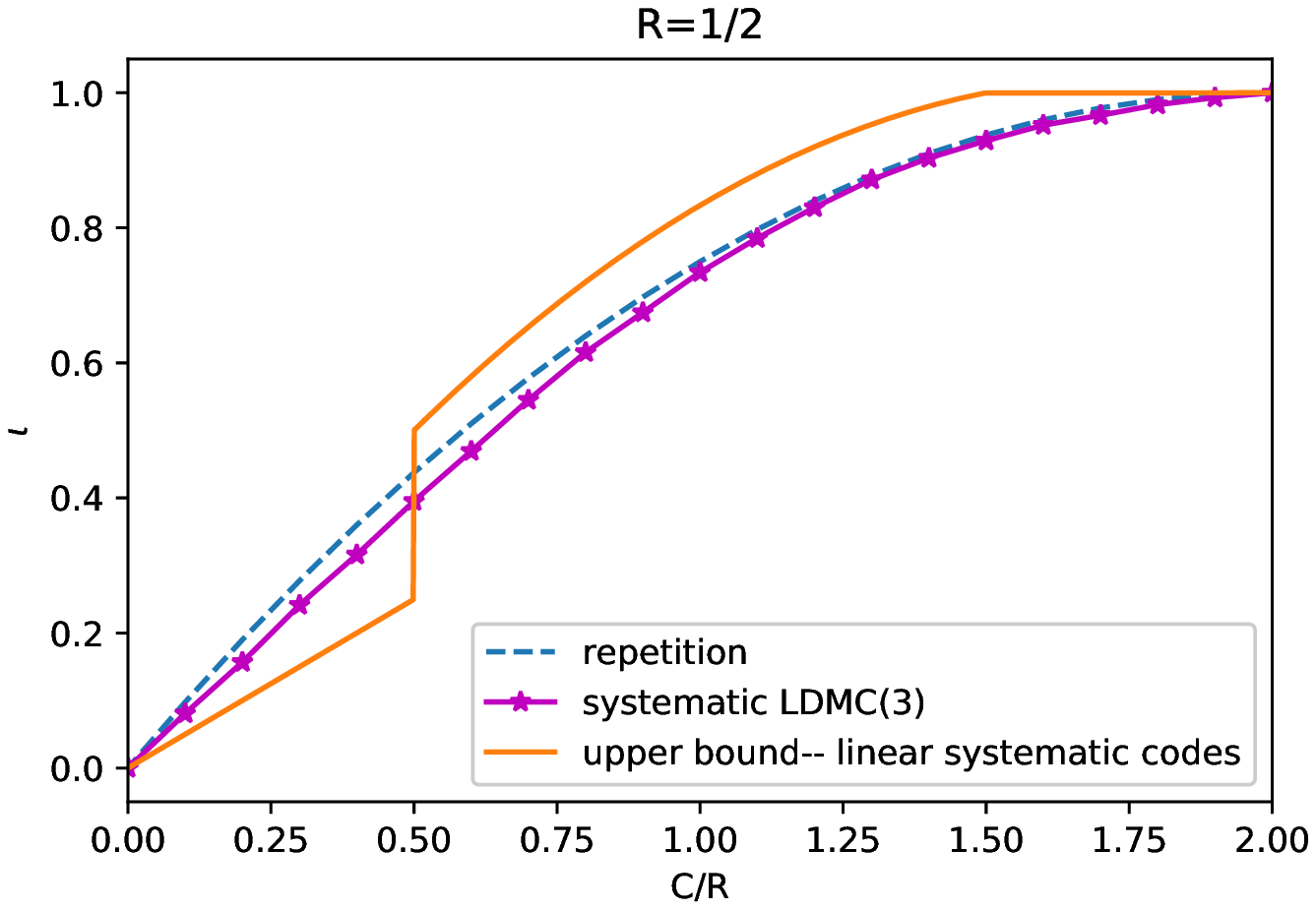}}
\subfloat[][$R=1/5$]{\includegraphics[width=0.5\textwidth]{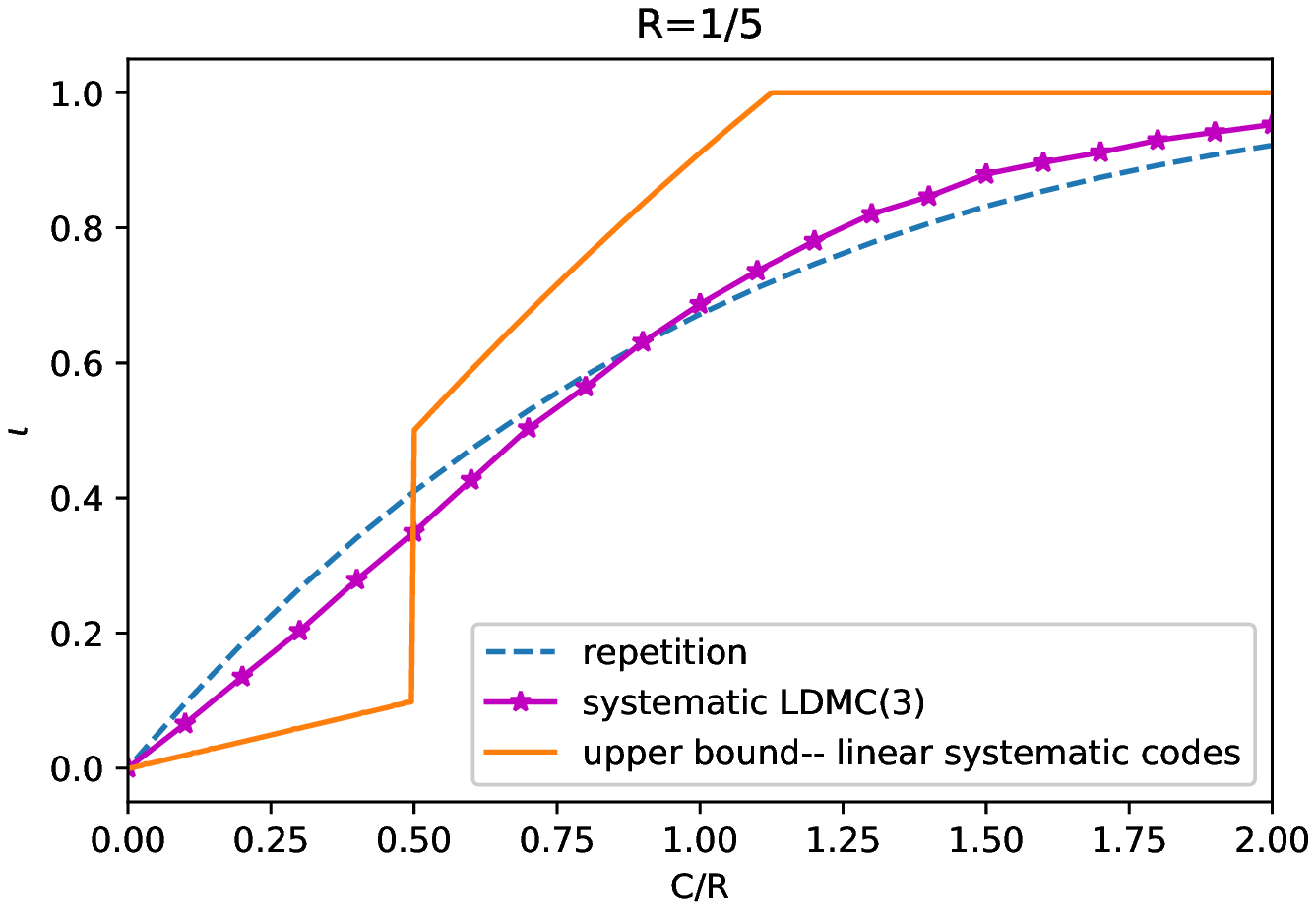}}
\\
\subfloat[][$R=2/3$]{
\includegraphics[width=0.5\textwidth]{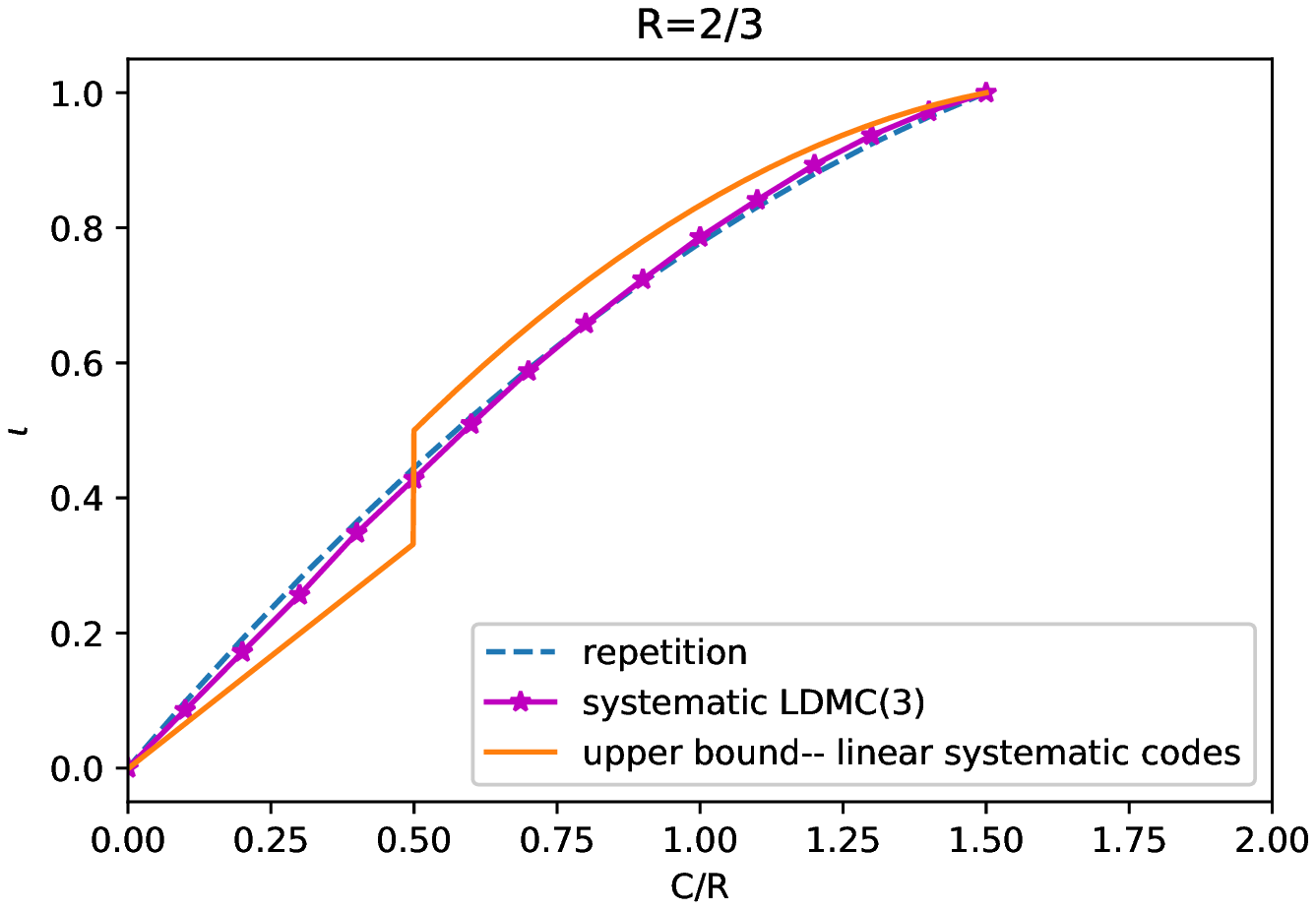}}
\subfloat[][$R\approx 0$]{
\includegraphics[width=0.5\textwidth]{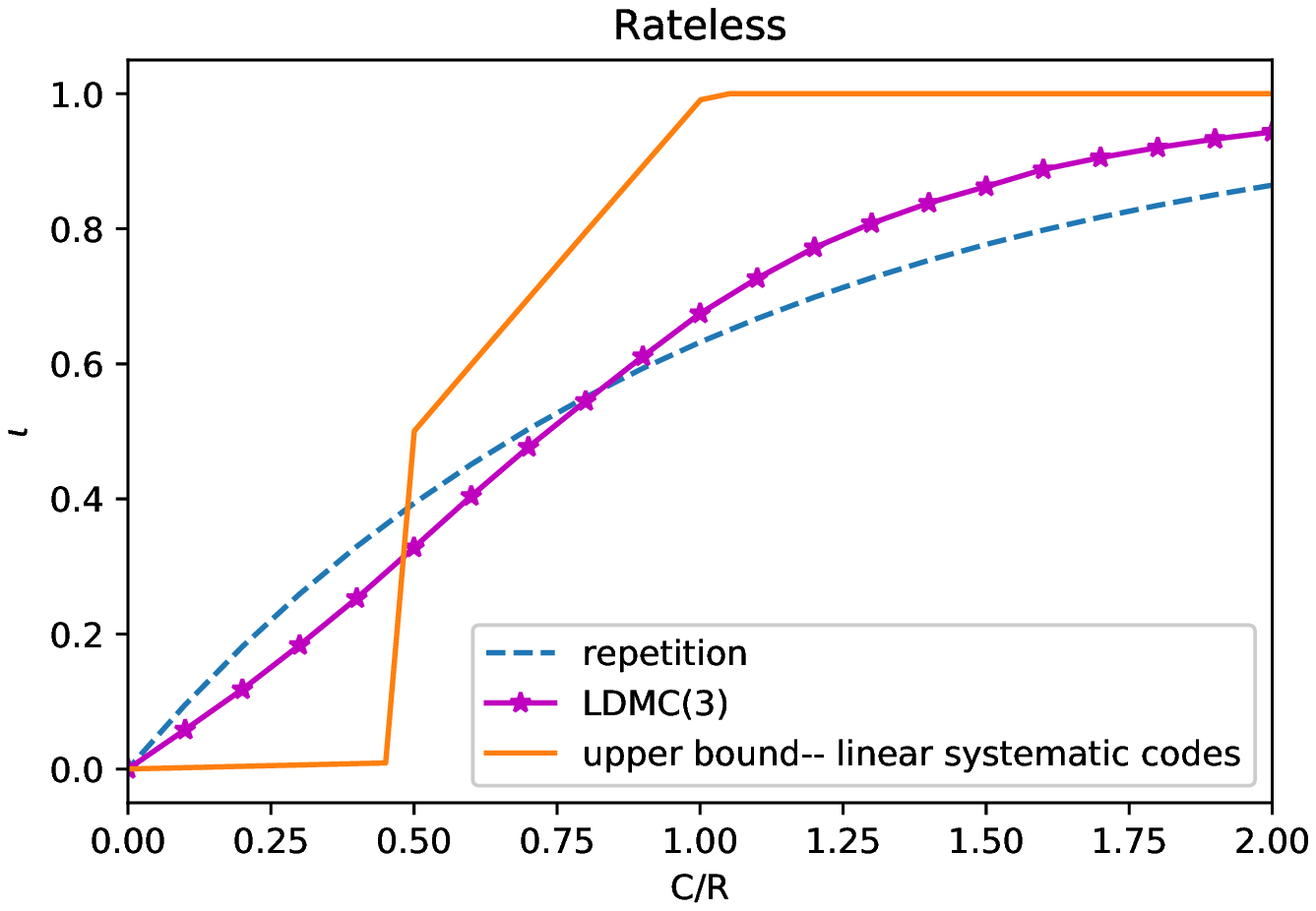}}
\caption{Empirical soft information for LDMC(3) with $k=20000$ compared with repetition and linear codes satisfying $\BER=0.25$ at $C/R=0.5$ for different rates. In (a)-(c) we use a systematic LDMC(3) and in (d) we use the  LDMC(3) ensemble.}
\label{fig:soft_info}
\end{figure}

\subsection{A comment on Fano type bounds}
\label{sec:fano}
 Let $T_\ell$ be the tree channel discussed in the proof of Proposition \ref{prop:approx_gives_lower_bound}. In the previous section, we gave bounds on $I(S_0;T_\ell)$ using the channel comparison lemmas. It is natural to ask how well these bounds bounds combined with a Fano type inequality  would predict the error dynamics. Fano's inequality states that
\begin{equation}
P_e(T_\ell)\ge h^{-1}(1-I(S_0;T_\ell)).
\label{eq:fano_ber}
\end{equation}
Hence, one might expect that a good upper bound on $I(S_0;T_\ell)$ would result in a tight bound on BER. Likewise, it is easy to translate a bound on BER to a bound on soft information using a degradation argument. Since $T_\ell$ is degraded w.r.t $\BEC_{1-2\BER}$, we have
\begin{equation}
I(S_0;T_\ell)\le 1-2\BER.
\label{eq:ber_to_I}    
\end{equation}
As shown in Fig.~\ref{fig:Fano}, such bounds are looser than those obtained from direct arguments as in Proposition \ref{prop:approx_gives_lower_bound} and \ref{prop:approx_gives_lower_bound_soft}. 

\begin{figure}[ht]
\centering
\subfloat[][Comparing the upper bounds in (\ref{eq:ber_to_I}) and (\ref{eq:qbec_chisoft})]{\includegraphics[width=0.5\textwidth]{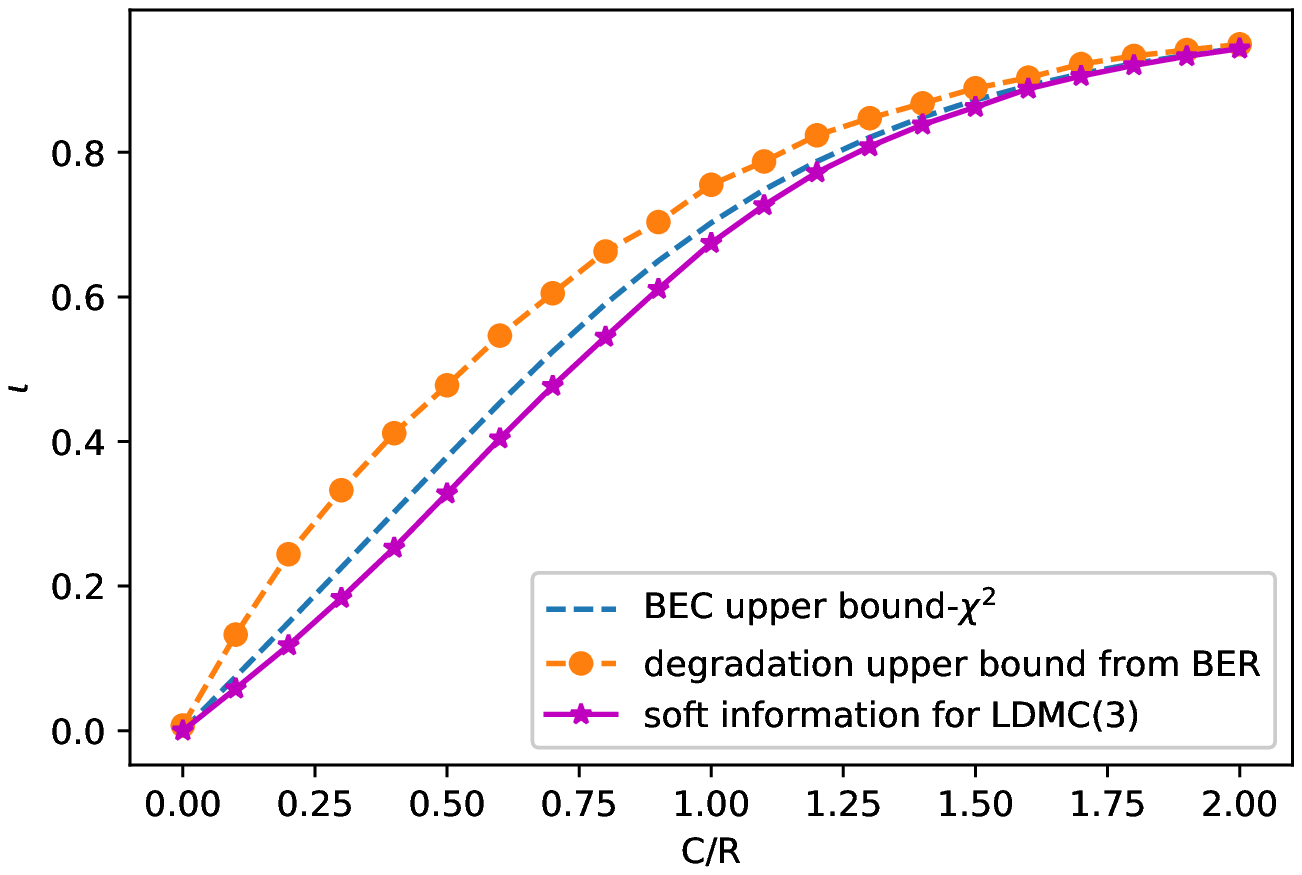}}
\subfloat[][Comparing the lower bounds in (\ref{eq:fano_ber}) and (\ref{eq:qbec}).]{
\includegraphics[width=0.5\textwidth]{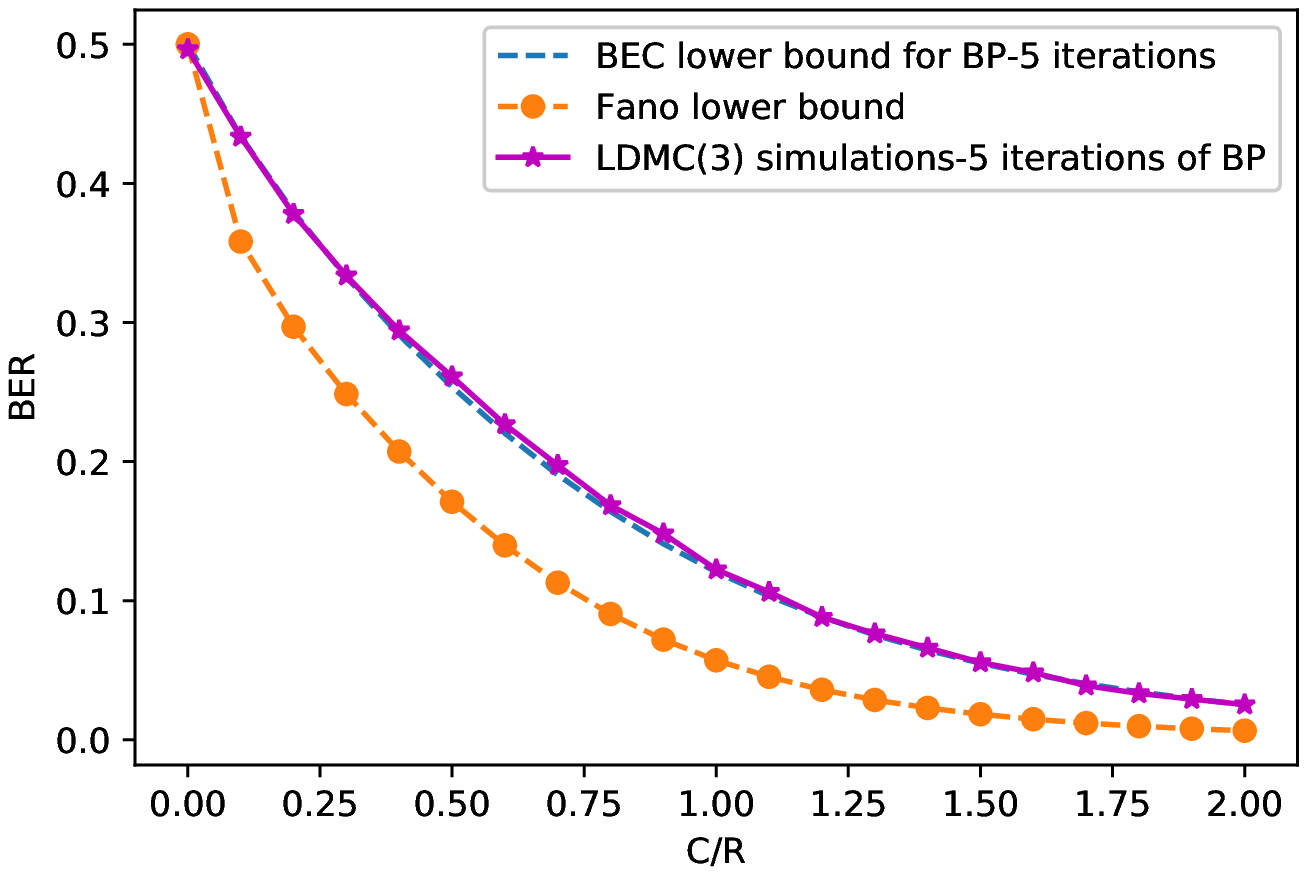}}
\caption{a) The direct bound on soft information obtained from channel comparison is tighter than that of (\ref{eq:ber_to_I}) implied from BER. b) The direct bound on BER is tighter than that inferred from Fano's inequality (\ref{eq:fano_ber}) and the upper bound on soft information obtained by matching $\chi^2$-capacity as shown in Fig.~\ref{fig:soft_info_bounds}a. }
\label{fig:Fano}
\end{figure}

\IEEEpeerreviewmaketitle



%

\appendices

\section{Complementary  proofs}\label{apx:complem}

\begin{proof}[Proof of Lemma~\ref{lem:lemma_A}]

The first part of lemma is well-known. The BEC part is called the erasure decomposition lemma  ~\cite[Lemma 4.78]{richardson2008modern} and the BSC part is a consequence of data processing (or, more precisely, hard decision decoding) ~\cite[Problem 4.55]{richardson2008modern}. 
The BSC half of the second part has been shown in~\cite[Appendix]{sasoglu2011polar}. The rest of the statements appear to be
new. 

Let $Y_\delta$ denote output of a $\BSC_\delta$ applied to input $X$. Then $\BMS$ $W$ can be represented as $X\mapsto
(Y_\Delta,\Delta)$ where $\Delta\in[0,1/2]$ is a random variable independent of $X$. To prove the BEC part of the second
claim, we need to show that for any input distribution $P_X=\Ber(p)$ we have
$$ I(X; Y_\Delta, \Delta) \le I(X; Y_E)\,,$$
where $Y_E$ is the output of a $\BEC_{1-C}$. Note that $I(X;Y_E) = C H(X)$. Thus, we need to show that for any
distribution of $\Delta$ and for any $p$ the following inequality holds:
\begin{equation}\label{eq:la_2}
	\EE[h_b(p\ast \Delta)] - \EE[h_b(\Delta)] \le (1-\EE[h_b(\Delta)]) h_b(p)\,,
\end{equation}
where $h_b$ denotes the base-2 binary entropy function and $a\ast b = a(1-b)+(1-a)b$ is the binary convolution function. A
result known as Mrs. Gerber's Lemma (MGL), cf.~\cite{wyner1973theorem}, which states that the parametric function
\begin{equation}\label{eq:la_mgl}
	h_b(p \ast \delta) \quad \mbox{vs.}\quad h_b(\delta), \qquad \delta \in [0,1/2]  
\end{equation}is convex. 
Consequently, the function must be below the chord connecting its endpoints, i.e. 
$$ h_b(p \ast \delta) \le (1-h_b(\delta)) h_b(p) + h_b(\delta)\,.$$
Clearly, the latter inequality implies~\eqref{eq:la_2} after taking expectation over $\delta$. Note also that the
BSC part of the second claim also follows from~\eqref{eq:la_mgl}. Indeed, from convexity we have
\begin{equation}\label{eq:la_3}
	\EE[h_b(p \ast \Delta)] \ge h_b(p \ast \delta_{eff})\,,
\end{equation}
where $\delta_{eff}$ is chosen so that $h_b(\delta_{eff})=\EE[h_b(\Delta)]$. In turn,~\eqref{eq:la_3} is equivalent to the
first relation in~\eqref{eq:la_mc}.

To prove the third part of the Lemma, i.e.~\eqref{eq:la_ln}, take
$P_X = \Ber(p)$ and $Q_X = \Ber(q)$ and let $P_Y$, $Q_Y$ be the output distributions induced by $W$. Similarly, let
$P_{Y_B}, Q_{Y_B}$ and $P_{Y_E}, Q_{Y_E}$ be the distributions induced by the equal-$\chi^2$-capacity $\BSC$ and $\BEC$,
respectively. We need to show (using (\ref{eq:less_noisy_kl})) that 
	$$ D(P_{Y_B} \| Q_{Y_B}) \le D(P_{Y} \| Q_Y)  \le D(P_{Y_E} \| Q_{Y_E})\,.$$
First notice that
	\begin{align} C_{\chi^2}(\BSC_\delta) &= (1-2\delta)^2 \\
	   C_{\chi^2}(\BEC_\delta) &= 1-\delta 
\end{align}
After representing $\BMS$ as a mixture of $\BSC$'s we have $\eta = \EE[(1-2\Delta)^2]$. 
Introducing the binary divergence function $d(a\|b) = D(\Ber(a) \| \Ber(b))$ we need to show: For any distribution of
$\Delta\in[0,1/2]$ and $p,q\in [0,1]$ we have
\begin{equation}\label{eq:la_4}
	d(p \ast \delta_{eff} \| q \ast \delta_{eff}) \le \EE[D(p \ast \Delta \| q \ast \Delta)] 
		\le \EE[(1-2\Delta)^2] d(p\|q)\,,
\end{equation}		
where $\delta_{eff}=\frac{1-\sqrt{\eta}}{2}$ is defined to satisfy
	$$ (1-2\delta_{eff})^2 = \EE[(1-2\Delta)^2]\,.$$
Note that the right-most inequality in~\eqref{eq:la_4} follows froma a well-known fact 
that the strong data-processing contraction coefficient $\eta_{KL}(W)$ equals $\EE[(1-2\Delta)^2]$ (e.g. this follows
from the proof of~\cite[Theorem 21]{polyanskiy2017strong}).

To prove~\eqref{eq:la_4} we will establish \textit{a variant of the MGL}, possibly of separate interest. Namely, 
we will show that for any fixed $p,q \in [0,1]$ the function
\begin{equation}\label{eq:la_5}
		d(p\ast \delta \| q \ast \delta) \quad \mbox{vs.} \quad (1-2\delta)^2 \qquad \delta \in[0,1/2] 
\end{equation}	
is convex.  Clearly,~\eqref{eq:la_5} would imply both sides of~\eqref{eq:la_4}.

To show~\eqref{eq:la_5} we proceed directly. Change parametrization to $x=(1-2\delta)^2$ and thus $\delta = \delta(x) =
\frac{1-\sqrt{x}}{2}$. Letting $d(x; p,q) = d(p\ast \delta(x) \| q \ast \delta(x))$ we find
\begin{align} \partial_x d(x; p,q) &= -\frac{1}{4\sqrt{x}} a(x; p,q)\\
	a(x; p,q) & = \left(\ln {p\ast \delta\over 1-p\ast \delta} + \ln {1-q\ast \delta \over q\ast \delta}\right)(1-2p) 
	+ \left( {1-p\ast \delta \over 1-q\ast \delta} - {p\ast \delta\over q\ast \delta}\right) (1-2q)  \,.
\end{align}
For convenience, let us introduce 
	$$ s \eqdef p \ast \delta, \quad \sigma \eqdef q \ast \delta\,.$$
Differentiating again, we get that convexity constraint $\partial_x^2 d(x; p,q) \ge 0$ is equivalent to the following
inequality:
\begin{equation}\label{eq:la_6}
		2 a(x; p,q) + \sqrt{x} b(x; p,q) \ge 0\,, 
\end{equation}	
where (we used $1-2p = {1-2s\over 1-2\delta}$ and $1-2q = {1-2\sigma\over 1-2\delta}$) 
$$ b(x; p, q) = {1\over (1-2\delta)^2} \left( {(1-2s)^2\over s(1-s)} + (1-2\sigma)^2 {s(1-\sigma)^2 + (1-s)\sigma^2
\over 
	\sigma^2 (1-\sigma)^2}  + {4s\over \sigma(1-\sigma)} (1-2s)(1-2\sigma)\right)\,.$$
Noticing that $\sqrt{x} = 1-2\delta$ and multiplying~\eqref{eq:la_6} by $(1-2\delta)$ we get that we need to verify 
\begin{equation}\label{eq:la_7}
	2 (1-2s) \ln {s (1-\sigma)\over (1-s) \sigma} + {2(1-2\sigma)\over \sigma(1-\sigma)}(\sigma+s-1) + {(1-2s)^2\over
s(1-s)} + (1-2\sigma)^2 {s(1-2\sigma)+\sigma^2\over \sigma^2 (1-\sigma)^2} \ge 0\,.
\end{equation}
Note that this inequality needs to hold for all values of $s,\sigma \in[\delta,1/2]$. However, due to arbitrariness of
$\delta$ and since it does not appear in~\eqref{eq:la_7} (this is crucial), we need to simply establish~\eqref{eq:la_7}
on the unit square $[0,1/2]^2$. 

Here again, we reparameterize
	$$ u \eqdef 1-2s, \quad v \eqdef 1-2\sigma $$
so that $(u,v) \in [0,1]^2$ now range over the unit square. 
Then~\eqref{eq:la_7} is rewritten as (after dividing by $u$) 
\begin{equation}\label{eq:la_8}
	f(u,v) \eqdef 2 \ln {(1-u)(1+v)\over (1+u)(1-v)} - 4 {v\over u(1-v^2)} (u+v) + {4u\over 1-u^2} + {4 v^2\over u(1-v^2)^2} (2(1-u)v +
(1-v)^2) \ge 0\,.
\end{equation}
It is easy to check that this inequality holds when either $u=0+,1-$ or $v=0+,1-$. Thus, we only need to rule out violations
inside the $[0,1]^2$. Taking derivative over $u$ of $f(u,v)$ we get
$$ \partial_u f = 0 \quad \iff \quad {2u^4\over(1-u^2)^2} = {2v^4 \over (1-v^2)^2}\,,$$
since $t \mapsto {t\over 1-t}$ is monotone, this implies that minimum of $f(u,v)$ is attained at $u=v$. But $f(u,u)=0$.
Thus, we find
	$$ \min_{u,v} f(u,v) = \min_u f(u,u) = 0\,.$$
This concludes the proof of~\eqref{eq:la_8} and, hence, of~\eqref{eq:la_5}.
\end{proof}

\begin{proof}[Proof of Lemma~\ref{lem:lemma_B}]
If $\tilde W_i$'s are degraded w.r.t $W_i$, then we have a Markov chain $X_0- (Y,Y_1^m)- (Y,\tilde Y_1^m)$. This proves the first part. 

To prove the second part, we may assume by induction that $W_i=\tilde W_i$ for all $i\ge 2$ (i.e. only
  one channel is replaced). Now suppose we have 
  \begin{equation}\label{eq:R1}U\ci (X_1^m,Y_1^m,\tilde Y_1^m, Y)|X_0\end{equation} We want to show
\[
I(U;Y,\tilde Y_1, Y_2^m)\le I(U;Y, Y_1,Y_2^m) 
\]
or equivalently
\begin{equation}
I(U;\tilde Y_1|Y, Y_2^m)\le I(U; Y_1|Y,Y_2^m). 
\label{eq:ln_proof}
\end{equation}
The desired inequality follows from the definition of the less noisy order (in the conditional universe where $(Y,Y_2^m)$ is observed) if we can show $U-X_0-X_1-(Y_1,\tilde Y_1)$ form a Markov chain conditionally on $(Y,Y_2^m)$. Note that this is equivalent (by d-separation) to showing that the conditional independence assertions of the Lemma are representable by the following directed acyclic graphical model (DAG)

\begin{center}
\tikzset{node distance=1.5cm, auto}
\begin{tikzpicture}[auto]
  \node (U){$U$};
  \node (X0)[right of=U,node distance=1.5cm] {$X_0$};
  \node (X1) [right of=X0,node distance=1.5cm]{$X_1$};
  \node (Y1) [right of=X1,node distance=1.5cm]{$(Y_1,\tilde Y_1)$};
  \node (A) [below of=X0,node distance=1cm]{};
  \node (Y) [right of=A,node distance=.75cm]{$Y$};
  \node (X2) [below of=Y,node distance=1cm]{$X_2^m$};
  \node (Y2) [right of=X2,node distance=1.5cm]{$Y_2^m$};
  \path [line] (X0) --node[above]{} (Y);  
  \path [line] (U) --node[above]{} (X0);  
  \path [line] (X0) --node[above]{} (X1);  
  \path [line] (X1) --node[above]{} (Y1);  
  \path [line] (X1) --node[above]{} (Y);  
  \path [line] (X2) --node[above]{} (Y);  
  \path [line] (X2) --node[above]{} (Y2);  
\end{tikzpicture}
\end{center}
We first recall (from d-separation) that \begin{equation}\label{eq:R2}A\ci (B,C)|D \implies A\ci B|(C,D)\end{equation} for arbitrary random variables $(A,B,C,D)$ (cf. (S3) in ~\cite[Chapter 3]{lauritzen1996graphical}). From (\ref{eq:R1}) and $(\ref{eq:R2})$, we see that the main assertion of interest in the above DAG is \begin{equation}\label{eq:R3}U\ci (Y_1,\tilde Y_1)|(X_1,Y,Y^m_2)\end{equation} To prove this assertion, we note that
\begin{align*}
p_{UY_1\tilde Y_1|X_1YY^m_2}&=\sum_{x_0,x_2^m} p_{UY_1\tilde Y_1|X_0X_2^mX_1YY_2}p_{X_0X_2^m|X_1YY_2^m}\\
&\stackrel{\textup{by } (\ref{eq:R1})}{=} \sum_{x_0,x_2^m} p_{U|X_0}p_{Y_1\tilde Y_1|X_0X_2^mX_1YY_2}p_{X_0X_2^m|X_1YY_2^m}\\
&\propto \sum_{x_0,x_2^m} p_{U|X_0}p_{YY_1\tilde Y_1|X_0X_2^mX_1Y_2}p_{X_0X_2^m|X_1YY_2^m}\\
&= \sum_{x_0,x_2^m} p_{U|X_0}p_{Y_1\tilde Y_1|X_1}p_{Y|X_0X_1X_2}p_{X_0X_2^m|X_1YY_2^m}\\
&=p_{Y_1\tilde Y_1|X_1}\sum_{x_0,x_2^m} p_{U|X_0} p_{Y|X_0X_1X_2}p_{X_0X_2^m|X_1YY_2^m} \\
&=h(y_1,\tilde y_1,x_1)g(u,x_1,y,y_2^m).
\end{align*}
This proves (\ref{eq:R3}) and the desired inequality in (\ref{eq:ln_proof}) follows. 

\end{proof}

\section{Erasure polynomials for LDMC(3)}
\label{apx:polys}
Here we include the $d$-th erasure polynomials $E_d^\BEC(q)$ for $d\le 10$ in Python form for LDMC(3). These polynomials are generated using the procedure described in Section \ref{sec:Ecurve_ldmc3} and are used to produce the bounds in Figs. \ref{fig:exit_deg48}-\ref{fig:ldmc_tight_bounds} and Table \ref{table:E_A_BER}, as well as for code optimization in Section \ref{sec:comp_ldgm}. 

\begin{longtable}{c}
$E^{\BEC}_d$\\
 \hline
 $d=0$\\ 0.5\\
 \hline
  $d=1$\\ 0.25\\
 \hline
 $d=2$\\ 0.125*q**4 - 0.25*q**3 + 0.25*q**2 - 0.25*q + 0.25\\
 \hline
 $d=3$\\
 0.1875*q**6 - 0.46875*q**5 + 0.46875*q**4 - 0.1875*q**3 + 4.440892e-16*q**2 - 0.09375*q + 0.15625\\
 \hline 
$d=4$\\  
\makecell{0.46875*q**8 - 1.9375*q**7 + 3.71875*q**6 - 4.3125*q**5 +\\ 3.28125*q**4 - 1.6875*q**3 + 0.65625*q**2 - 0.3125*q + 0.15625}\\
\hline
\makecell{$d=5$\\0.9375*q**10 - 4.58007812500001*q**9 + 10.087890625*q**8 - 13.0859375*q**7\\ + 10.976562500*q**6 - 6.15234375*q**5 + 2.24609375*q**4 - 0.4296875*q**3\\ + 0.0390625*q**2 - 0.126953125*q + 0.103515625}\\
\hline
$d=6$\\ \makecell{2.2900390625*q**12 - 14.455078125*q**11 + 42.9462890624997*q**10\\ - 79.5214843749996*q**9 + 102.12890625*q**8 - 95.5664062499994*q**7\\ + 66.5722656249995*q**6 - 34.7460937499998*q**5 + 13.5791015624999*q**4\\ - 3.99414062499997*q**3 + 0.981445312499996*q**2 - 0.310546875*q + 0.103515625}\\
\hline
$d=7$\\ \makecell{5.05517578125*q**14 - 36.368896484375*q**13 + 121.872802734375*q**12\\ - 251.26171875*q**11 + 354.7236328125*q**10 - 361.612548828124*q**9\\ + 274.061279296874*q**8 - 156.953124999999*q**7 + 68.3422851562493*q**6\\ - 22.3791503906245*q**5 + 5.18676757812476*q**4 - 0.68359374999993*q**3\\ + 0.0820312499999894*q**2 - 0.131591796874999*q + 0.070556640625}\\
\hline
$d=8$\\\makecell{12.2824707031249*q**16 - 104.389648437499*q**15 + 421.901855468746*q**14\\ - 1078.21191406248*q**13 + 1953.12304687496*q**12 - 2661.41503906243*q**11\\ + 2821.08544921866*q**10 - 2368.68652343741*q**9 + 1587.56103515619*q**8\\ - 849.672851562463*q**7 + 361.467285156233*q**6 - 121.303710937495*q**5\\ + 31.8554687499987*q**4 - 6.56933593749975*q**3 + 1.18603515624997*q**2\\ - 0.282226562499999*q + 0.070556640625}\\
\hline
$d=9$\\\makecell{28.517944335937*q**18 - 270.209632873528*q**17 + 1213.44797515864*q**16\\ - 3426.34039306621*q**15 + 6803.52593994091*q**14 - 10066.1437683096*q**13\\ + 11474.6510925279*q**12 - 10284.0617065415*q**11 + 7337.84271240106*q**10 \\- 4200.8187103263*q**9 + 1938.88133239697*q**8 - 722.934997558378*q**7\\ + 216.904724121012*q**6 - 51.2509460448973*q**5 + 8.86129760741628*q**4\\ - 0.913879394530327*q**3 + 0.115905761718657*q**2 - 0.122840881347652*q\\ + 0.0489273071289062}\\
\hline
$d=10$\\\makecell{69.4315452575683*q**20 - 742.947502136231*q**19 + 3808.84984970093*q**18\\ - 12453.0257034302*q**17 + 29158.0880355837*q**16 - 52039.3605651862*q**15\\ + 73535.9429168715*q**14 - 84300.01968384*q**13 + 79613.6392593413*q**12\\ - 62487.5754547145*q**11 + 40908.980049135*q**10 - 22326.4821624763*q**9\\ + 10117.3594665529*q**8 - 3780.88119506833*q**7 + 1154.60105895993*q**6\\ - 285.082305908195*q**5 + 56.3512229919426*q**4 - 8.95305633544941*q**3\\ + 1.28042221069343*q**2 - 0.244636535644538*q + 0.0489273071289063}\\
\hline
\end{longtable}

\section*{Acknowledgment}
The authors would like to thank Prof. Frank Kschischang for many interesting discussions, and especially for suggesting
the name ``low-density majority codes''.

\ifCLASSOPTIONcaptionsoff
  \newpage
\fi



\bibliographystyle{IEEEtran}
\bibliography{IEEEabrv,journal_draft_v5}

\end{document}